\newlength{\vslength}
\newtheorem{theorem}{Theorem}
\newtheorem{lemma}{Lemma}[section]
\newtheorem{proposition}{Proposition}[section]
\newtheorem{corollary}{Corollary}[section]
\newtheorem{condition}{Condition}
\newcommand{\beginsurvey}{
        \setcounter{condition}{0}
        \renewcommand{\thecondition}{S\arabic{condition}}%
     }
\newcommand{\beginmodel}{
        \setcounter{condition}{0}
        \renewcommand{\thecondition}{M\arabic{condition}}%
     }
\newcommand{\beginappendixtheorems}{
        \renewcommand{\thetheorem}{\thesection.\arabic{theorem}}
\setcounter{theorem}{0} 
     }
\theoremstyle{remark}
\newtheorem{example}{\bf Example}[section]
\renewenvironment{proof}{\noindent{\it{Proof.} }}{\qed}
\def\cA{\mathcal A}
\def\cN{\mathcal N}
\def\cP{\mathcal P}
\newcommand{\bbE}{{\mathbb E}}
\newcommand{\bbR}{{\mathbb R}}
\newcommand{\bbN}{{\mathbb N}}
\newcommand{\ie}{{\it i.e. }}
\newcommand{\iid}{{i.i.d.}}
\newcommand{\bc}{\begin{center}}
\newcommand{\ec}{\end{center}}
\newcommand{\be}{\begin{equation}}
\newcommand{\ee}{\end{equation}}
\newcommand{\ba}{\begin{array}}
\newcommand{\ea}{\end{array}}
\newcommand{\bean}{\setlength\arraycolsep{1pt}\begin{eqnarray*}}
\newcommand{\eean}{\end{eqnarray*}}
\newcommand{\bea}{\setlength\arraycolsep{1pt}\begin{eqnarray}}
\newcommand{\eea}{\end{eqnarray}}
\newcommand{\ben}{\begin{enumerate}}
\newcommand{\een}{\end{enumerate}}
\newcommand{\bed}{\begin{itemize}}
\newcommand{\eed}{\end{itemize}}
\def\log{\text{log }}
\def\boxit#1{\vbox{\hrule\hbox{\vrule\kern6pt
 \vbox{\kern6pt#1\kern6pt}\kern6pt\vrule}\hrule}}
\def\bse{\begin{eqnarray*}}
\def\ese{\end{eqnarray*}}
\def\be{\begin{eqnarray}}
\def\ee{\end{eqnarray}}
\def\bq{\begin{equation}}
\def\eq{\end{equation}}
\def\bse{\begin{eqnarray*}}
\def\ese{\end{eqnarray*}}
\def\boxit#1{\vbox{\hrule\hbox{\vrule\kern6pt
          \vbox{\kern6pt#1\kern6pt}\kern6pt\vrule}\hrule}}
\def\log{\text{log }}
\def\bse{\begin{eqnarray*}}
\def\ese{\end{eqnarray*}}
\def\be{\begin{eqnarray}}
\def\ee{\end{eqnarray}}
\def\bq{\begin{equation}}
\def\eq{\end{equation}}
\def\bse{\begin{eqnarray*}}
\def\ese{\end{eqnarray*}}
\newcommand{\ug}       {\boldsymbol{g}}
\newcommand{\uw}       {\boldsymbol{w}}
\newcommand{\uX}       {\boldsymbol{X}}
\newcommand{\uY}       {\boldsymbol{Y}}
\newcommand{\ualpha}            {\mbox{\boldmath$\alpha$}}
\newcommand{\ubeta}             {\mbox{\boldmath$\beta$}}
\newcommand{\udelta}            {\mbox{\boldmath$\delta$}}
\newcommand{\uiota}             {\mbox{\boldmath$\uiota$}}
\newcommand{\tL}       {\Tilde{L}}
\newcommand{\ttheta}       {\Tilde{\theta}}
\newcommand{\cp}       {\xrightarrow{c.p.}}
\newcommand{\cd}       {\xrightarrow{c.d.}}
\newcommand{\as}       {\xrightarrow{a.s.}}
\newcommand{\prob}       {\xrightarrow{P}}
\newcommand{\dist}       {\xrightarrow{d}}
\newcommand{\asP}       {{a.s.[P_*]}}
\newcommand{\htheta}       {{\hat{\theta}_n}}
\newcommand{\ctheta}       {{\check{\theta}_n}}
\numberwithin{equation}{section}
\numberwithin{lemma}{section}
\newenvironment{theorem-non}[1][Theorem]{{#1.} \it}{\par}
\begin{document}
\thispagestyle{empty}
\title{
    \vspace*{-14mm}
    Scalable Efficient Inference in Complex Surveys through \\
    \vspace*{-3mm}Targeted Resampling of Weights
    
}
\author{Snigdha Das$^{1}$, Dipankar Bandyopadhyay$^{2}$, Debdeep Pati$^{3}$ \\[2mm]
    {\vspace{-1mm}\small\it {}$^{1}$Department of Statistics, Texas A$\&$M University} \\
    {\vspace{-1mm}\small\it {}$^{2}$Department of Biostatistics, Virginia Commonwealth University} \\
    {\vspace{-1mm}\small\it {}$^{3}$Department of Statistics, University of Wisconsin--Madison}
}

\date{}
\maketitle

\vspace*{-6mm}
\begin{abstract}
Survey data often arises from complex sampling designs, such as stratified or multistage sampling, with unequal inclusion probabilities. When sampling is informative, traditional inference methods yield biased estimators and poor coverage. {Classical pseudo-likelihood based methods provide accurate asymptotic inference but lack finite-sample uncertainty quantification and the ability to integrate prior information.} Existing Bayesian approaches, like the Bayesian pseudo-posterior estimator and weighted Bayesian bootstrap, have limitations; the former struggles with uncertainty quantification, while the latter is computationally intensive and sensitive to bootstrap replicates. To address these challenges, we propose the Survey-adjusted Weighted Likelihood Bootstrap (S-WLB), which resamples weights from a carefully chosen distribution centered around the underlying sampling weights. S-WLB is computationally efficient, theoretically consistent, and delivers finite-sample uncertainty intervals which are proven to be asymptotically valid. We demonstrate its performance through simulations and applications to nationally representative survey datasets like NHANES and NSDUH.

\bigskip
    
\noindent {\bf Keywords}: informative sampling; survey weights; uncertainty quantification; weighted likelihood bootstrap. 
\end{abstract}

\section{Introduction}
\label{s:intro}

\vspace{-1mm}
Survey data is ubiquitous in public health, economics and social sciences, with prominent examples of nationally-representative large survey databases {being} the National Health and Nutrition Examination Survey \citep[NHANES;][]{naavaal2022association}, the National Survey on Drug Use and Health \citep[NSDUH;][]{back2010gender}, {among} others. 
{The advent of large-scale complex surveys emphasizes the need for efficient inference, both statistically and computationally.} Unlike simple random sampling which gives every unit in the population an equal chance of being selected, survey designs employ complex schemes such as stratification, clustering and multistage sampling that result in unequal sample selection probabilities. {It is customary to envision the generative model of survey data as} the result of two random processes -- {\em the superpopulation model}, which generates values in the finite population, and {\em the sample selection mechanism}, which selects a sample from these finite population values. When the
inclusion probabilities are correlated with the variable of interest, the sampling mechanism becomes informative. Ignoring this non-representativeness and applying classical inference, which assumes the data are independent and identically distributed (\iid), can lead to biased estimation and invalid inference.

All methods of survey analyses involve the use of sampling weights, which reflect unequal inclusion probabilities and can be thought of as the number of units in the population that each sampled unit represents. These weights compensate for the non-representativeness of the sample in model-based analyses \citep{pfeffermann1996use, gelman2007}. Early approaches to survey data analysis focused on weighted estimators, in the spirit of \cite{horvitz1952generalization} and \cite{hajek1971comment}. Nonparametric modeling of survey data has since evolved to include methods such as penalized splines \citep{zheng2003penalized, zheng2005inference, chen2010bayesian}, empirical likelihood \citep{dong2014nonparametric, RaoWu2010}, Dirichlet process mixtures \citep{KUNIHAMA201641}, and multilevel regression with post-stratification \citep{Si2015BA, si2020bayesian, SiZhou2020}. The most popular framework for parametric model estimation is the frequentist pseudo maximum likelihood approach \citep{pfeffermann1998parametric, chambers2003analysis}, which maximizes a likelihood function raised to the power of each sample's weight. The powered likelihood was later adopted in the Bayesian domain and combined with a prior on the parameter that gave rise to the Bayesian pseudo posterior distribution \citep{savitsky2016bayesian, savitsky2018scalable, leon2019fully, williams2018bayesian,  williams2020bayesian,williams2021uncertainty}. A parallel line of research on survey analyses involves a Bayesian weighted estimator obtained by resampling the data using weighted finite population Bayesian bootstrap that corrects for the sampling bias \citep{gunawan2020bayesian}.

Nonparametric approaches are typically designed for inference about design-based summary statistics, such as the population total or mean, rather than for parameters that directly characterize the finite population. In contrast, model-based inference aims to estimate population parameters where the pseudo maximum likelihood estimator (PMLE) is known to achieve optimal asymptotic coverage. However, it relies on plug-in variance estimators, which can become cumbersome to compute in complex models. {Furthermore, it is often necessary to provide finite-sample uncertainty quantification, along with the ability to seamlessly integrate prior information on the parameter of interest or use priors simply to penalize high-dimensional nuisance parameters.}
While the Bayesian pseudo posterior estimator offers consistency, {it does not provide correct uncertainty quantification}. An intuitive explanation for this issue is that the sampling weight of a particular unit $i$ is meant to represent other population units that are {\em similar} to unit $i$. However, when the likelihood of a unit is raised to the power of its weight, the interpretation becomes problematic as it suggests the presence of exact copies of that unit in the population, with the number of copies equal to its weight. {Thus, the powered likelihood misrepresents the true probability model of the population. Although the sandwich variance of the PMLE ensures correct coverage, the Bayesian pseudo posterior fails to accurately reflect the uncertainty or variability in the population, since parametric Bayes relies on a correctly specified likelihood.} To address this, \cite{leon2019fully} proposed likelihood adjustments that involve two separate Markov Chain Monte Carlo (MCMC) samplers and a computationally intensive numerical integration at each step. \cite{williams2021uncertainty} suggested an ad hoc post-processing step that relies on {both} plug-in estimates and MCMC samples. {The method produces asymptotic intervals and involves re-using the data for post-processing, which does not align with a proper empirical Bayes framework.}  \cite{gunawan2020bayesian} proposed an estimator based on weighted Bayesian bootstrap (WBB) that provides valid empirical coverage but lacks theoretical convergence guarantees. Additionally, in large scale complex surveys,  the appropriate choice of the number of bootstrap samples ($B$) remains unclear. The method suffers from significant computational challenges as the algorithm requires reconstructing the finite population in each bootstrap replicate, making it difficult to implement efficiently, even in parallel, when $B$ is large.

In this paper, we propose a {\em Survey-adjusted Weighted Likelihood Bootstrap} (S-WLB) algorithm, inspired by the {\em weighted likelihood bootstrap} \citep[WLB;][]{newton1994approximate} for \iid \ data. Our method serves as a computationally efficient alternative that accurately incorporates the sampling design without requiring complex analytic approximations or extensive post-processing. Rather than resampling the data, the key idea is to resample random weights that are centered around the given sampling weights, and maximize a weighted likelihood function raised to the power of the generated weights. The distribution of the random weights is carefully chosen to simultaneously mitigate sampling bias and ensure that the estimator's asymptotic variance aligns with that of the PMLE. {Consequently, the uncertainty intervals obtained by our method attain the nominal coverage level asymptotically.} 
{While not a Bayesian method, S-WLB shares its key advantages, such as the ability to incorporate prior information \citep{pompe2021}, which makes the resulting finite-sample intervals more realistic and informative without compromising their asymptotic validity. In many complex models, computing a point estimator is often straightforward, even when probabilistic inference is infeasible. S-WLB requires only that the stationary point of the likelihood function forms a consistent sequence of estimators.} The method is simple to implement, relying solely on an optimization procedure for repeated point estimation, and its inherent parallelizability provides a significant advantage over MCMC sampling, particularly for large datasets. This approach can potentially be useful for large scale survey databases under a complex modeling framework, where, none of the current approaches combine the same benefits as ours in terms of simultaneously achieving statistical and computational efficiency.

The paper is organized as follows. Section \ref{s:background} describes the framework of survey data analysis and provides an overview of existing approaches in model-based inference. After a brief review of the WLB algorithm in Section \ref{s:WLB}, Section \ref{s:S-WLB} outlines our proposed method and establishes its theoretical guarantees. While Section \ref{s:sims} assesses the finite-sample performance of our method in simulation studies, Section \ref{s:data_analyses} applies it to the NHANES and the NSDUH complex surveys.

\section{Background and Methods of Inference in Survey Data}
\label{s:background}

\subsection{Framework}
\label{s:framework}

Consider a random variable $X$ whose population can be described by the density function $f_\theta$, $\theta$ being an unknown vector of finite dimensional parameters ($\theta \in \Theta \subset \bbR^K, K\geq 1$). Our objective is to estimate $\theta$. We
are supplied with a non-representative sample $\uX_n = (X_1, X_2, \ldots, X_n)$, along with sampling weights $w_i$ for each unit $i = 1, 2, \ldots, n$. However, the specifics of the survey design and the exact procedure used to compute these weights are unavailable. The sampling weights are assumed to be constructed such that each weight $w_i$ is inversely proportional to the probability $\pi_i$ of selecting the observation $X_i$ under the survey design. This means that observations with lower selection probabilities relative to a simple random sampling design are assigned larger weights to account for their under-representation, while those with higher selection probabilities receive smaller weights.
Each weight  $w_i$  can be viewed as the number of units in the population represented by the sampled unit $i$. This re-weighting is intended to adjust for the non-representative nature of the sample, enabling the estimation of population characteristics by correcting for the sampling bias. Let $\Tilde{w}_{n,i} = n w_i \big/ \sum_{j=1}^n w_j $ denote the scaled weights, such that $\sum_{i=1}^n \Tilde{w}_{n,i} = n$.

\subsection{Overview of Existing Approaches in Model-based Inference}
\label{s:existing_methods}
Sampling weights have been routinely used in the literature to account for the sampling bias when performing estimation and inference about population parameters in complex survey data. One of the earliest approaches dates back to the weighted estimators of the population mean proposed by \cite{horvitz1952generalization} and \cite{hajek1971comment}, given by 
$\hat{\bar{Y}}_{\text{HT}} = N^{-1} \sum_{i \in S} {\pi_i}^{-1}y_i$ and $\hat{\bar{Y}}_{\text{Hajek}} =  \sum_{i \in S} {\pi_i}^{-1}y_i \big/ \sum_{i \in S}{\pi_i}^{-1}$,
where $y_i$ denotes the variable of interest, $\pi_i$ represents the inclusion probability of unit $i$, $N$ is the population size and $S$ is the set of sampled units. The most widely used framework for incorporating sampling weights in parametric model estimation is the pseudo maximum likelihood approach \citep{pfeffermann1998parametric, chambers2003analysis}. Under the parametric framework discussed in Section \ref{s:framework}, the usual likelihood is replaced by a powered likelihood $L_{w.n}$, where the likelihood of each observation in the sample $f_\theta(X_i)$ is raised to the power of its scaled sampling weight $\tilde{w}_{n,i}$,
\begin{equation}
\label{eq:powered-lik}
    L_{w,n} (\theta) ~=~ \prod_{i=1}^n \,[f_\theta(X_i)]^{\,\Tilde{w}_{n,i}} \,.
\end{equation}
The exponent $\tilde{w}_{n,i}$ adjusts for sampling informativeness by assigning relative importance to the likelihood contribution of unit $i$ to approximate the population likelihood. Weights are scaled by the sample size $n$, which asymptotically represents the information contained in the observed sample. The pseudo maximum likelihood estimator (PMLE) $\htheta$ is obtained by solving the powered likelihood equation, $\frac{\partial}{\partial \theta}\, \log L_{w,n} (\theta) = 0.$ 
Under certain regularity conditions \citep{white1982maximum},
\begin{equation}
\label{eq:dist_PMLE}
    \sqrt{n}\big(\htheta - \theta_0 \big)  ~\dist~ \cN_K \big(0, J(\theta_0)^{-1} I_w(\theta_0) J(\theta_0)^{-1} \big)\,,
\end{equation}
where $\theta_0$ is the true parameter, and $J(\theta_0)$ and $I_w(\theta_0)$ are consistently estimated by
\begin{align*}
    J_{w,n}(\htheta) & ~=~ - \frac{1}{n} \sum_{i=1}^n \tilde{w}_{n,i} \ \frac{\partial^2\, \log f_\theta (X_i)}{\partial\theta \partial \theta^\top}  \ \bigg|_{\theta = \htheta},\\
    I_{w,n} (\htheta) & ~=~ \, \frac{1}{n} \sum_{i=1}^n \tilde{w}_{n,i}^2   \left( \frac{\partial}{\partial \theta} \log f_\theta (X_i) \right) \left(\frac{\partial}{\partial \theta} \log f_\theta (X_i) \right)^\top \bigg|_{\theta = \htheta}.
\end{align*}
For conducting inference about $\theta$, the standard errors are obtained from the observed
sandwich covariance estimator $n^{-1}J_{w,n}(\htheta)^{-1}I_{w,n}(\htheta)J_{w,n}(\htheta)^{-1}$. Thus, under a well-specified data generating model \ie if the true population density is $f_{\theta_0}$ for some true value $\theta_0$ of $\theta$, the sandwich covariance estimator of the PMLE enables it to achieves optimal coverage of its confidence intervals, even when based on a pseudo powered likelihood.

A diverse array of Bayesian methods has also been developed over the years for performing inference in complex survey data. For instance, \cite{zheng2003penalized, zheng2005inference, chen2010bayesian} employ penalized splines, \cite{dong2014nonparametric} propose an empirical likelihood, \cite{KUNIHAMA201641} construct a nonparametric mixture likelihood using Dirichlet processes, and \cite{RaoWu2010} introduce a sampling-weighted (pseudo) empirical likelihood, to incorporate the sampling weights for design-based inference of summary statistics, rather than estimation of population model parameters. A parallel line of research focuses on multilevel regression and post-stratification to integrate weighting and prediction in survey data by calibrating to the population distribution of auxiliary variables \citep{Si2015BA, si2020bayesian, SiZhou2020}. 

\cite{savitsky2016bayesian} proposed another prominent approach by constructing a Bayesian pseudo posterior, which is proportional to the product of the pseudo likelihood $L_{w,n}(\theta)$ defined in \eqref{eq:powered-lik} and a prior distribution $p(\theta)$ on $\theta$, 
\begin{equation*}
        p(\theta \mid \uX_n, \uw_n) ~\propto~  p(\theta)\, \prod_{i=1}^n \,[f_\theta(X_i)]^{\,\Tilde{w}_{n,i}} \,,
\end{equation*}
which provides for flexible modeling of a very wide class of population generating models. The proposed method enables consistent estimation under informative sampling from a well-specified population density. \cite{savitsky2018scalable} extended this with a divide-and-conquer technique for scalability, while \cite{williams2018bayesian,williams2020bayesian} incorporated higher-order dependencies among population units. Although this approach is consistent, the pseudo posterior converges to a normal distribution with a covariance matrix differing from that of the PMLE. Specifically, $\sqrt{n} (\theta - \hat{\theta}_n)$ converges to $\cN_K(0, J(\theta_0)^{-1})$ as $n \to \infty$, along almost every sequences of sampled data path $X_1, X_2, \ldots$ and sampling weights $w_1, w_2, \ldots$, leading to credible intervals that lack optimal frequentist coverage. 

Several adjustments to the pseudo posterior framework have been proposed in the literature to improve coverage accuracy. \cite{leon2019fully} addressed the issue by specifying a conditional population model $p(\pi_i \mid X_i)$ for the inclusion probabilities $\pi_i$ of the population units, adjusting the likelihood multiplicatively to achieve asymptotically unbiased estimation. This approach is shown to yield credible intervals with correct coverages under a simple, single-stage, proportional-to-size sampling design. However, the likelihood adjustment is computationally intensive, requiring a separate MCMC sampler and numerical integration at each MCMC step, which limits the applicability of this method. \cite{williams2021uncertainty} proposed an ad hoc post-processing adjustment to MCMC samples from the pseudo posterior distribution to achieve approximate nominal coverage. The method relies on re-using the data for plug-in estimates post MCMC sampling, and does not align with a proper empirical Bayes framework. \cite{gunawan2020bayesian} tackled this problem by generating pseudo-representative samples through the weighted finite population Bayesian bootstrap \citep[WBB;][]{cohen1997bayesian, dong2014nonparametric}, which uses survey weights to correct for sampling bias. Inference on $\theta$ is then conducted via MCMC by treating each regenerated sample as the observed data. While the resulting estimator achieves good empirical coverage, it lacks theoretical convergence guarantees. Additionally, the required number of bootstrap samples for valid inference is unclear. The method is computationally demanding and practically infeasible for large survey datasets, as it requires reconstructing a potentially large finite population for each bootstrap replicate, and may lead to poor mixing of the Markov chain.

\subsection{Motivation for an Efficient Method with Optimal Coverage}
\label{s:construction}

Building on the gaps identified in Section \ref{s:existing_methods}, we aim to construct a straightforward, computationally efficient algorithm for model-based inference that adjusts for unequal selection probabilities under informative sampling. While the frequentist PMLE achieves the desired coverage under a given parametric model, it relies on plug-in variance estimators and lacks finite-sample uncertainty quantification. Bayesian approaches, often relying on computationally intensive MCMC sampling, do not directly (without approximations or post-processing) achieve the asymptotic sandwich covariance of the PMLE, necessary for optimal coverage of credible intervals. In this context, the weighted likelihood bootstrap \citep[WLB;][]{newton1994approximate}, {an effective algorithm for \iid data, offers a useful framework.} Known for achieving the asymptotic variance of the maximum likelihood estimator (MLE), the WLB is simple to implement, requiring only an algorithm for repeated point estimation.It generates independent samples, is easily parallelizable and free of tuning parameters, providing a computational edge over MCMC in large datasets. 
Instead of adapting Bayesian pseudo posterior methods for efficiency, scalability, and optimal coverage, we develop a WLB-inspired algorithm for survey data under informative sampling. Our aim is to focus on simplicity, scalability, and offer theoretical guarantees for consistent estimation and accurate coverage. The next section introduces the WLB algorithm before presenting our proposed method.

\section{Revisiting the Weighted Likelihood Bootstrap}
\label{s:WLB}
We begin with a brief outline of the WLB algorithm, which provides a method for approximately sampling from the posterior distribution of a well-specified parametric model under the assumption of \iid \ data.
Let $X_1, X_2, \ldots, X_n$ be \iid \ random variables, where each $X_i$ has a probability density $f_\theta$ indexed by a finite dimensional parameter vector $\theta \in \Theta \subset \bbR^K, K \geq 1$. The weighted likelihood function depends on the data $\uX_n = (X_1, X_2, \ldots, X_n)$ and an independent vector of weights $\ug_n = (g_{n, 1}, g_{n, 2}, \ldots, g_{n, n})$, and is given by 
\begin{equation}
\label{eq:wtd_lik}
    \tL_n(\theta) ~=~ \prod_{i=1}^n \left[f_\theta(X_i) \right]^{\,g_{n,i}}.
\end{equation}
The weighted likelihood $\tL_n$ incorporates randomness through weights $\ug_n$ drawn from a uniform Dirichlet distribution scaled by the sample size $n$. These weights adjust the influence of each observation, creating a perturbed likelihood for inference. In the WLB algorithm, a sample from the parameter space is obtained by maximizing
$\tL_n$ with these random weights. Let $\ttheta_n$ denote a maximizer of $\tL_n$ \ie $\ttheta_n$ satisfies $\tL_n(\ttheta_n) \geq \tL_n(\theta)$ for all $\theta \in \Theta$. The conditional distribution of $\ttheta_n$ given the data $\uX_n$ serves as a good approximation to the posterior distribution of $\theta$. Although exact calculation of this distribution is challenging, it is easily simulated by repeatedly sampling weights and maximizing $\tL_n$, as outlined in Algorithm \ref{alg:WLB}. 

\RestyleAlgo{boxruled}
\begin{algorithm}[ht]
  \caption{The Weighted Likelihood Bootstrap
  \label{alg:WLB}} 
  \vspace{0.5ex}
  \textbf{Input:} Observed \iid \ sample $\uX_n = (X_1, X_2, \ldots, X_n)$.\\[1ex]
  For $j = 1, 2, \ldots, B$:\\
    ~~~~ 1. Draw random weights $\ug_n^{(j)}$ with $n^{-1}\big(g^{(j)}_{n, 1}, g^{(j)}_{n, 2} \ldots, g^{(j)}_{n, n} \big) ~\sim~ \text{Dir}(1,1,\ldots,1)$.\\[1ex]
    ~~~~ 2. Compute $\theta^{(j)} = \arg \max_{\theta \in \Theta} \, \sum_{i=1}^{n} g^{(j)}_{n,i} \, \log f_\theta(X_i)$\,.  \\[1ex]
    \textbf{Output:} $\{\theta^{(1)}, \theta^{(2)}, \ldots, \theta^{(B)}\}$.
\end{algorithm}

\cite{newton1991thesis} showed that under a well specified parametric model, the WLB is asymptotically first-order equivalent to a Bayesian posterior. Specifically, the probability distributions of $\sqrt{n} (\ttheta_n - \hat{\theta}_n)$ and $\sqrt{n} (\theta - \hat{\theta}_n)$ converge to the same limit, $\cN_K(0, I(\theta_0)^{-1})$ as $n \to \infty$, along almost every sample path $X_1, X_2, \ldots$. Here, $\hat{\theta}_n$, $\ttheta_n$ and $\theta$ denote the MLE, a WLB-sample, and the parameter under a Bayesian posterior, respectively, and $I(\theta_0)$ denotes the Fisher information matrix at the true parameter $\theta_0$. This Monte Carlo method is particularly straightforward to apply in any model where maximum likelihood estimation is feasible. No knowledge of the Fisher information or any plug-in estimator is required. {The WLB avoids issues of slow mixing associated with MCMC as it generates independent samples and does not require tuning parameters. Moreover,} it is easily parallelizable over $j = 1,2, \ldots, B$.
A key observation is that scaled random weights drawn from a uniform Dirichlet distribution allocate weights among the \iid\ data in a way that no single observation is overly emphasized. This randomness in weights effectively captures the variability of the MLE under a well-specified model.

\section{Survey adjusted Weighted Likelihood Bootsrap}
\label{s:S-WLB}

\subsection{The Method}
\label{s:method}
Motivated by the simplicity and wide applicability of the WLB in the context of \iid \ data, we focus on developing an algorithm inspired by the WLB that adjust for non-representativeness in survey samples by incorporating sampling weights. The algorithm embeds the sampling weights $\uw_n = (w_1, w_2, \ldots, w_n)$ into a carefully chosen distribution of the randomly generated weights $\ug_n$ that simultaneously corrects for sampling bias and captures the variability of the PMLE under a well-specified population density $f_{\theta_0}$. This modification retains the computational efficiency and practical simplicity of the WLB, making it well-suited for large, complex survey datasets.

As a first attempt, a natural modification is to repeatedly generate random weights $\ug_n$ from a Dirichlet distribution with parameters ($\tilde{w}_{n,1},\tilde{w}_{n,2}, \ldots ,\tilde{w}_{n,n}$), and obtain bootstrap samples by maximizing the weighted likelihood \eqref{eq:wtd_lik}. This ensures that each observation $X_i$ has a weight centered around its sampling weight $w_i$. While this adjustment corrects for the sampling bias and yields a consistent estimator, it fails to achieve the asymptotic variance of the PMLE, resulting in suboptimal coverage. 
To achieve the correct asymptotic variance, we found that the un-normalized weights for each observation must have a mean equal to its scaled weight and a variance equal to the squared scaled weight. Specifically, if we let $\uY_n = (Y_{n,1}, Y_{n,2}, \ldots, Y_{n,n})$ and define $g_{n,i} = Y_{n,i}\big/\sum_{j =1}^n Y_{n,j}$, then we require $\bbE(Y_{n,i}) = \Tilde{w}_{n,i}$ and $\text{Var}(Y_{n,i}) = \Tilde{w}_{n,i}^2$ for our resulting estimator to match the asymptotic variance of the PMLE. While weights $\ug_n$ generated from a Dirichlet\,($\tilde{w}_{n,1}, \tilde{w}_{n,2}, \ldots ,\tilde{w}_{n,n}$) distribution meet the mean requirement, they do not satisfy the variance condition. 
A default choice that addresses both conditions is to generate the un-normalized weights $Y_{n,i}$ independently from a Gamma distribution with shape and scale parameters of 1 and $\Tilde{w}_{n,i}$, respectively. The normalized weights $\ug_n$ then follow a scaled Dirichlet distribution \citep{dickey1968three} with parameters $\ualpha = (1, 1, \ldots, 1)$ and $\ubeta = (1/ \tilde{w}_{n,1}, 1/ \tilde{w}_{n,2},\ldots, 1/\tilde{w}_{n,n})$. We use these normalized weights in a bootstrap setup to construct our final algorithm, which we refer to as the {\em Survey adjusted Weighted Likelihood Bootstrap} (S-WLB) and outline it in Algorithm \ref{alg:S-WLB}. This approach effectively adjusts for the sampling bias while preserving the asymptotic variance of the PMLE under a well-specified population density.

\subsection{Asymptotic Accuracy}
\label{s:asymptotics}

Let $P_{\theta_0}$ denote the true probability distribution for our random variable of interest $X$.
$P_{\theta_0}$ is one element of a finite-dimensional model
$\cP_\Theta = \{P_\theta : \theta \in \Theta \subset \bbR^K\}$, $K \geq 1$, also known as the {\em superpopulation model}. $P_\theta$ admits a density function $f_\theta$.
In the context of survey data, we have an underlying unobserved population of random variables $\uX_N^* = (X^*_1, X^*_2, \ldots, X^*_N)$, where each $X^*_l$ is drawn independently from the density $f_{\theta_0}$. Define the sample inclusion indicator variables $\udelta_N = (\delta_1, \delta_2, \ldots, \delta_N)$ such that $\delta_l = 1$ if unit $l$ is included in the sample, $l = 1, 2, \ldots, N$. A {\em sampling design} is defined by placing a distribution $P_{\pi,N}$ on the vector of inclusion indicators $\udelta_N$ linked to the units comprising the population. 
Under informative sampling, the joint distribution for $\udelta_N$ can depend on some information about the population units $\uX_N^*$ that is available to the survey designer, \ie $P_{\pi,N}$ is implicitly conditionally defined given $N$ realizations from $P_{\theta_0}$. For each population unit $l = 1, 2, \ldots, N$, we define the  inclusion probabilities as $\pi_l = P(\delta_l = 1 \mid \uX^*_N)$ and the second-order pairwise inclusion probabilities as $\pi_{lk} = P(\delta_l \delta_k = 1 \mid \uX^*_N)$ for $1 \leq l \neq k \leq N$. 

\RestyleAlgo{boxruled}
\begin{algorithm}[ht]
  \caption{The Survey adjusted Weighted Likelihood Bootstrap
  \label{alg:S-WLB}}
  \textbf{Input:} Observed non-representative sample $\uX_n = (X_1, X_2, \ldots, X_n)$.\\[0.5ex]
  ~~~~~~~~~~ Scaled weights $\Tilde{\uw}_n = (\tilde{w}_{n,1}, \tilde{w}_{n,2}, \ldots, \tilde{w}_{n,n} )$ with $\sum_{i=1}^n \tilde{w}_{n,i} = n$.\\[0.8ex]
  For $j = 1, 2, \ldots, B$:\\
    ~~~ 1. Draw random vector $\uY_n^{(j)} = \big(Y_{n,1}^{(j)}, Y_{n,2}^{(j)}, \ldots, Y_{n,n}^{(j)} \big)$ such that for each $i$,\\
    \vspace{-1ex}$$Y_{n,i} ~\overset{\text{ind}}{\sim}~ \text{Gamma\,(shape}= 1, \text{\,scale} = \tilde{w}_{n,i})\,.$$\\
    ~~~ 2. Get the normalized weights $\ug_n^{(j)} = \big(g_{n,1}^{(j)}, g_{n,2}^{(j)}, \ldots, g_{n,n}^{(j)} \big)$, 
    \ $g_{n,i}^{(j)} = {Y_{n,i}^{(j)}}\big/{\sum_{k=1}^n Y_{n,k}^{(j)}}.$  \\[1ex]
    ~~~ 3. Compute $\theta^{(j)} = \arg \max_{\theta \in \Theta}\, \sum_{i=1}^{n} g^{(j)}_{n,i} \, \log f_\theta(X_i).$  \\[1ex]
    \textbf{Output:} $\{\theta^{(1)}, \theta^{(2)}, \ldots, \theta^{(B)}\}.$
\end{algorithm}

We observe a sample of size $n \leq N$, $\uX_n = (X_1, X_2, \ldots, X_n)$ drawn from the underlying population according to the sampling design, along with sampling weights $\uw_n = (w_1, w_2, \ldots, w_n)$, where each weight is assumed to be proportional to the inverse of its inclusion probability. Our goal is to conduct inference about the true parameter $\theta_0$ that characterizes the true population generating distribution $P_{\theta_0}$, using the observed sample $\uX_n$ and its corresponding sampling weights $\uw_n$ under an informative sampling design. Note that the observed sample $\uX_n$ and the sampling weights $\uw_n$ depend on both $P_{\pi,N}$ (which governs the sampling indicators $\udelta_N$) and $P_{\theta_0}^N$ (product measure which governs the population variables $\uX_N^*$). 
Conditional on $(\uX_n, \uw_n)$, the triangular array of un-normalized weights $\uY_n$ under S-WLB, are independently distributed, \ie $Y_{n,i}$ is independent for each $i$.

Let $P_*$ denote the joint distribution of $(\uX^*_\infty, \udelta_\infty)$, where $\uX^*_\infty = \{X^*_N:N \geq 1\}$ and $\udelta_\infty = \{\udelta_N:N \geq 1\}$. $P_*$ depends on $P_\pi$ (joint distribution of $\udelta_\infty$) and $P_{\theta_0}^{\infty}$ (infinite product measure governing $\uX^*_\infty$). Let $\cP_\Pi = \{P_\pi : \pi \in \Pi\}$ denote the class of sampling designs.
A sequence of estimators resulting from the S-WLB algorithm will involve random variables defined on a measurable space $(\Omega, \cA)$ with probability measure $P$ that depends on $P_1$ and $P_2$, where $(\Omega_1, \cA_1, P_1)$ denotes the probability space that jointly governs the sampled data and sampling weights, and $(\Omega_2, \cA_2, P_2)$ denotes the one that governs the triangular array of generated weights. Let $P(\cdot \mid \uX_n, \uw_n)$ denote the conditional distribution (given the sampled data and sampling weights), which are viewed as random variables on $(\Omega_1, \cA_1, P_1)$. To establish the asymptotic properties of the estimator derived from our S-WLB algorithm, we impose certain regularity conditions on both the class of sampling designs $\cP_\Pi$ and the data generating model $\cP_{\Theta}$. Below, we outline the assumptions specific to the survey design. The assumptions on the model (conditions \ref{c:model-1} - \ref{c:model-10}) include identifiability and certain smoothness conditions, which are fairly standard and are deferred to Section \ref{s:proofs} of the supplement.

\beginsurvey
\begin{condition}[Non-zero inclusion probability]
\label{c:survey-1}
    For some $\gamma \geq 1$, the sequence of  inclusion probabilities $\{\pi_N: N\geq 1\}$ is such that \ $\sup_{N \geq 1} \big({1}/{\pi_N} \big) ~\leq~ \gamma ~<~ \infty\,, \ a.s.\  P_{\theta_0}\,.$
\end{condition}
Condition \ref{c:survey-1} requires that the sampling design assigns a positive probability to each population unit, ensuring that the inclusion probabilities are bounded away from zero and that the sampling weights remain finite. Here $\gamma \geq 1$, since the maximum inclusion probability is $1$. This condition ensures that no subset of the population is systematically excluded, thereby allowing samples of any size to contain representative information about the entire population. This assumption underlies nearly all consistency results in the literature \citep{pfeffermann1998parametric, savitsky2016bayesian, savitsky2018scalable, williams2018bayesian, williams2021uncertainty}.

\begin{condition}[Asymptotically independent sampling]
\label{c:survey-2}
    The sequence of second-order pairwise inclusion probabilities $\{\pi_{MN}: M \neq N,  M,N \geq 1\}$ is such that \ $$\sup_{1 \leq M < N} \left|\,\frac{\pi_{MN}}{\pi_M\pi_N} ~-~ 1 \,\right| ~=~ \mathcal{O}\left(N^{-1}\right) \text{ for all } N \geq 2,\ a.s.\  P_{\theta_0}\,.$$
\end{condition}
Condition \ref{c:survey-2} confines the result to sampling designs in which the dependence among the sampled units reduces to zero as the population size increases, 
and is typically assumed in the literature \citep{savitsky2016bayesian, savitsky2018scalable, williams2018bayesian}.

\begin{condition}[Constant sampling fraction]
\label{c:survey-3}
    For some constant $c \in (0,1)$, 
    $$
    \lim_{n, N \to \infty} \ \left| \, \frac{n}{N} ~-~ c\, \right| ~=~ \mathcal{O}(1)\,.
    $$
\end{condition}
The constant $c$ is called the sampling fraction. Condition \ref{c:survey-3} ensures that the sample provides a representative amount of information about the population, and is also commonly assumed in the literature \citep{savitsky2016bayesian, williams2018bayesian, williams2021uncertainty}.

We shall now state our main theorems on conditional consistency and conditional asymptotic normality for a sequence of roots of the weighted likelihood equation $\tilde{S}_n(\theta) = 0$, where $\Tilde{S}_{n}(\theta) ~=~ \sum_{i=1}^n g_{n,i} \, \frac{\partial}{\partial \theta}\, \log f_\theta(X_i) $ denotes the score function corresponding to the weighted likelihood function $\tL_n$ defined in \eqref{eq:wtd_lik}. Let $\|\cdot\|$ denote the ordinary Euclidean distance, and the notation $\asP$ is read almost surely under $P_*$ and means for $P_*$ almost every infinite sequences of population variables $X_1, X_2, \ldots$ and inclusion indicators $\delta_1, \delta_2, \ldots$. The proofs are detailed in Section \ref{s:proofs} of the Supplement.

\begin{theorem}[Conditional Consistency]
\label{t:cond_consistency}
Under conditions \ref{c:survey-1} -- \ref{c:survey-3} on the class of sampling designs $\cP_\Pi$ and regularity conditions \ref{c:model-1} -- \ref{c:model-7} on the model $\cP_{\Theta}$, there exists an estimator $\ctheta$ of $\theta_0$ which is conditionally consistent, \ie for all $\epsilon > 0$, as $n \to \infty$, 
\begin{equation}
\label{eq:cond_consistent}
    P \left( \|\ctheta - \theta_0\| > \epsilon \mid \uX_n, \uw_n\right) \to 0, \quad \asP,
\end{equation}
and satisfies
\begin{equation}
\label{eq:cond_prob_score1}
    P \left(\tilde{S}_n(\ctheta) = 0 \mid \uX_n , \uw_n\right) \to 1, \quad \asP\,.
\end{equation}
Moreover, this sequence is essentially unique in the following sense: If $\bar{\theta}_n$ satisfies \eqref{eq:cond_consistent} and \eqref{eq:cond_prob_score1}, then
$P \left( \, \bar{\theta}_n = \ctheta \mid \uX_n, \uw_n\, \right) \to 1 \ \asP\,.$
\end{theorem}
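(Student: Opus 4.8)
The plan is to mimic the classical Cramér-type argument for consistency of a root of the likelihood equation, but carried out \emph{conditionally} on $(\uX_n,\uw_n)$ and with the extra randomness coming from the generated weights $\uY_n$. The key structural observation is that the weighted score can be written as
$\tilde S_n(\theta)=\sum_{i=1}^n g_{n,i}\,\score_\theta(X_i)$ with $g_{n,i}=Y_{n,i}/\sum_k Y_{n,k}$, and since $\sum_i \tilde w_{n,i}=n$ one has $\bbE(Y_{n,i}\mid\uX_n,\uw_n)=\tilde w_{n,i}$, so $n^{-1}\sum_i Y_{n,i}\to 1$ and $\bbE(g_{n,i}\mid\uX_n,\uw_n)\approx \tilde w_{n,i}/n$. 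Hence conditionally on the data, $\tilde S_n(\theta)$ concentrates around $n^{-1}\sum_i \tilde w_{n,i}\,\score_\theta(X_i)$, which is precisely the PMLE score; this latter quantity, by the design conditions \ref{c:survey-1}--\ref{c:survey-3} together with the model regularity conditions, converges $\asP$ to $\bbE_{\theta_0}\score_\theta(X)$ (the design-weighting corrects the informative-sampling bias so that the weighted population average reproduces the superpopulation expectation). The first step, therefore, is to establish a conditional law of large numbers for $\tilde S_n(\theta)$ and for the weighted second-derivative matrix $\tilde J_n(\theta)=-\sum_i g_{n,i}\,\partial^2\log f_\theta(X_i)/\partial\theta\partial\theta^\top$, pointwise in $\theta$ and then locally uniformly on a neighborhood of $\theta_0$, with the limits $\bbE_{\theta_0}\score_\theta$ and $J(\theta_0)$ respectively.

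Second, I would run the standard ``sign of the score on a small sphere'' argument of Cramér--Wald: fix a small ball $B_\rho(\theta_0)\subset\Theta$; a one-term Taylor expansion of $\tilde\ell_n(\theta)=\sum_i g_{n,i}\log f_\theta(X_i)$ around $\theta_0$ on the sphere $\|\theta-\theta_0\|=\rho$, combined with the conditional LLN from Step 1 and negative-definiteness of $J(\theta_0)$ (a model condition), shows that with conditional probability tending to $1$ (for $P_*$-a.e.\ path) $\tilde\ell_n$ attains an interior local maximum in $B_\rho(\theta_0)$, at which $\tilde S_n=0$. Running this for a sequence $\rho=\rho_m\downarrow 0$ and using a diagonal extraction yields a sequence $\ctheta$ with $P(\|\ctheta-\theta_0\|>\epsilon\mid\uX_n,\uw_n)\to 0$ $\asP$ and $P(\tilde S_n(\ctheta)=0\mid\uX_n,\uw_n)\to 1$ $\asP$, which is \eqref{eq:cond_consistent} and \eqref{eq:cond_prob_score1}. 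For essential uniqueness, I would argue that on the high-probability event where $\tilde J_n(\theta)$ is uniformly positive-definite on $B_\rho(\theta_0)$ (again from Step 1), $\tilde\ell_n$ is strictly concave there, so any two roots in that ball coincide; combining with the fact that both $\ctheta$ and any competitor $\bar\theta_n$ lie in $B_\rho(\theta_0)$ with conditional probability $\to 1$ gives $P(\bar\theta_n=\ctheta\mid\uX_n,\uw_n)\to 1$ $\asP$.

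The main obstacle is Step 1, specifically the passage from the data $(\uX_n,\uw_n)$ being ``almost surely nice'' to the \emph{conditional} probability statements holding $\asP$. One has to separate the two sources of randomness cleanly: for $P_*$-a.e.\ realization of $(\uX^*_\infty,\udelta_\infty)$ the weighted empirical averages $n^{-1}\sum_i\tilde w_{n,i}\,h(X_i)$ must converge to $\bbE_{\theta_0}h(X)$ (this uses conditions \ref{c:survey-1}--\ref{c:survey-3} to control the Horvitz--Thompson-type bias and variance, via an $L^2$ bound that invokes the $\mathcal{O}(N^{-1})$ pairwise-dependence decay of Condition \ref{c:survey-2} and the bounded-weights Condition \ref{c:survey-1}); and then, \emph{given} such a realization, the extra fluctuation $\tilde S_n(\theta)-n^{-1}\sum_i\tilde w_{n,i}\score_\theta(X_i)$ must go to zero in conditional probability, which requires a variance computation for the Gamma-ratio weights $g_{n,i}$ and an application of Chebyshev conditionally. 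Controlling the denominator $\sum_k Y_{n,k}$ uniformly (so that it does not come too close to zero) and handling the local-uniformity in $\theta$ (via a bracketing/Lipschitz argument using the dominating function supplied by the model regularity conditions) are the technical pinch points; once these are in place the Cramér argument in Steps 2--3 is routine.
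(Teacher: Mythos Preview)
Your proposal is correct and well-organized; the structural analysis in Step~1 (separating the two layers of randomness, invoking Conditions~\ref{c:survey-1}--\ref{c:survey-3} for the $\asP$ convergence of $n^{-1}\sum_i\tilde w_{n,i}\,h(X_i)$, and then a conditional Chebyshev for the fluctuation in the $Y_{n,i}$) matches exactly what the paper establishes in its preliminary Lemmas~\ref{l:as_conv}--\ref{l:data-wts}, \ref{l:wtd_score}, \ref{l:wtd_info_1} and \ref{l:wtd_info_pd}.

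Where you diverge from the paper is in the \emph{existence} argument (your Step~2). You run the classical Cram\'er ``sign of the score on a sphere'' route: expand $\tilde\ell_n$ on $\partial B_\rho(\theta_0)$, use the conditional LLN and negative-definiteness of $-J(\theta_0)$ to force an interior local maximum, then extract along $\rho_m\downarrow 0$. The paper instead follows \cite{newton1991thesis} and applies an Inverse Function Theorem to the weighted score $\tilde S_n$: it shows $\tilde S_n$ is one-to-one on a ball $U_\delta$ (by controlling $\|\tilde J_n(\theta)-\tilde J_n(\theta_0)\|$ uniformly on $U_\delta$) and that the image $\tilde S_n(U_\delta)$ contains a ball around $\tilde S_n(\theta_0)$, which in turn contains $0$ with high conditional probability since $\tilde S_n(\theta_0)\cp 0$; then $\ctheta:=\tilde S_n^{-1}(0)$. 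Uniqueness in the paper falls out immediately from the one-to-oneness delivered by the Inverse Function Theorem, whereas you obtain it from strict concavity of $\tilde\ell_n$ on $B_\rho(\theta_0)$ --- these are essentially the same fact viewed from two sides. Both routes require local-uniform control of $\tilde J_n(\theta)$ near $\theta_0$; your bracketing/Lipschitz suggestion via the dominating functions in Condition~\ref{c:model-6} is the right lever, and the paper invokes the same device in the uniform part of Lemma~\ref{l:wtd_info_1}. The Inverse Function Theorem approach is slightly cleaner in that it avoids the diagonal extraction over $\rho_m$ and gives a concrete definition of $\ctheta$ on an explicit high-probability set, while your Cram\'er argument is perhaps more familiar and makes the role of the curvature condition more transparent.
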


\begin{theorem}[Conditional Asymptotic Normality]
\label{t:cond_AN}
Under conditions \ref{c:survey-1} -- \ref{c:survey-3} on the class of sampling designs $\cP_\Pi$ and regularity conditions \ref{c:model-1} -- \ref{c:model-10} on the model $\cP_{\Theta}$, let $\{\ctheta\}$ be a conditionally consistent sequence of roots of the weighted likelihood equation. Then, for any Borel set $A \subset \mathbb{R}^K$, as $n \to \infty$, 
$$
P \left( \sqrt{n} \big(\ctheta - \htheta \big) \in A \mid \uX_n, \uw_n \right) ~\to~ P(Z \in A),\quad \asP\,,
$$
where $\hat{\theta}_n$ denotes a strongly consistent solution of the powered likelihood equation, and $Z \sim \cN_K \left(0, J(\theta_0)^{-1} I_w(\theta_0) J(\theta_0)^{-1} \right)$ with 
\begin{align*}
    J(\theta) &~=~ -\, \bbE_{\theta_0} \left[\,\frac{\partial^2}{\partial \theta^2} \log f_\theta (X)\, \right]\,,\\
    I_{w} (\theta) & ~=~ \lim_{n, N \to \infty} \frac{n}{N}\ \bbE_{\theta_0} \left[\frac{1}{N} \sum_{l=1}^N \frac{1}{\pi_l}  \left( \frac{\partial}{\partial \theta} \log f_\theta (X) \right) \left(\frac{\partial}{\partial \theta} \log f_\theta (X) \right)^\top \right]\,, \quad X\sim f_{\theta_0}.
\end{align*}
\end{theorem}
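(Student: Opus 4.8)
\medskip\noindent\textit{Proof strategy.}
The plan is to run the classical asymptotic-linearity argument for $M$-estimators, carried out conditionally on $(\uX_n,\uw_n)$ and then upgraded to hold along $P_*$-almost every data path. Write $\dot\ell_i(\theta)=\frac{\partial}{\partial\theta}\log f_\theta(X_i)$, $\ddot\ell_i(\theta)=\frac{\partial^{2}}{\partial\theta\,\partial\theta^{\top}}\log f_\theta(X_i)$, and let $\tilde{H}_n(\theta)=\sum_{i=1}^{n}g_{n,i}\,\ddot\ell_i(\theta)$ denote the Jacobian of $\tilde{S}_n$. By Theorem~\ref{t:cond_consistency}, with conditional probability tending to one $\ctheta$ is an interior root of $\tilde{S}_n(\theta)=0$, and on that event a first-order Taylor expansion of $\tilde{S}_n$ about the PMLE $\htheta$ gives
\[
  0 ~=~ \tilde{S}_n(\ctheta) ~=~ \tilde{S}_n(\htheta) ~+~ \tilde{H}_n(\theta_n^{*})\,\big(\ctheta-\htheta\big)
\]
for an intermediate point $\theta_n^{*}$ between $\ctheta$ and $\htheta$, whence $\sqrt{n}\,(\ctheta-\htheta)=\big[-\tilde{H}_n(\theta_n^{*})\big]^{-1}\sqrt{n}\,\tilde{S}_n(\htheta)$. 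It then suffices to prove, conditionally on $(\uX_n,\uw_n)$ and for $P_*$-a.e.\ path, that (i)~$\sqrt{n}\,\tilde{S}_n(\htheta)\dist\cN_K\!\big(0,I_w(\theta_0)\big)$ and (ii)~$-\tilde{H}_n(\theta_n^{*})\prob J(\theta_0)$; a conditional Slutsky and continuous-mapping step---using invertibility of $J(\theta_0)$ from the model conditions---then yields $\sqrt{n}\,(\ctheta-\htheta)\dist J(\theta_0)^{-1}\cN_K(0,I_w(\theta_0))=\cN_K\!\big(0,J(\theta_0)^{-1}I_w(\theta_0)J(\theta_0)^{-1}\big)$. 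Since the leading conditional term isolated below is a normalised sum of independent, absolutely continuous summands, this weak convergence will in fact upgrade---via a local CLT---to convergence of the conditional probabilities $P(\,\cdot\mid\uX_n,\uw_n)$ on arbitrary Borel sets $A$.

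\medskip\noindent
For (i), set $\bar{Y}_n=\frac1n\sum_k Y_{n,k}$. Using $g_{n,i}=Y_{n,i}/(n\bar{Y}_n)$ together with the powered-score identity $\sum_i\tilde{w}_{n,i}\dot\ell_i(\htheta)=0$, which lets one recentre the numerator at no cost,
\[
  \sqrt{n}\,\tilde{S}_n(\htheta) ~=~ \frac{1}{\bar{Y}_n}\cdot\frac{1}{\sqrt{n}}\sum_{i=1}^{n}\big(Y_{n,i}-\tilde{w}_{n,i}\big)\,\dot\ell_i(\htheta).
\]
Conditionally on $(\uX_n,\uw_n)$ these summands are independent, mean zero (as $\bbE Y_{n,i}=\tilde{w}_{n,i}$), with covariance sum $\frac1n\sum_i\tilde{w}_{n,i}^{2}\,\dot\ell_i(\htheta)\dot\ell_i(\htheta)^{\top}=I_{w,n}(\htheta)$---this is precisely why the resampling scheme imposes $\Var(Y_{n,i})=\tilde{w}_{n,i}^{2}$---and $I_{w,n}(\htheta)\to I_w(\theta_0)$ along $P_*$-a.e.\ path. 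A multivariate Lindeberg--Feller CLT then applies: the $Y_{n,i}$ are exponential with scales uniformly bounded, and bounded away from zero, under Condition~\ref{c:survey-1}, hence uniformly sub-exponential, while a moment condition among \ref{c:model-1}--\ref{c:model-10} controls $\frac1n\sum_i\|\dot\ell_i(\htheta)\|^{2+\delta}$ and $n^{-1/2}\max_i\|\dot\ell_i(\htheta)\|$. Since also $\bar{Y}_n\prob1$ ($\bbE\bar{Y}_n=1$, $\Var(\bar{Y}_n)=\mathcal O(n^{-1})$), Slutsky gives~(i). For~(ii), $\theta_n^{*}$ lies between the strongly consistent $\htheta$ and the conditionally consistent $\ctheta$, so $\theta_n^{*}\to\theta_0$ conditionally in probability; decomposing $-\tilde{H}_n(\theta)=\frac{1}{\bar{Y}_n}\big[J_{w,n}(\theta)-\frac1n\sum_i(Y_{n,i}-\tilde{w}_{n,i})\ddot\ell_i(\theta)\big]$, the residual term has conditional variance $\mathcal O(n^{-1})$ uniformly on a neighbourhood of $\theta_0$ (by a domination bound on $\ddot\ell$) and is negligible, while $J_{w,n}(\theta)\to J(\theta)$ uniformly on that neighbourhood by the weighted-Hessian law of large numbers, so $J_{w,n}(\theta_n^{*})\to J(\theta_0)$ by continuity; with $\bar{Y}_n\prob1$ this gives~(ii).

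\medskip\noindent
The main obstacle is the bookkeeping between the two layers of randomness: each ``conditional'' limit above must hold simultaneously for $P_*$-almost every sequence of population values $\uX^{*}_\infty$ and inclusion indicators $\udelta_\infty$, which forces one to establish \emph{design-based strong laws of large numbers} for Horvitz--Thompson-type weighted averages $\frac1n\sum_{i\in S}\tilde{w}_{n,i}h(X_i)$ and $\frac1n\sum_{i\in S}\tilde{w}_{n,i}^{2}h(X_i)$---in particular the almost-sure limit $I_{w,n}(\htheta)\to I_w(\theta_0)=c\,\bbE_{\theta_0}[\pi^{-1}\dot\ell\dot\ell^{\top}]$ used in~(i) and its uniform analogue for $J_{w,n}$ used in~(ii). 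This is exactly where the survey conditions do their work: boundedness of $1/\pi_N$ (Condition~\ref{c:survey-1}) keeps the weighted summands integrable; the near-independence of the pairwise inclusion probabilities (Condition~\ref{c:survey-2}) makes the variances of these averages $\mathcal O(n^{-1})$, so a Borel--Cantelli step upgrades convergence in probability to almost-sure convergence; and the constant sampling fraction (Condition~\ref{c:survey-3}) pins down the constant $c=\lim n/N$ appearing in $I_w(\theta_0)$. A secondary, more routine technicality is that $\dot\ell_i,\ddot\ell_i$ are evaluated at the data-dependent points $\htheta,\theta_n^{*}$ rather than at $\theta_0$, and that the vector-valued mean-value expansion may require a coordinate-dependent intermediate point; both are handled by combining the uniform laws of large numbers above with the consistency of $\htheta$ and $\ctheta$ and a standard stochastic-equicontinuity argument.
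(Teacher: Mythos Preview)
Your overall strategy---Taylor-expand $\tilde S_n$ about $\htheta$, establish a conditional CLT for $\sqrt n\,\tilde S_n(\htheta)$ via the recentering $\sum_i\tilde w_{n,i}\dot\ell_i(\htheta)=0$ and the variance identification $I_{w,n}(\htheta)$, establish a conditional LLN for the Hessian, then apply Slutsky---is exactly the paper's route. The paper uses a second-order expansion (Jacobian at $\htheta$ plus a third-derivative remainder controlled by condition~\ref{c:model-9}) rather than your first-order expansion with an intermediate point, which sidesteps the coordinate-dependent $\theta_n^*$ issue you flag, but this is a minor difference in execution.

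There is, however, a genuine technical gap in how you pass from design-based convergence in probability to $\asP$ convergence. You write that condition~\ref{c:survey-2} makes the variances of the Horvitz--Thompson averages $\mathcal O(n^{-1})$, ``so a Borel--Cantelli step upgrades convergence in probability to almost-sure convergence.'' But Chebyshev with an $\mathcal O(n^{-1})$ variance gives tail probabilities of order $n^{-1}$, and $\sum_n n^{-1}$ diverges, so direct Borel--Cantelli fails. The paper closes this by invoking a strong law for dependent sequences due to Petrov, whose hypothesis is $\Var(S_N-S_M)\le C(N-M)$ for all $N>M$; this is precisely what conditions~\ref{c:survey-1}--\ref{c:survey-2} deliver for the partial sums $\sum_{l\le N}\pi_l^{-1}\delta_l\,g(X_l^*)$, and Petrov's theorem then yields the $\asP$ limits (including $I_{w,n}(\htheta)\to I_w(\theta_0)$ and the max-term negligibility needed for Lindeberg) without any subsequence gymnastics. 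Your argument can be repaired along these lines, but as written the step does not go through.

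A smaller point: your remark that the conditional weak limit ``upgrades via a local CLT'' to convergence on arbitrary Borel sets is both unnecessary and would be hard to justify. The paper simply reads the conclusion as conditional weak convergence; since the limit $Z$ is absolutely continuous, this already gives $P(\sqrt n(\ctheta-\htheta)\in A\mid\uX_n,\uw_n)\to P(Z\in A)$ for every continuity set of $Z$, which is the intended meaning.
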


The probabilities in these results refer to the distribution of $\ctheta$, induced by the random weights $\ug_n$. Theorem \ref{t:cond_AN} asserts that under specific regularity conditions, the distribution of the S-WLB estimator $\ctheta$ converges to a normal distribution centered around the PMLE $\htheta$, with covariance matching the PMLE’s sandwich covariance \eqref{eq:dist_PMLE}. The proofs involve expressing the score and information functions of the weighted likelihood $\tL_n$ as weighted averages of the observed data $\uX_n$ and the triangular array of the un-normalized random weights $\uY_n$. Conditional consistency in Theorem \ref{t:cond_consistency} is achieved through the strong law of large numbers, applied to jointly distributed functions of the data, whose covariances are controlled by the asymptotic independence of the inclusion indicators (condition \ref{c:survey-2}) and the finiteness of almost-sure limits are ensured by conditions \ref{c:survey-1} and \ref{c:survey-3}, along with standard regularity conditions on the model $\cP_\Theta$ (see section \ref{s:proofs} of the Supplement). Theorem \ref{t:cond_AN} uses the Lindeberg-Feller Central Limit Theorem, leveraging the conditional independence of $\uY_n$, and the strong law applied to data-driven functions, followed by Slutsky’s Theorem.  This technique simplifies the proofs by elegantly avoiding the empirical process functionals routinely used in the literature to address dependence in survey samples.

To avoid ambiguity in interpretation, we will refer to the intervals produced by S-WLB as {\em uncertainty intervals}, as they do not qualify as either frequentist confidence intervals or Bayesian credible intervals.
The theorem highlights that the S-WLB {uncertainty intervals} will asymptotically align with that of the PMLE and yield optimal coverage. A major advantage of S-WLB is that it does not require knowledge of $J(\theta)$ or $I_w(\theta)$ or their plug-in estimators, making it highly adaptable for any complex modeling scheme that admits a  weighted optimization.

\section{Simulation Studies}
\label{s:sims}

\subsection{Set-up}
This section presents two simulation studies with dual objectives: (a) illustrate the estimation bias when sampling weights are ignored, and (b) assess the performance of our S-WLB algorithm in comparison to existing weighted methods. The first simulation concerns a Gaussian model, while the second uses a probit regression setup. We refer to them as Simulation 1 and Simulation 2, respectively. Our algorithm is compared to the unweighted Bayesian estimator (UBE) which ignores sampling weights, the pseudo maximum likelihood estimator (PMLE), the Bayesian pseudo posterior estimator \citep[BPPE;][]{savitsky2016bayesian}, and the Bayesian estimator resulting from the weighted Bayesian Bootstrap \citep[WBB;][]{gunawan2020bayesian}.

We begin by outlining the data-generating mechanism for the two simulations. Based on the numerical experiments in \cite{gunawan2020bayesian}, we generate weights by introducing a Gaussian distributed selection variable $Z$ which is correlated with our  variable of interest ($X$ in Simulation 1 and $(X,Y)$ in Simulation 2) through a Gaussian copula. The probability of selecting an observation in the sample depends on $Z$ via a probit link function. In both cases, we assume that the weights derived from these probabilities are observed, while the realizations of $Z$, used to compute these probabilities and weights, remain unobserved. 
The data generation steps are given below:

\noindent {\bf Step 1.} Create the finite population of size $N = 100,000$ in the two settings as follows:

\vspace{-1ex}
\begin{enumerate}
\item  Simulation 1:  For $l = 1, 2, \ldots, N$, generate
\begin{equation*}
    \Bigg( \begin{matrix}\, X^*_l \\ Z_l \,\end{matrix} \Bigg) ~\sim~ \cN_2 \Bigg( \Bigg(\begin{matrix} \,\mu_x\, \\ \, \mu_z \,\end{matrix}\Bigg), \Bigg( \begin{matrix} \, \sigma_x^2 &  \rho \sigma_x \sigma_z \, \\ \, \rho \sigma_x \sigma_z & \sigma_z^2 \,\end{matrix}\Bigg) \Bigg)\,,
\end{equation*}
with $\mu_x = 10$, $\mu_z=0$, $\sigma_x = 4$, $\sigma_z = 3$ and $\rho \in \{0.2, 0.8\}$. Here, $X^*$ is our Gaussian distributed variable of interest and $Z$ is the selection variable.
\item Simulation 2:  For $l = 1, 2, \ldots, N$, generate $X^*_l \sim \cN \big(\mu_x, \sigma_x^2 \big)$, and draw 
\begin{equation*}
    \Bigg( \begin{matrix}\, V_l \,\\ \,Z_l \,\end{matrix} \Bigg) ~\sim~ \cN_2 \Bigg( \Bigg(\begin{matrix} \,X_l\beta\, \\ \, \mu_z \,\end{matrix}\Bigg), \Bigg( \begin{matrix} \, \sigma_v^2 &  \rho \sigma_v \sigma_z \, \\ \, \rho \sigma_v \sigma_z & \sigma_z^2 \,\end{matrix}\Bigg) \Bigg)\,, \quad \text{and} \quad Y^*_l = \mathds{1}\big(V_l > 0 \big)\,,
\end{equation*}
with $\beta = 0.1$, $\mu_x = 1$, $\mu_z = 0$, $\sigma_x^2 = 0.01$, $\sigma_v^2 = \sigma_z^2 = 1$, and $\rho \in \{0.2, 0.8\}$. Here, $X^*$, $Y^*$ and $Z$ denote the predictor, response and selection variables, respectively.
\end{enumerate}

\vspace{-1ex}
\noindent {\bf Step 2.} For each $l \leq N$, define the inclusion probabilities as $\pi_l = \Phi(b_0 + b_1 Z_l)$, where $\Phi$ is the cumulative distribution function of a standard normal distribution, $b_0 = -1.8$ and $b_1 \in \{0, 0.1\}$.

\noindent {\bf Step 3.} Draw samples of size $n \in \{500, 1000, 2000\}$ from the finite population with the inclusion probabilities $\{\pi_l: 1\leq l \leq N\}$. Let $\{X_i : 1\leq i \leq n\}$ and $\{(X_i, Y_i) : 1\leq i \leq n\}$ denote the observed sample for Simulations 1 and 2, respectively. Let $\{\tilde{w}_{n,i}, : 1\leq i \leq n\}$ denote the corresponding vector of scaled sampling weights, where $\Tilde{w}_{n,i} = n {\pi_i}^{-1} \big/ \sum_{j=1}^n {\pi_j}^{-1}$ and $\pi_i$ denotes the inclusion probability of sampled unit $i$.

Note that a sample drawn from the population will be representative, meaning each value will have an equal selection probability if $\rho = 0$ or $b_1 = 0$. Thus, the value of $\rho$ governs the sample's representativeness when $b_1 \neq 0$. For each simulation, we consider 3 scenarios with $(b_1, \rho)$ chosen as $(0, 0.2)$, $(0.1, 0.2)$ and $(0.1, 0.8)$ to demonstrate a representative sample and weighted samples with high and low representativeness, respectively. The estimates are summarized across $100$ Monte Carlo replications. We evaluate the performance of the estimators in terms of mean squared error (MSE) and coverage probability (CovP) of the $95\%$ frequentist confidence intervals, Bayesian credible intervals and our S-WLB uncertainty intervals (UIs). With slight abuse of notation, we shall use the term ``CI" to refer to both confidence and credible intervals. Code
for implementing our method is available at the GitHub repository, \href{https://github.com/das-snigdha/S-WLB}{das-snigdha/S-WLB}.

\subsection{Simulation 1: Mean Estimation and Inference under a Gaussian Model}
\label{s:sim_1}

Our objective in this setting is to infer about the population mean, $\mu_x$. For Gaussian distributed data, the weighted maximum likelihood estimators (PMLE, S-WLB) have closed-form solutions, as do the Bayesian unweighted (UBE) and weighted (BPPE) posterior distributions. 
For the WBB, we follow Algorithm 2 from \cite{gunawan2020bayesian}, designed for settings with closed-form posteriors. For S-WLB and WBB, we generate $B = 2000$ bootstrap samples. Figure \ref{fig:Sim1} shows boxplots of the MSE and barplots of the CovP across 100 simulation replicates.

\begin{figure}[htp]
    \centering
    \begin{subfigure}[b]{0.85\textwidth}
         \centering
         \includegraphics[width=\textwidth]{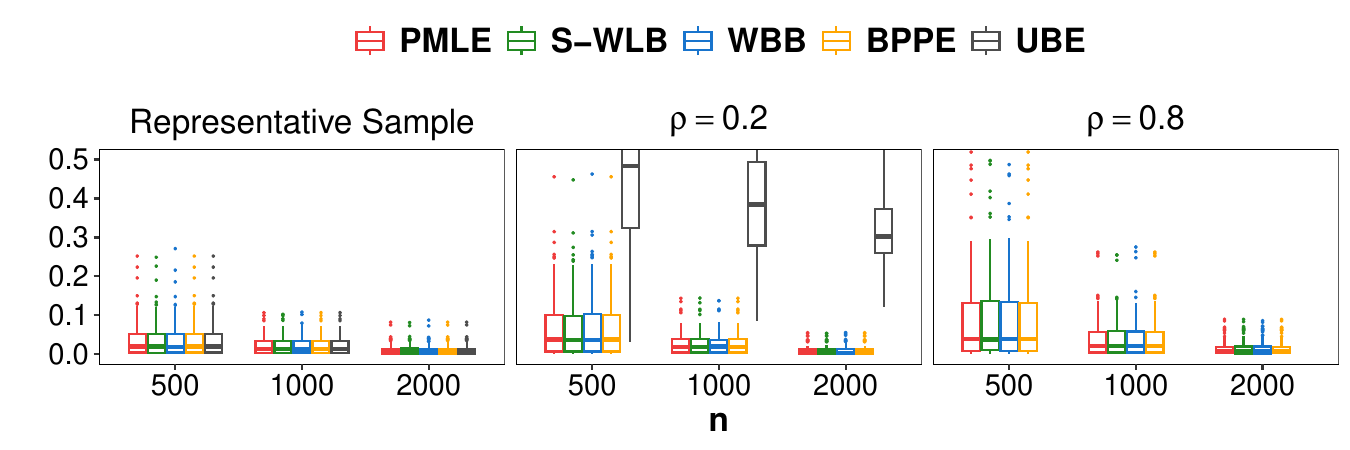}
         \caption{}
         \label{fig:Sim1_MSE}
     \end{subfigure}
     \begin{subfigure}[b]{0.85\textwidth}
         \centering
         \includegraphics[width=\textwidth]{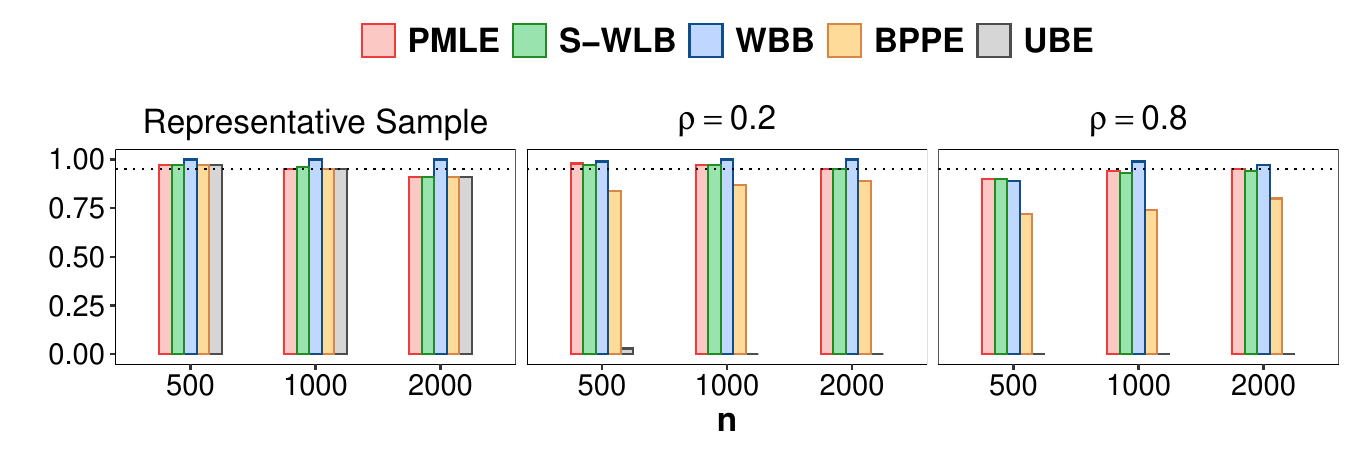}
         \caption{}
     \end{subfigure}
    \caption{Plots comparing the proposed S-WLB with competing methods via the mean squared error of point estimates (panel a) and coverage probability of the $95\%$ intervals (panel b) using 100 simulation replicates, for estimating the mean of a Gaussian model (Simulation 1).} 
    \label{fig:Sim1}
\end{figure}

In a representative sample where each population unit has equal inclusion probability, all five methods show consistent estimation, as evidenced by decreasing MSE with increasing sample size. The $95\%$ intervals across all the methods achieve optimal coverage. However, under unequal selection probabilities, methods that incorporate survey weights (PMLE, S-WLB, WBB, BPPE) provide consistent estimates, while UBE (unweighted Bayes) displays biased estimates due to ignoring survey weights. As $\rho$ increases from 0.2 to 0.8 (reducing sample representativeness) the estimation bias in UBE grows, as shown in Figure \ref{fig:Sim_MSE} of the Supplement. Our proposed S-WLB algorithm performs robustly, achieving coverage levels that match the optimal coverage of the PMLE under unequal probability sampling. While BPPE shows consistency, its CIs suffer from undercoverage. Note, WBB consistently demonstrates slight over-coverage -- an empirical observation also evident in the simulation settings presented in \cite{gunawan2020bayesian}.

\subsection{Simulation 2: Probit Regression Model}
\label{s:sim_2}

In this setting, our objective is to conduct inference on the regression parameter $\beta$. The weighted maximum likelihood estimates (PMLE, S-WLB) under a probit regression model are computed using the iteratively reweighted least squares algorithm, available in the \texttt{glm()} function in the R software. For the Bayesian methods (UBE, BPPE), posterior sampling under a weighted probit model is conducted with MCMC using default choices of weakly informative priors in the \texttt{rstanarm} package in R. For the WBB, we apply Algorithm 3 from \cite{gunawan2020bayesian}, which is designed for MCMC sampling from the posterior distribution. We generate $B = 2000$ bootstrap samples for S-WLB, and $2000$ posterior draws for UBE and BPPE, discarding the first $1000$ as burn-in. For the WBB, we use $M = 2000$ posterior draws within $J = 200$ bootstrap replicates, as recommended in \cite{gunawan2020bayesian}. Figure \ref{fig:Sim2} shows boxplots of the MSE and barplots of the CovP under the probit regression model, across 100 simulation replicates. 

\begin{figure}[htp]
    \centering
    \begin{subfigure}[b]{0.85\textwidth}
         \centering
         \includegraphics[width=\textwidth]{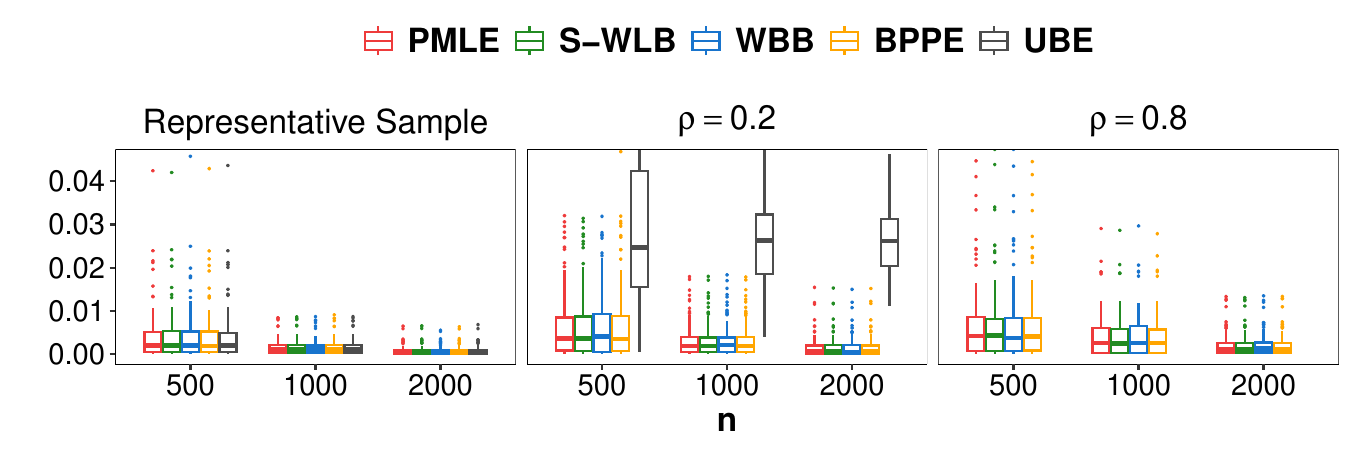}
         \caption{}
         \label{fig:Sim2_MSE}
     \end{subfigure}
     \begin{subfigure}[b]{0.85\textwidth}
         \centering
         \includegraphics[width=\textwidth]{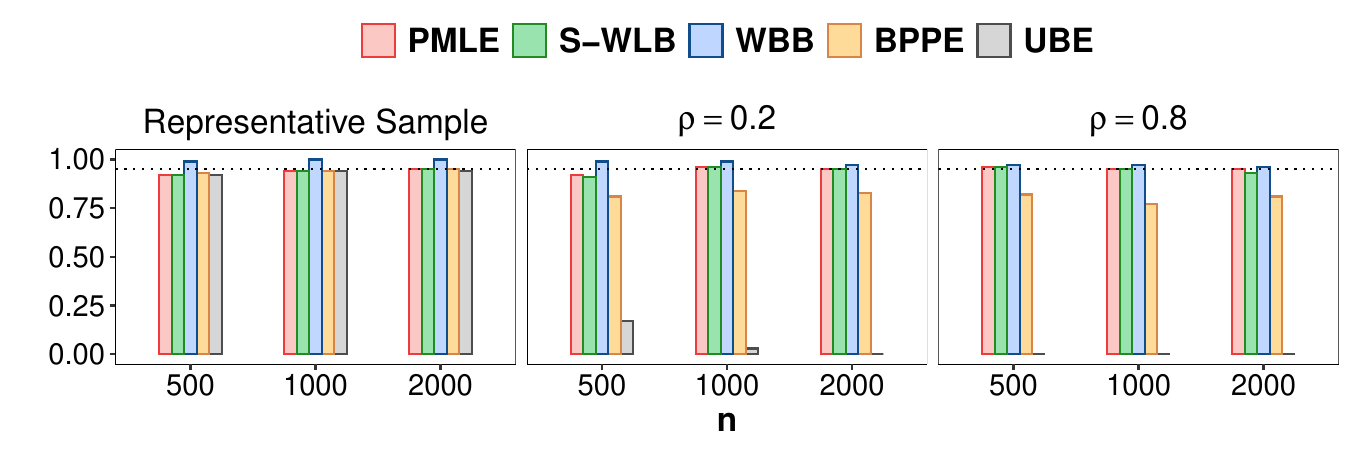}
         \caption{}
     \end{subfigure}
    \caption{Plots comparing the proposed S-WLB with competing methods via the mean squared error of point estimates (panel a) and coverage probability of the $95\%$ intervals (panel b) using 100 simulation replicates, under a probit regression model (Simulation 2).}
    \label{fig:Sim2}
\end{figure}

We observe similar performance trends as in Simulation 1. Under representative sampling, all methods exhibit consistency in estimation, as reflected in the decreasing MSE with larger sample sizes and optimal $95\%$ coverage. Under unequal selection probabilities, methods incorporating survey weights (PMLE, S-WLB, WBB, BPPE) continue to provide consistent estimates, while UBE suffers from estimation bias due to the omission of survey weights. This bias increases with the level of non-representativeness in the sample, controlled by $\rho$ (see Figure \ref{fig:Sim_MSE} in the Supplement). The slight over-coverage by the WBB CIs is also observed in this setting. Additionally, there is no clear guidance on selecting the number of bootstrap replicates ($J$), which can significantly affect coverage. For instance, Figure \ref{fig:WBB} in the Supplement illustrates under-coverage when a small number of bootstrap replicates ($J$) is used.
Our S-WLB algorithm demonstrates robust performance and achieves optimal coverage. This highlights the advantages of S-WLB in terms of consistency, accuracy in coverage, and computational feasibility, making it a simple and scalable choice for inference in complex survey data.

\section{Data Applications}
\label{s:data_analyses}
We now apply our methodology to two widely known publicly available survey datasets: the NHANES and the NSDUH. Our analysis compares estimates obtained with and without incorporating survey weights. We compare the performance of our S-WLB method with the frequentist PMLE, Bayesian weighted BPPE and unweighted UBE.


The NHANES, conducted by the National Center for Health Statistics (NCHS), collects data on the health and nutrition of the non-institutionalized U.S. population through interviews and standardized physical exams. NHANES uses a complex, stratified, four-stage sampling design to ensure that the collected data is nationally representative. To demonstrate our methodology, we focus on assessing periodontal disease, the leading cause of adult tooth loss, in relation to specific risk factors, using data on $5,692$ subjects from the NHANES 2011--12 and 2013--14 cycles. The disease progression is assessed using its most popular biomarker, the clinical attachment level (CAL, in mm) recorded for each tooth, excluding third molars. The disease status for each subject is represented by the average CAL (avCAL) taken across all teeth. Risk factors include continuous variables such as age (years), body mass index (BMI, $\text{kg}/\text{m}^2$), glycosylated hemoglobin (HbA1c, $\%$), diet quality (measured by the Healthy Eating Index (HEI) per 2015 guidelines), and time since last dental visit (months), and binary covariates such as gender ($1$ if female, $0$ if male), smoking status, diabetes, hypertension, and medical insurance ($1$ indicating presence, $0$ absence). Since HEI is used as a covariate, we follow the NCHS recommendations and apply scaled dietary weights (normalized to sum to the sample size) as sampling weights in our analysis. We fit a weighted multivariate Gaussian regression, after applying a Box-Cox transformation to correct for skewness in avCAL. For S-WLB, we generate $B = 5000$ samples of the regression parameters to compute point estimates and $95\%$ UIs using the sample mean and quantiles. Similarly, we draw $5000$ samples from the closed-form posteriors of weighted and unweighted Bayesian linear regression models (BPPE, UBE) for inference. Figure \ref{fig:data_nhanes} presents point estimates and $95\%$ intervals for various risk factors.

\begin{figure}[htp]
    \centering
    \begin{subfigure}[b]{0.49\textwidth}
         \centering
         \includegraphics[width=\textwidth]{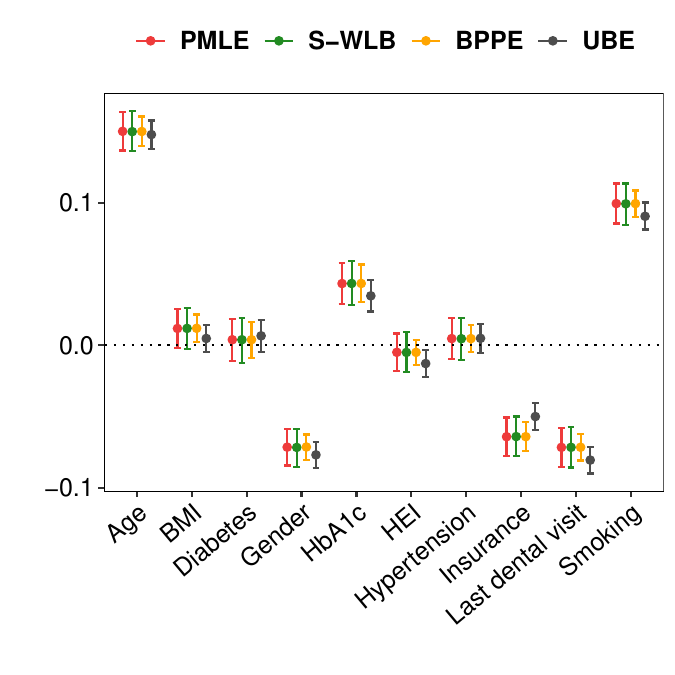}
         \caption{}
         \label{fig:data_nhanes}
     \end{subfigure}
     \hfill
     \begin{subfigure}[b]{0.49\textwidth}
         \centering
         \includegraphics[width=\textwidth]{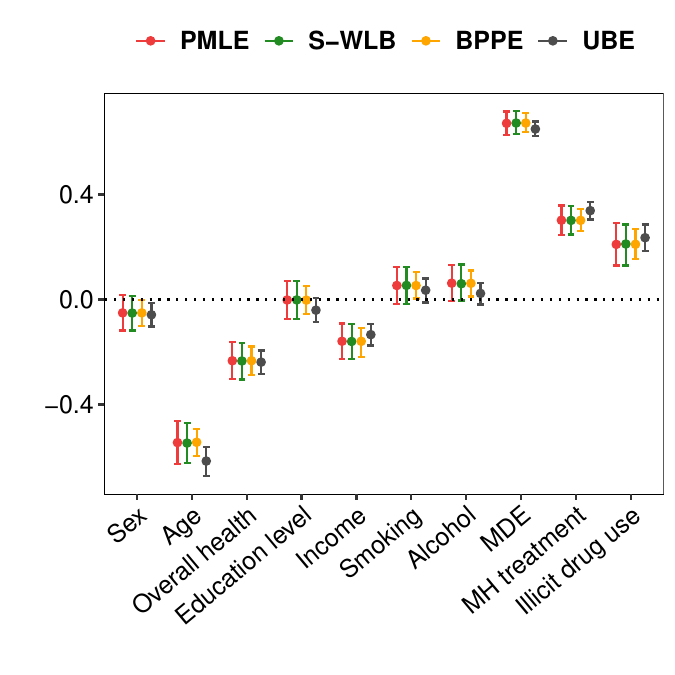}
         \caption{}
         \label{fig:data_nsduh}
     \end{subfigure}
    \vspace{-0.1in}
    \caption{Box-plots of point estimates and $95\%$ intervals of regression parameters, obtained from fitting the proposed S-WLB method and the competing methods to the NHANES (Panel a) and NSDUH (Panel b) datasets.}
    \label{fig:data}
\end{figure}

Our proposed S-WLB closely matches the PMLE in both point and interval estimates across all covariates. BPPE also aligns well in point estimates but has narrower CIs, which can lead to misleading inference; for instance, BMI appears statistically insignificant under PMLE, S-WLB, and UBE, but shows a significant positive influence on periodontal disease under BPPE. The unweighted UBE method yields point estimates and CIs that diverge from the weighted methods, particularly for diet quality (HEI), which is identified as a significant negative risk factor under UBE, but remains insignificant in the weighted methods.

As a second application, we demonstrate our method on data from the NSDUH, conducted annually by the Substance Abuse and Mental Health Services Administration. The NSDUH provides nationally representative data on tobacco, alcohol, drug use, substance use disorders, and mental health issues among the civilian, non-institutionalized U.S. population, using a multistage, state-based sampling design. To illustrate our algorithm, we focus on assessing suicidal tendencies using the NSDUH 2019 data involving $40,640$ subjects. A probit model is applied to a binary response variable indicating whether an individual thought about suicide in the past year ($1$ if yes, $0$ if no). Covariates include ordinal levels for age ($1-6$, covering $12$ to $65+$ years), overall health ($1 - 4$, from poor to excellent), education ($1 - 4$, from less than high school to college graduate), and income ($1-4$, low to high). Binary covariates include sex ($1$ for female, $0$ for male), smoking, alcohol use, illicit drug use, major depressive episodes (MDE) in the past year, and mental health (MH) treatment in the past year ($1$ indicating presence, $0$ absence). We incorporate the scaled survey weights in our analysis. 

We use the same implementations scheme as outlined in Section \ref{s:sim_2}. For S-WLB, we generate $B = 5000$ bootstrap samples, while for UBE and BPPE, we collect $5000$ posterior draws after discarding the initial $1000$ as burn-in. Figure \ref{fig:data_nsduh} presents point estimates and $95\%$ intervals for the factors impacting suicidal tendencies.
Consistent with our findings from the NHANES analysis, S-WLB closely aligns with the PMLE in both point and interval estimates across all covariates in the NSDUH data, demonstrating its reliability in maintaining optimal coverage. While BPPE produces similar point estimates, its CIs are narrower, potentially underestimating uncertainty. Note that BPPE suggests statistical significance for covariates such as age, smoking, and alcohol use, which remain insignificant under S-WLB and PMLE. Additionally, age appears significant under UBE, which does not account for sampling weights.

\bibliographystyle{apalike}
\bibliography{ref}

\begin{thebibliography}{}

\bibitem[Back et~al., 2010]{back2010gender}
Back, S.~E., Payne, R.~L., Simpson, A.~N., and Brady, K.~T. (2010).
\newblock {Gender and prescription opioids: Findings from the National Survey on Drug Use and Health}.
\newblock {\em Addictive Behaviors}, 35(11):1001--1007.

\bibitem[Chambers and Skinner, 2003]{chambers2003analysis}
Chambers, R. and Skinner, C. (2003).
\newblock {\em {Analysis of Survey Data}}.
\newblock Wiley Series in Survey Methodology. Wiley.

\bibitem[Chen et~al., 2010]{chen2010bayesian}
Chen, Q., Elliott, M.~R., and Little, R. J.~A. (2010).
\newblock {Bayesian Penalized Spline Model-Based Inference for Finite Population Proportion in Unequal Probability Sampling}.
\newblock {\em Survey Methodology}, 36(1):23--34.

\bibitem[Cohen, 1997]{cohen1997bayesian}
Cohen, M.~P. (1997).
\newblock {The Bayesian Bootstrap and Multiple Imputation for Unequal Probability Sample Designs}.
\newblock In {\em Proceedings of the Survey Research Methods Section}, pages 635--638. American Statistical Association.

\bibitem[Dickey, 1968]{dickey1968three}
Dickey, J.~M. (1968).
\newblock {Three multidimensional-integral identities with Bayesian applications}.
\newblock {\em The Annals of Mathematical Statistics}, 39(5):1615--1628.

\bibitem[Dong et~al., 2014]{dong2014nonparametric}
Dong, Q., Elliott, M.~R., and Raghunathan, T.~E. (2014).
\newblock A nonparametric method to generate synthetic populations to adjust for complex sampling design features.
\newblock {\em Survey Methodology}, 40(1):29--46.

\bibitem[Gelman, 2007]{gelman2007}
Gelman, A. (2007).
\newblock {Struggles with Survey Weighting and Regression Modeling}.
\newblock {\em Statistical Science}, 22(2):153 -- 164.

\bibitem[Gunawan et~al., 2020]{gunawan2020bayesian}
Gunawan, D., Panagiotelis, A., Griffiths, W., and Chotikapanich, D. (2020).
\newblock {Bayesian weighted inference from surveys}.
\newblock {\em Australian \& New Zealand Journal of Statistics}, 62(1):71--94.

\bibitem[H{\'a}jek, 1971]{hajek1971comment}
H{\'a}jek, J. (1971).
\newblock {Comment on ``An Essay on the Logical Foundations of Survey Sampling, Part One"}.
\newblock In Godambe, V.~P. and Sprott, D.~A., editors, {\em The Foundations of Survey Sampling}, page 236. Holt, Rinehart, and Winston.

\bibitem[Horvitz and Thompson, 1952]{horvitz1952generalization}
Horvitz, D.~G. and Thompson, D.~J. (1952).
\newblock {A Generalization of Sampling without Replacement from a Finite Universe}.
\newblock {\em Journal of the American Statistical Association}, 47:663--685.

\bibitem[Kunihama et~al., 2016]{KUNIHAMA201641}
Kunihama, T., Herring, A., Halpern, C., and Dunson, D. (2016).
\newblock {Nonparametric Bayes modeling with sample survey weights}.
\newblock {\em Statistics \& Probability Letters}, 113:41--48.

\bibitem[Le{\'o}n-Novelo and Savitsky, 2019]{leon2019fully}
Le{\'o}n-Novelo, L.~G. and Savitsky, T.~D. (2019).
\newblock {Fully Bayesian Estimation under Informative Sampling}.
\newblock {\em Electronic Journal of Statistics}, 13(1):1608--1645.

\bibitem[Naavaal et~al., 2022]{naavaal2022association}
Naavaal, S., Garcia, D.~T., Deng, X., and Bandyopadhyay, D. (2022).
\newblock {Association between periodontal disease and oral cancer screening among US adults: NHANES 2011-2014}.
\newblock {\em Community Dentistry and Oral Epidemiology}, 50(3):216--224.

\bibitem[Newton, 1991]{newton1991thesis}
Newton, M.~A. (1991).
\newblock {\em The weighted likelihood bootstrap and an algorithm for prepivoting}.
\newblock PhD thesis, Department of Statistics, University of Washington, Seattle.

\bibitem[Newton and Raftery, 1994]{newton1994approximate}
Newton, M.~A. and Raftery, A.~E. (1994).
\newblock {Approximate Bayesian inference with the Weighted Likelihood Bootstrap}.
\newblock {\em Journal of the Royal Statistical Society: Series B (Methodological)}, 56(1):3--26.

\bibitem[Petrov, 2014]{Petrov2014}
Petrov, V.~V. (2014).
\newblock {On the Strong Law of Large Numbers for a Sequence of Dependent Random Variables}.
\newblock {\em Journal of Mathematical Sciences}, 199:225--227.

\bibitem[Pfeffermann, 1996]{pfeffermann1996use}
Pfeffermann, D. (1996).
\newblock The use of sampling weights for survey data analysis.
\newblock {\em Statistical methods in medical research}, 5(3):239--261.

\bibitem[Pfeffermann et~al., 1998]{pfeffermann1998parametric}
Pfeffermann, D., Krieger, A.~M., and Rinott, Y. (1998).
\newblock Parametric distributions of complex survey data under informative probability sampling.
\newblock {\em Statistica Sinica}, 8(3):1087--1114.

\bibitem[Pompe, 2021]{pompe2021}
Pompe, E. (2021).
\newblock {Introducing prior information in Weighted Likelihood Bootstrap with applications to model misspecification}.

\bibitem[Rao and Wu, 2010]{RaoWu2010}
Rao, J. N.~K. and Wu, C. (2010).
\newblock {Bayesian pseudo-empirical-likelihood intervals for complex surveys}.
\newblock {\em Journal of the Royal Statistical Society: Series B (Statistical Methodology)}, 72(4):533--544.

\bibitem[Rudin, 1964]{rudin1964principles}
Rudin, W. (1964).
\newblock {\em {Principles of Mathematical Analysis}}.
\newblock McGraw-Hill, New York.

\bibitem[Savitsky and Srivastava, 2018]{savitsky2018scalable}
Savitsky, T.~D. and Srivastava, S. (2018).
\newblock {Scalable Bayes under Informative Sampling}.
\newblock {\em Scandinavian Journal of Statistics}, 45(3):534--556.

\bibitem[Savitsky and Toth, 2016]{savitsky2016bayesian}
Savitsky, T.~D. and Toth, D. (2016).
\newblock {Bayesian estimation under informative sampling}.
\newblock {\em Electronic Journal of Statistics}, 10:1677--1708.

\bibitem[Si et~al., 2015]{Si2015BA}
Si, Y., Pillai, N.~S., and Gelman, A. (2015).
\newblock {Bayesian Nonparametric Weighted Sampling Inference}.
\newblock {\em Bayesian Analysis}, 10(3):605 -- 625.

\bibitem[Si et~al., 2020]{si2020bayesian}
Si, Y., Trangucci, R., Gabry, J.~S., and Gelman, A. (2020).
\newblock {Bayesian Hierarchical Weighting Adjustment and Survey Inference}.
\newblock {\em Survey Methodology}, 46(2):181--214.

\bibitem[Si and Zhou, 2020]{SiZhou2020}
Si, Y. and Zhou, P. (2020).
\newblock {{Bayes-Raking: Bayesian Finite Population Inference with Known Margins}}.
\newblock {\em Journal of Survey Statistics and Methodology}, 9(4):833--855.

\bibitem[White, 1982]{white1982maximum}
White, H. (1982).
\newblock {Maximum Likelihood Estimation of Misspecified Models}.
\newblock {\em Econometrica}, 50(1):1--25.

\bibitem[Williams and Savitsky, 2018]{williams2018bayesian}
Williams, M.~R. and Savitsky, T.~D. (2018).
\newblock {Bayesian Pairwise Estimation Under Dependent Informative Sampling}.
\newblock {\em Bayesian Analysis}, 13(3):817--836.

\bibitem[Williams and Savitsky, 2020]{williams2020bayesian}
Williams, M.~R. and Savitsky, T.~D. (2020).
\newblock {Bayesian Estimation Under Informative Sampling with Unattenuated Dependence}.
\newblock {\em Bayesian Analysis}, 15(1):57--77.

\bibitem[Williams and Savitsky, 2021]{williams2021uncertainty}
Williams, M.~R. and Savitsky, T.~D. (2021).
\newblock {Uncertainty Estimation for Pseudo-Bayesian Inference Under Complex Sampling}.
\newblock {\em International Statistical Review}, 89(1):72--107.

\bibitem[Zheng and Little, 2005]{zheng2005inference}
Zheng, H. and Little, R. (2005).
\newblock {Inference for the Population Total from Probability-Proportional-to-Size Samples Based on Predictions from a Penalized Spline Nonparametric Model}.
\newblock {\em Journal of Official Statistics}, 21(1):1--20.

\bibitem[Zheng and Little, 2003]{zheng2003penalized}
Zheng, H. and Little, R.~J. (2003).
\newblock {Penalized spline model-based estimation of finite population total from probability-proportional-to-size samples}.
\newblock {\em Journal of Official Statistics}, 19:99--107.

\end{thebibliography}

\appendix

\section*{\LARGE Supplementary Materials}
This supplement includes proofs of the theoretical results on asymptotic accuracy of our method, auxiliary results necessary for these proofs and additional plots illustrating the performance of competing methods. 

\section{Asymptotic results for the S-WLB}
\label{s:proofs}


The notation and proofs of our results are drawn closely from the techniques outlined in Chapter 3 of \cite{newton1991thesis}, with necessary modifications to accommodate survey data. We follow a similar progression of lemmas leading up to the theorems on conditional consistency (Theorem \ref{t:cond_consistency}) and conditional asymptotic normality (Theorem \ref{t:cond_AN}). Throughout these proofs, we utilize a strong law of large numbers (SLLN) for dependent random variables \citep[Theorem 2]{Petrov2014}, stated here as Theorem \ref{t:gen_as_conv}, as opposed to the SLLN for i.i.d. random variables employed in \cite{newton1991thesis}. The covariances of these variables are controlled through the asymptotic independence of inclusion indicators (condition \ref{c:survey-2}), while conditions \ref{c:survey-1} and \ref{c:survey-3} along with certain regularity conditions on the model $\cP_\Theta$ (conditions \ref{c:model-1} -- \ref{c:model-10}) ensure the finiteness of the almost-sure limits. Our regularity conditions for the model closely follow the framework established by \cite{newton1991thesis}. However, with respect to the smoothness conditions, we impose a slightly stronger requirements. We require (i) the log-likelihood ratio to be square-integrable (condition \ref{c:model-2}); (ii) the first derivative of the log density to be uniformly dominated by a function of the data with finite fourth moments over a neighborhood $B$ of the true parameter $\theta_0$ (conditions \ref{c:model-6}(a)); (iii) the second and third derivatives of the log density as well as the products of the derivatives, to be uniformly dominated by functions of the data, over the neighborhood $B$ of $\theta_0$ (conditions \ref{c:model-6}(b)-(c), \ref{c:model-9} and \ref{c:model-10}). This contrasts with \cite{newton1991thesis}, where the dominating functions are assumed to be integrable. This adjustment is needed to tackle the dependence in survey data, as opposed to \iid\ data. Conditional asymptotic normality in Theorem \ref{t:cond_AN} follows using Slutsky's Theorem, after applying (i) Lindeberg-Feller Central Limit Theorem (similar to \cite{newton1991thesis}) on the triangular array of un-normalized random weights $\uY_n$ (leveraging their conditional independence), and (ii) the SLLN on data-driven functions.  This technique simplifies the proof by elegantly avoiding empirical process functionals routinely used in the literature to address the dependence in survey data.

Building on the framework and the regularity conditions on the survey design (conditions \ref{c:survey-1} -- \ref{c:survey-3}) outlined in Section \ref{s:asymptotics} of the main document, we begin with defining notions of {\em convergence in conditional probability} and {\em conditionally consistent} estimators.
Recall that $(\Omega_1, \cA_1, P_1)$ and $(\Omega_2, \cA_2, P_2)$ denote the probability spaces that govern the sampled data and the random weights, respectively. A single point $\omega_1 \in \Omega_1$ determines an infinite sequence of the sampled data and a single $\omega_2 \in \Omega_2$ determines an infinite triangular array of weights. The notation $\asP$ is read almost surely under $P_*$ and means for $P_*$ almost every infinite sequences of population variables $X_1, X_2, \ldots$ and inclusion indicators $\delta_1, \delta_2, \ldots$. 

Consider random variables $U$, $V_1$, $V_2, \ldots$ be defined on the  space $(\Omega, \cA)$. $V_n$ is said to {\em converge in conditional probability} $\asP$ to $U$ if for all $\epsilon > 0$, as $n \to \infty$,
\begin{equation*}
\label{eq:c.p.}
    P \left( \|V_n - U\| > \epsilon \mid \uX_n, \uw_n \right) \rightarrow 0, \quad \asP.
\end{equation*}
This convergence is denoted as
$V_n \cp U, \asP$. A sequence of estimators $\{\Bar{\theta}_n\}$ of $\theta_0$ is said to be {\em conditionally consistent} if $\ \Bar{\theta}_n \cp \theta_0, \ \asP\,.$

\subsection{Preliminaries}
\label{s:prelim}
We now present several preliminary results that are fundamental to establish the proofs of Theorems \ref{t:cond_consistency} and \ref{t:cond_AN}, which address conditional consistency and conditional asymptotic normality for a sequence of estimators resulting from our S-WLB algorithm. Note that under condition \ref{c:survey-3}, the sample size  $n \to \infty$  implies that the population size $N \to \infty$, as the sampling fraction remains asymptotically bounded away from $0$ and $1$.

\beginappendixtheorems

\begin{lemma}
\label{l:as_conv}
Let $(X_1, X_2, \dots, X_n)$ denote the sampled data and $(w_1, w_2, \ldots, w_n)$ be the sampling weights. Let $W_n = \sum_{i=1}^n w_i$ and $g$ be a real valued measurable function such that $\bbE_{\theta_0}|g(X^*)|^2 < \infty$, where $X ^* \sim f_{\theta_0}$. 
Under regularity conditions \ref{c:survey-1} -- \ref{c:survey-3}, as $n \to \infty$,
\begin{equation}
\label{eq:as_conv_0}
    \frac{1}{W_n} \sum_{i=1}^{n} w_i \,g(X_i) ~\to~ \bbE_{\theta_0} [g(X^*)]\quad \asP.
\end{equation}

\end{lemma}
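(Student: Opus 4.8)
The plan is to pass from sums over the sample to sums over the finite population via the inclusion indicators $\udelta_N$. Since the left-hand side of \eqref{eq:as_conv_0} is invariant under rescaling of the weights $\uw_n$, I may assume $w_l = 1/\pi_l$. Expanding the sample sums over the population and recalling $\bbE[\delta_l \mid \uX^*_N] = \pi_l$,
\[
\frac{1}{W_n}\sum_{i=1}^n w_i\, g(X_i) \;=\; \frac{\,N^{-1}\sum_{l=1}^N \pi_l^{-1}\delta_l\, g(X^*_l)\,}{\,N^{-1}\sum_{l=1}^N \pi_l^{-1}\delta_l\,}.
\]
By condition \ref{c:survey-3}, $n\to\infty$ forces $N\to\infty$, so it suffices to establish
\[
T_N(h) \;:=\; \frac1N\sum_{l=1}^N \frac{\delta_l}{\pi_l}\,h(X^*_l) \;\longrightarrow\; \bbE_{\theta_0}[h(X^*)] \qquad \asP
\]
for every measurable $h$ with $\bbE_{\theta_0}|h(X^*)|^2<\infty$: applying this with $h=g$ for the numerator and $h\equiv 1$ for the denominator (whose limit $1$ is nonzero) and invoking the continuous mapping theorem yields \eqref{eq:as_conv_0}.

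To prove the displayed limit for $T_N(h)$, I would split
\[
T_N(h) \;=\; \frac1N\sum_{l=1}^N h(X^*_l) \;+\; \frac1N\sum_{l=1}^N\Bigl(\frac{\delta_l}{\pi_l}-1\Bigr)h(X^*_l) \;=:\; A_N + R_N .
\]
Since $\bbE_{\theta_0}|h(X^*)|^2<\infty$ implies $\bbE_{\theta_0}|h(X^*)|<\infty$, the classical Kolmogorov strong law applied to the \iid\ population sequence $X^*_1,X^*_2,\dots$ gives $A_N\to\bbE_{\theta_0}[h(X^*)]$ almost surely, hence $\asP$; it remains to show $R_N\to0$, $\asP$. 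For $R_N$, I would condition on the population $\uX^*_\infty$ and work on the $P_{\theta_0}$-probability-one event $\mathcal{E}$ on which conditions \ref{c:survey-1}--\ref{c:survey-2} hold and, by the \iid\ strong law, $N^{-1}\sum_{l\le N}h(X^*_l)^2\to\bbE_{\theta_0}[h(X^*)^2]$ and $N^{-1}\sum_{l\le N}|h(X^*_l)|\to\bbE_{\theta_0}|h(X^*)|$. Writing $\zeta_l=(\delta_l/\pi_l-1)h(X^*_l)$ and using $\bbE[\delta_l\mid\uX^*]=\pi_l$, $\delta_l^2=\delta_l$, and $\bbE[\delta_l\delta_k\mid\uX^*]=\pi_{lk}$, one gets $\bbE[\zeta_l\mid\uX^*]=0$ and
\[
\mathrm{Var}(\zeta_l\mid\uX^*)=h(X^*_l)^2\,\frac{1-\pi_l}{\pi_l}\le \gamma\,h(X^*_l)^2 ,\qquad \bigl|\mathrm{Cov}(\zeta_l,\zeta_k\mid\uX^*)\bigr|=|h(X^*_l)h(X^*_k)|\,\Bigl|\frac{\pi_{lk}}{\pi_l\pi_k}-1\Bigr| ,
\]
the variance bound using condition \ref{c:survey-1}. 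Summing, bounding the relative pairwise covariances by $O(N^{-1})$ via condition \ref{c:survey-2}, and using the two Cesàro limits on $\mathcal{E}$,
\[
\bbE\bigl[(N R_N)^2 \mid \uX^*\bigr] \;=\; \sum_{l=1}^N \mathrm{Var}(\zeta_l\mid\uX^*) + \sum_{l\ne k}\mathrm{Cov}(\zeta_l,\zeta_k\mid\uX^*) \;=\; O(N),
\]
so $\bbE[R_N^2\mid\uX^*]=O(N^{-1})$.

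Together with the conditional mean-zero and covariance structure, this places $\{\zeta_l\}$ within the scope of the strong law of large numbers for dependent random variables (Theorem \ref{t:gen_as_conv}); applying it conditionally on the population gives $R_N\to0$ almost surely under the sampling design for every population path in $\mathcal{E}$, and integrating over the population yields $R_N\to0$, $\asP$. (Equivalently, the $L^2$ bound along the subsequence $N_m=m^2$ together with Chebyshev's inequality, the first Borel--Cantelli lemma, and a maximal inequality over the blocks $[N_m,N_{m+1})$ gives the same conclusion.) Combining with $A_N\to\bbE_{\theta_0}[h(X^*)]$ proves the displayed limit for $T_N(h)$, and the continuous-mapping step completes the argument.

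The main obstacle is the two nested layers of randomness: because the sampling is informative, the inclusion probabilities $\pi_l,\pi_{lk}$ are themselves functions of the population $\uX^*_\infty$, so conditions \ref{c:survey-1}--\ref{c:survey-2} are only almost-sure statements and the summands $\zeta_l$ form a triangular array indexed by the population size $N$. Conditioning on $\uX^*_\infty$ first is what makes the covariance computation tractable, and conditions \ref{c:survey-1}--\ref{c:survey-3} are precisely what force $\mathrm{Var}\bigl(\sum_{l\le N}\zeta_l\mid\uX^*\bigr)$ to grow only linearly in $N$, so that $R_N$ vanishes in the limit; verifying the exact hypotheses of the dependent-variable strong law in this conditional setting, and checking that the resulting null set is $P_*$-negligible, is the one step that needs genuine care. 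The reduction to $T_N(h)$, the \iid\ strong law, and the ratio via continuous mapping are routine.
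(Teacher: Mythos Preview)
Your ratio reduction and separate handling of numerator and denominator match the paper exactly. Where you diverge is in proving $T_N(h)\to\bbE_{\theta_0}[h(X^*)]$: you split $T_N=A_N+R_N$, use the \iid\ SLLN for $A_N$, and treat $R_N$ \emph{conditionally} on the population; the paper instead applies Theorem~\ref{t:gen_as_conv} directly and \emph{unconditionally} to $Z_l=\pi_l^{-1}\delta_l\,g(X_l^*)$, computing marginal moments via iterated expectation.

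Your conditional appeal to Theorem~\ref{t:gen_as_conv} has a gap. That theorem requires $\mathrm{Var}(S_N-S_M)\le C(N-M)$ for \emph{all} $N>M\ge 0$ with a single constant $C$. Conditionally on $\uX^*$, the diagonal contribution is bounded by $\gamma\sum_{l=M+1}^N h(X_l^*)^2$, and along almost every \iid\ path the terms $h(X_l^*)^2$ are unbounded, so no path-dependent constant can control every increment (the Ces\`aro limit on $\mathcal E$ controls $\sum_{l\le N}$, not $\sum_{M<l\le N}$ uniformly in $M$). Your parenthetical alternative does work, however: $\bbE[R_N^2\mid\uX^*]=O(N^{-1})$, subsequence $N_m=m^2$, Borel--Cantelli; and the block fill-in needs no maximal inequality, since $|\zeta_l|\le(\gamma+1)|h(X_l^*)|$ by condition~\ref{c:survey-1} and $N_m^{-1}\sum_{N_m<l\le N_{m+1}}|h(X_l^*)|\to 0$ because $N_{m+1}/N_m\to 1$. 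The paper's unconditional route avoids all of this---$\mathrm{Var}(Z_l)\le\gamma\,\bbE_{\theta_0}[g(X^*)^2]$ is a genuine constant, and similarly for the covariances---at the cost of blurring the two layers of randomness that your decomposition keeps separate.
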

\begin{proof}
First, note that the weighted average in \eqref{eq:as_conv_0} can be explicitly written as
\begin{align}
\label{eq:wtd_avg}
    \frac{1}{W_n} \sum_{i=1}^{n} w_i \,g(X_i) ~=~ \frac{\frac{1}{N}\sum_{l=1}^N \pi_l^{-1}\,\delta_l \, g(X^*_l)}{\frac{1}{N}\sum_{l=1}^N \pi_l^{-1}\,\delta_l }\,.
\end{align}
Define $Z_l = \pi_l^{-1}\,\delta_l \, g(X^*_l)$. Then, using iterated expectations we have, for $l \geq 1$,
\begin{align*}
    \bbE(Z_l) &~=~ \bbE \big[\pi_l^{-1}\,\delta_l \, g(X^*_l) \big] ~=~  \bbE_{\theta_0} \  \big[ g(X^*_l)\, \pi_l^{-1}\,  \bbE_{\udelta} \big(\delta_l \, \mid X^*_l \big) \big] ~=~  \bbE_{\theta_0} [g(X_l^*)]\,,\\[1.5ex]
    \text{Var}(Z_l) &~=~ 
    \bbE \big(Z_l^2 \big) - \bbE \big(Z_l \big)^2 ~\leq~ \bbE \big[ \pi_l^{-2}\, \delta_l \, g(X^*_l)^2 \big] ~=~   \bbE_{\theta_0} \  \bbE_{\udelta} \big[~\pi_l^{-2}\,\delta_l \, g(X^*_l)^2 \mid \uX^*_N \big] \\
    &~=~ \bbE_{\theta_0} \big[ g(X^*_l)^2\, ~\pi_l^{-2}\, \bbE_{\udelta} \big(\delta_l \, \mid \uX^*_N \big) \big] ~=~ \bbE_{\theta_0} \big[ ~{\pi_l}^{-1}\ g(X_l^*)^2 \big]\,,
\end{align*}
and for $k (\neq l) \geq 1$,
\begin{align*}
    \text{Cov}(Z_l, Z_k) 
    & ~=~ \bbE(Z_l Z_k) - \bbE (Z_l)\bbE (Z_k) \\
    & ~=~  \bbE \big[ \pi_l^{-1}\, \pi_k^{-1}\,\delta_l \delta_k g(X_l^*)g(X_k^*)\big] ~-~ \bbE_{\theta_0} [g(X_l^*)]\,\bbE_{\theta_0} [g(X_k^*)] \\
    & ~=~ \bbE_{\theta_0} \bigg[ \frac{\pi_{lk} }{\pi_l\,\pi_k}\, g(X_l^*)\,g(X_k^*) \bigg] ~-~ \bbE_{\theta_0} [g(X_l^*)\, g(X_k^*)]  ~=~ \bbE_{\theta_0} \left[ g(X_l^*)\,g(X_k^*) \left(\frac{\pi_{lk}}{\pi_l\, \pi_k} - 1\right) \right]
\end{align*}
We will prove that $\{Z_N:N \geq 1\}$ satisfies the condition in Theorem \ref{t:gen_as_conv} with $r = 1$. 
To that end, define $S_N = \sum_{l=1}^N Z_l$, $N \geq 1$ with $S_0 = 0$, and let $N > M \geq 0$. Then,
\begin{equation}
\label{eq:Var_Sn-Sm}
    \text{Var}\left(S_N - S_M \right) ~=~ \sum_{l=M+1}^N \text{Var}\,(Z_l) ~+~ 2 \sum_{l=M+2}^N \sum_{k=M+1}^{l-1} \text{Cov}(Z_l, Z_k) 
\end{equation}
Now, note that the first term in \eqref{eq:Var_Sn-Sm} can be explicitly written as 
\begin{align}
    \nonumber
    \sum_{l=M+1}^N \text{Var}\,(Z_l) & ~\leq~
    \sum_{l=M+1}^N \bbE_{\theta_0} \left[ ~{\pi_l}^{-1}\, g(X_l^*)^2 \right] \\
    \label{eq:var_term}
    & ~\leq~ \sum_{l=M+1}^N \bbE_{\theta_0} \left[  g(X_l^*)^2 ~\sup_l \, \big({\pi_l}^{-1}\big) \right] 
    ~\leq~ C_1 \,(N-M) \,,
\end{align}
for some suitable constant $C_1$, which follows from the finiteness of $\bbE_{\theta_0}|g(X^*)|^2$ along with condition \ref{c:survey-1}. Next, we write the second term in \eqref{eq:Var_Sn-Sm} as
\begin{align}
    \nonumber
    2 \sum_{l=M+2}^N \sum_{k=M+1}^{l-1} \text{Cov}(Z_l, Z_k) 
    & ~\leq~ 2 \sum_{l=M+2}^N \sum_{k=M+1}^{l-1} \bbE_{\theta_0} \left[ g(X_l^*)\,g(X_k^*) \, \left|\frac{\pi_{lk}}{\pi_l\, \pi_k} - 1\right| \, \right] \\
    \nonumber
    & ~\leq~ 2 \sum_{l=M+2}^N \sum_{k=M+1}^{l-1} \bbE_{\theta_0} \left[ g(X_l^*)\,g(X_k^*) \ \sup_{k \leq l} \left|\frac{\pi_{lk}}{\pi_l\, \pi_k} - 1\right| \ \right] \\
    \nonumber
    & ~\leq~ 2 \left[ \bbE_{\theta_0} \, g(X^*) \right]^2\sum_{l=M+2}^N  l^{-1} (l-M-1) \\
    \label{eq:cov_term}
    & ~\leq~ C_2 \,(N-M) \,,
\end{align}
for some suitable constant $C_2$, which follows from the finiteness of $\bbE_{\theta_0}|g(X^*)|^2$ along with condition \ref{c:survey-2}. Combining \eqref{eq:var_term} and \eqref{eq:cov_term}, we have 
\begin{equation}
\label{eq:var_cond_l1}
    \text{Var}\left(S_N - S_M \right) ~\leq~ C\, (N-M)
\end{equation}
for all $N > M \geq 0$ and some constant $C$. Then as $N \to \infty$, by applying Theorem \ref{t:gen_as_conv} we have,
\begin{equation}
\label{eq:as_num}
    \frac{1}{N}\sum_{l=1}^N \pi_l^{-1} \, \delta_l \,g(X^*_l) ~\to~ \bbE_{\theta_0} [g(X^*)]\quad \asP.
\end{equation}
Taking $g(x) = 1$ for all $x \in \bbR$, we have as $N \to \infty$, 
\begin{equation}
\label{eq:as_denom}
    \frac{1}{N}\sum_{l=1}^N \pi_l^{-1} \, \delta_l  ~\to~ 1\quad \asP.
\end{equation}
Since $n \to \infty$ implies $N \to \infty$, the proof follows by combining \eqref{eq:wtd_avg}, \eqref{eq:as_num} and \eqref{eq:as_denom}.
\end{proof}

\begin{lemma}
\label{l:max}
Let $(X_1, X_2, \dots, X_n)$ denote the sampled data and $(w_1, w_2, \ldots, w_n)$ be sampling weights. Let $W_n = \sum_{i=1}^n w_i$ and $g$ be a real valued measurable function such that $\bbE_{\theta_0}|g(X^*)|^2 < \infty$, where $X ^* \sim f_{\theta_0}$. Under regularity conditions \ref{c:survey-1} -- \ref{c:survey-3}, as $n \to \infty$,
$$
\frac{1}{W_n} \,\max_{1 \leq i \leq n} w_i\, |g(X_i)| ~\to~ 0 \quad \asP.
$$
\end{lemma}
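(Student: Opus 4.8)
The plan is to reduce the claim to the elementary fact that the maximum of $N$ i.i.d.\ summands with finite first moment is $o(N)$ almost surely, after rewriting the sampled quantities in terms of the underlying population.

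First I would pass to the population representation, exactly as in \eqref{eq:wtd_avg}. Since each $w_i$ is proportional to $\pi_i^{-1}$, the proportionality constant cancels in the ratio, and identifying the sampled units with the indices $l$ for which $\delta_l = 1$,
\[
\frac{1}{W_n}\,\max_{1\le i\le n} w_i\,|g(X_i)| ~=~ \frac{\max_{1\le l\le N}\pi_l^{-1}\,\delta_l\,|g(X^*_l)|}{\sum_{l=1}^N \pi_l^{-1}\,\delta_l}\,.
\]
By condition \ref{c:survey-1}, the numerator is bounded above by $\gamma\,\max_{1\le l\le N}|g(X^*_l)|$, and by \eqref{eq:as_denom} (equivalently, Lemma \ref{l:as_conv} applied with $g\equiv 1$) the denominator divided by $N$ converges to $1$ $\asP$. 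Since $n\to\infty$ forces $N\to\infty$ under condition \ref{c:survey-3}, it therefore suffices to show that $N^{-1}\max_{1\le l\le N}|g(X^*_l)| \to 0$ $\asP$.

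Next I would prove this using only that $X^*_1, X^*_2, \dots$ are i.i.d.\ from $f_{\theta_0}$ with $\bbE_{\theta_0}|g(X^*)| < \infty$ (which follows from the assumed finiteness of the second moment). A Borel--Cantelli argument does it: for fixed $\epsilon > 0$,
\[
\sum_{N\ge 1} P_{\theta_0}\!\big(|g(X^*_N)| > \epsilon N\big) ~\le~ \epsilon^{-1}\,\bbE_{\theta_0}|g(X^*)| ~<~ \infty\,,
\]
so $|g(X^*_N)|/N \to 0$ $\asP$; hence, for $P_*$-almost every path there is a finite $N_0$ with $|g(X^*_l)| \le \epsilon l$ for all $l \ge N_0$, and for $N \ge N_0$,
\[
\frac{1}{N}\max_{1\le l\le N}|g(X^*_l)| ~\le~ \frac{1}{N}\max_{1\le l< N_0}|g(X^*_l)| ~+~ \epsilon \ \longrightarrow\ \epsilon\,.
\]
Letting $\epsilon$ run through a sequence decreasing to $0$ gives the claim, and combining with the previous paragraph completes the lemma.

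There is no genuine obstacle here; the work is almost entirely bookkeeping. The one point I would be careful about is that the conclusion splices together three almost-sure statements living on the space governing $(\uX^*_\infty, \udelta_\infty)$ --- the deterministic envelope supplied by condition \ref{c:survey-1}, the normalization \eqref{eq:as_denom}, and the maximal law of large numbers for the i.i.d.\ sequence --- so I would union their exceptional sets into a single $P_*$-null set before asserting convergence $\asP$.
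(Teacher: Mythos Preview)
Your proof is correct but takes a genuinely different route from the paper. Both begin with the same population representation and invoke \eqref{eq:as_denom} for the denominator, but diverge on the numerator. The paper keeps the full dependent sequence $Z_l = \pi_l^{-1}\delta_l\,g(X^*_l)$, reuses the variance bound \eqref{eq:var_cond_l1} established in Lemma~\ref{l:as_conv}, and appeals to Corollary~\ref{c:max_conv} (the dependent-sequence ``maximal SLLN'' derived from Theorem~\ref{t:gen_as_conv}). You instead use condition~\ref{c:survey-1} to strip off the factor $\pi_l^{-1}\delta_l\le\gamma$ and reduce to the i.i.d.\ sequence $|g(X^*_l)|$, then finish with a direct Borel--Cantelli argument needing only a finite first moment. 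Your route is more elementary and, for the numerator, uses strictly less---neither condition~\ref{c:survey-2} nor the second-moment hypothesis is touched there. The paper's route has the advantage of recycling the machinery already built in Lemma~\ref{l:as_conv}, which keeps the supplement uniform in style and avoids introducing a separate i.i.d.\ argument.
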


\begin{proof}
Observe that
\begin{align*}
    \frac{1}{W_n} \,\max_{1 \leq i \leq n} w_i\, |g(X_i)| = \frac{\frac{1}{N} \max_{1 \leq l \leq N} \pi_l^{-1}\, \delta_l \, |g(X^*_l)|}{\frac{1}{N} \sum_{l=1}^N \pi_l^{-1} \delta_l}
\end{align*}
Using \eqref{eq:as_denom}, it suffices to prove $N^{-1} \max_{1 \leq l \leq N} \,{\pi_l}^{-1}\, \delta_l \, |g(X^*_l)| ~\to~ 0 \ \asP$.
To that end, define $Z_l = \pi_l^{-1}\, \delta_l \, g(X^*_l)$. Using  \eqref{eq:var_cond_l1} and \eqref{eq:as_num}, the proof follows from Corollary \ref{c:max_conv}.
\end{proof}

\begin{lemma}
\label{l:as_conv_sq}
Let $(X_1, X_2, \dots, X_n)$ denote the sampled data and $(w_1, w_2, \ldots, w_n)$ be sampling weights. Let $W_n = \sum_{i=1}^n w_i$ and $\tilde{w}_{n,i} = nw_i/W_n$ denote the scaled sampling weights that add up to $n$.
Let $g$ be a real valued measurable function such that $\bbE_{\theta_0}|g(X^*)|^2 < \infty$, where $X ^* \sim f_{\theta_0}$. 
Under regularity conditions \ref{c:survey-1} -- \ref{c:survey-3},
as $n \to \infty$,
\begin{equation}
\label{eq:as_conv_sq}
    \frac{1}{n} \sum_{i=1}^{n} \tilde{w}_{n,i}^2 \,g(X_i) ~\to~ \xi \quad \asP,
\end{equation}
where $\xi ~=~ \lim_{n,N \to \infty}\, nN^{-2}\sum_{l=1}^N \bbE_{\theta_0}\big[ {\pi_l}^{-1} \,g(X^*_l) \big]$.
\end{lemma}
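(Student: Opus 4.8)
The plan is to reduce \eqref{eq:as_conv_sq} to an almost-sure limit statement about population-level averages, in the same spirit as Lemma \ref{l:as_conv}, and then control the dependence among the inclusion indicators via Theorem \ref{t:gen_as_conv} (the SLLN for dependent arrays) together with conditions \ref{c:survey-1}--\ref{c:survey-3}. First I would rewrite the left-hand side in terms of the unobserved population. Since $\tilde{w}_{n,i} = n w_i / W_n$ and $w_i \propto \pi_i^{-1}$, we have
\begin{equation*}
    \frac{1}{n}\sum_{i=1}^n \tilde{w}_{n,i}^2\, g(X_i)
    ~=~ \frac{n}{W_n^2}\sum_{i=1}^n w_i^2\, g(X_i)
    ~=~ \frac{\tfrac{n}{N}\cdot \tfrac1N \sum_{l=1}^N \pi_l^{-2}\,\delta_l\, g(X_l^*)}{\left(\tfrac1N \sum_{l=1}^N \pi_l^{-1}\,\delta_l\right)^2}\,,
\end{equation*}
where I absorb the proportionality constant between $w_i$ and $\pi_i^{-1}$ (it cancels in the normalization). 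By \eqref{eq:as_denom} the denominator converges to $1$ $\asP$, and by Condition \ref{c:survey-3} the factor $n/N \to c$; so it suffices to show
\begin{equation*}
    \frac{1}{N}\sum_{l=1}^N \pi_l^{-2}\,\delta_l\, g(X_l^*) ~-~ \frac{1}{N}\sum_{l=1}^N \bbE_{\theta_0}\!\left[\pi_l^{-1}\, g(X_l^*)\right] ~\to~ 0 \quad \asP,
\end{equation*}
after which multiplying through by $n/N \to c$ and using that $\xi = \lim n N^{-2}\sum_l \bbE_{\theta_0}[\pi_l^{-1} g(X_l^*)]$ gives the claim. Note the key algebraic point: $\bbE_{\udelta}(\delta_l \mid \uX_N^*) = \pi_l$, so $\bbE[\pi_l^{-2}\delta_l g(X_l^*)] = \bbE_{\theta_0}[\pi_l^{-1} g(X_l^*)]$, which is exactly the centering constant appearing in $\xi$.

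Next I would apply Theorem \ref{t:gen_as_conv} with $r=1$ to the array $Z_l := \pi_l^{-2}\,\delta_l\, g(X_l^*)$, centered at its mean. Following the template of Lemma \ref{l:as_conv}, I need the variance bound $\Var(S_N - S_M) \le C(N-M)$ for $S_N = \sum_{l=1}^N (Z_l - \bbE Z_l)$ and all $N > M \ge 0$. The variance terms are handled by $\Var(Z_l) \le \bbE[\pi_l^{-4}\delta_l g(X_l^*)^2] = \bbE_{\theta_0}[\pi_l^{-3} g(X_l^*)^2] \le \gamma^3\, \bbE_{\theta_0}[g(X^*)^2]$, using $\sup_l \pi_l^{-1} \le \gamma$ from Condition \ref{c:survey-1} and $\bbE_{\theta_0}|g(X^*)|^2 < \infty$; this contributes $O(N-M)$. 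For the covariance terms, $\Cov(Z_l, Z_k) = \bbE_{\theta_0}\!\left[\pi_l^{-1}\pi_k^{-1} g(X_l^*) g(X_k^*)\left(\tfrac{\pi_{lk}}{\pi_l\pi_k} - 1\right)\right]$ after conditioning on $\uX_N^*$; bounding $\pi_l^{-1}\pi_k^{-1} \le \gamma^2$, applying Cauchy--Schwarz to $\bbE_{\theta_0}|g(X_l^*) g(X_k^*)| \le \bbE_{\theta_0}[g(X^*)^2]$, and using $\sup_{k<l}|\pi_{lk}/(\pi_l\pi_k) - 1| = O(l^{-1})$ from Condition \ref{c:survey-2}, the double sum $\sum_{l=M+2}^N \sum_{k=M+1}^{l-1}$ is bounded by a constant times $\sum_{l=M+2}^N l^{-1}(l-M-1) \le N - M$. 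Combining gives the required bound, Theorem \ref{t:gen_as_conv} yields the almost-sure convergence to zero, and Slutsky-type manipulation with \eqref{eq:as_denom} and Condition \ref{c:survey-3} finishes the proof.

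The main obstacle, compared to Lemma \ref{l:as_conv}, is that the $\pi_l^{-2}$ factor raises the effective power of the inverse inclusion probabilities, so the variance and covariance bounds now require controlling $\bbE_{\theta_0}[\pi_l^{-3} g(X_l^*)^2]$ and $\pi_l^{-1}\pi_k^{-1}$ rather than just $\pi_l^{-1}$ and $\pi_l^{-2}$; this is exactly why Condition \ref{c:survey-1} is stated with a \emph{uniform} bound $\sup_N \pi_N^{-1} \le \gamma$ rather than merely requiring each $\pi_N$ positive, and it is the reason one only needs $\bbE_{\theta_0}|g(X^*)|^2 < \infty$ rather than a higher moment. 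A secondary subtlety is that $\xi$ is defined through a limit $\lim_{n,N\to\infty} nN^{-2}\sum_l(\cdot)$ that must be assumed to exist (or the statement interpreted along subsequences); I would remark that under Condition \ref{c:survey-3} this limit equals $c \cdot \lim_N N^{-1}\sum_l \bbE_{\theta_0}[\pi_l^{-1} g(X_l^*)]$, which ties back to the $I_w(\theta_0)$ appearing in Theorem \ref{t:cond_AN} when $g$ is taken to be a product of score components.
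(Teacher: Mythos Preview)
Your proposal is correct and follows essentially the same route as the paper: rewrite $\tfrac{1}{n}\sum_i \tilde w_{n,i}^2 g(X_i)$ as $(n/N)\cdot N^{-1}\sum_l \pi_l^{-2}\delta_l g(X_l^*)$ divided by $\bigl(N^{-1}\sum_l \pi_l^{-1}\delta_l\bigr)^2$, apply Theorem~\ref{t:gen_as_conv} with $r=1$ to the sequence $U_l=\pi_l^{-2}\delta_l g(X_l^*)$ after bounding $\Var(U_l)\le \bbE_{\theta_0}[\pi_l^{-3}g(X_l^*)^2]$ via Condition~\ref{c:survey-1} and the cross-covariances via Conditions~\ref{c:survey-1}--\ref{c:survey-2}, then combine with \eqref{eq:as_denom} and Condition~\ref{c:survey-3}. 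The paper's argument is identical in structure; your explicit use of $\sup_l\pi_l^{-1}\le\gamma$ and Cauchy--Schwarz for the covariance bound is in fact a slight sharpening of the exposition there, and your closing remarks on why the uniform bound in Condition~\ref{c:survey-1} is essential and on the existence of the limit defining $\xi$ are apt.
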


\begin{proof}
First, note that the weighted average in \eqref{eq:as_conv_sq} can be explicitly written as
\begin{align}
    \label{eq:wtd_avg_sq} \frac{1}{n} \sum_{i=1}^{n} \tilde{w}_{n,i}^2 \,g(X_i)
    & ~=~ \frac{n}{W_n^2} \sum_{i=1}^{n} {w}_{i}^2 \,g(X_i)~=~ \frac{1}{\big(\frac{1}{N}\sum_{l=1}^N \pi_l^{-1}\,\delta_l \big)^2}\ \frac{n}{N} \ \frac{1}{N}\sum_{l=1}^N \pi_l^{-2}\,\delta_l \, g(X^*_l)
\end{align}
Then, by \eqref{eq:as_denom}, as $N \to \infty$,
\begin{align}
\label{eq:as_denom_sq}
    \bigg(\frac{1}{N}\sum_{l=1}^N \pi_l^{-1}\,\delta_l \bigg)^2 ~\to~ 1\quad \asP.
\end{align}
Define $U_l = \pi_l^{-2}\,\delta_l \, g(X^*_l)$. Following similar steps as in Lemma \ref{l:as_conv}, we have,
\begin{align*}
    \bbE(U_l) ~=~\bbE_{\theta_0} \left[\frac{1}{\pi_l}\,  g(X^*_l)\right] \,,\quad
    \text{Var}(U_l) ~\leq~ \bbE_{\theta_0} \left[\frac{1}{\pi_l^3}\,  g(X^*_l)^2\right]\,,
\end{align*}
and for $l \neq k$,
\begin{align*}
    \text{Cov}(U_l, U_k) &~=~ \bbE(U_l U_k) ~-~ \bbE(U_l) \, \bbE(U_k) \\
    & ~=~ \bbE_{\theta_0} \bigg[ \frac{\pi_{lk} }{\pi_l\,\pi_k}\, \frac{g(X_l^*)\,g(X_k^*) }{\pi_l\,\pi_k}\bigg] ~-~ \bbE_{\theta_0} \bigg[ \frac{g(X_l^*)\,g(X_k^*) }{\pi_l\,\pi_k}\bigg] \\
    & ~=~ \bbE_{\theta_0} \bigg[ \frac{g(X_l^*)\,g(X_k^*) }{\pi_l\,\pi_k}  \left(\frac{\pi_{lk}}{\pi_l\, \pi_k} - 1\right)\bigg]\,.
\end{align*}
Here, $\{U_N:N \geq 1\}$ satisfies the condition in Theorem \ref{t:gen_as_conv} with $r = 1$, \ie 
\begin{equation}
\label{eq:var_cond_l2}
    \text{Var}\left(T_N - T_M \right) ~\leq~ D\, (N-M)
\end{equation}
for all $N > M \geq 0$ and some constant $D$, where $T_N = \sum_{l=1}^N U_l$, $N \geq 1$ with $T_0 = 0$. The proof of \eqref{eq:var_cond_l2}
follows directly along the lines of proving \eqref{eq:var_cond_l1} in Lemma \ref{l:as_conv}, using the finiteness of $\bbE_{\theta_0}|g(X^*)|^2$ along with conditions \ref{c:survey-1} -- \ref{c:survey-3}.
Then as $N \to \infty$, by applying Theorem \ref{t:gen_as_conv}, we have,
\begin{equation}
\label{eq:as_num_sq_1}
     \frac{1}{N}\sum_{l=1}^N \pi_l^{-2} \, \delta_l \,g(X^*_l) ~\to~ \eta \quad \asP,
\end{equation}
where $\eta ~=~ \lim_{N \to \infty}\,N^{-1}\sum_{l=1}^N \bbE_{\theta_0} \big[ {\pi_l}^{-1} \,g(X^*_l) \big]$.
The finiteness of $\eta$ follows from condition \ref{c:survey-1}.
Combining \eqref{eq:as_num_sq_1} with condition \ref{c:survey-3} yields,
\begin{equation}
\label{eq:as_num_sq}
    \frac{n}{N}\ \frac{1}{N}\sum_{l=1}^N \pi_l^{-2} \, \delta_l \,g(X^*_l) ~\to~ \xi\quad \asP.
\end{equation}
The proof follows by combining \eqref{eq:as_denom_sq} and \eqref{eq:as_num_sq}.
\end{proof}

\begin{lemma}
\label{l:max_sq}
Let $(X_1, X_2, \dots, X_n)$ denote the sampled data and $(w_1, w_2, \ldots, w_n)$ be sampling weights. Let $W_n = \sum_{i=1}^n w_i$ and $\tilde{w}_{n,i} = nw_i/W_n$ denote the scaled sampling weights that add up to $n$.
Let $g$ be a real valued measurable function such that $\bbE_{\theta_0}|g(X^*)|^2 < \infty$, where $X ^* \sim f_{\theta_0}$. Under regularity conditions \ref{c:survey-1} -- \ref{c:survey-3},  as $n\to \infty$,
$$
\frac{1}{n} \,\max_{1 \leq i \leq n} \tilde{w}_{n,i}^2\, |g(X_i)| ~\to~ 0\quad \asP.
$$
\end{lemma}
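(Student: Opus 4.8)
The plan is to mirror the argument for Lemma~\ref{l:max}, but with the \emph{squared} scaled weights, recycling the estimates already established inside the proof of Lemma~\ref{l:as_conv_sq}. First I would pass to the population. Since $\tilde{w}_{n,i} = n w_i / W_n$ and each $w_i$ is proportional to $\pi_i^{-1}$, the proportionality constant cancels and, writing everything through the inclusion indicators,
\begin{equation*}
    \frac{1}{n}\max_{1\leq i\leq n}\tilde{w}_{n,i}^2\,|g(X_i)| ~=~ \frac{n}{W_n^2}\max_{1\leq i\leq n} w_i^2\,|g(X_i)| ~=~ \frac{\frac{n}{N}\cdot\frac{1}{N}\max_{1\leq l\leq N}\pi_l^{-2}\,\delta_l\,|g(X^*_l)|}{\big(\frac{1}{N}\sum_{l=1}^N \pi_l^{-1}\delta_l\big)^2}\,.
\end{equation*}
By \eqref{eq:as_denom} the denominator converges to $1$ $\asP$, and by condition~\ref{c:survey-3} the factor $n/N$ stays bounded (and $n\to\infty$ forces $N\to\infty$). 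Hence it is enough to show that $\frac{1}{N}\max_{1\leq l\leq N}\pi_l^{-2}\,\delta_l\,|g(X^*_l)| \to 0$ $\asP$.

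For the reduced statement I would set $U_l = \pi_l^{-2}\,\delta_l\,g(X^*_l)$, which is exactly the array introduced in the proof of Lemma~\ref{l:as_conv_sq}, so that $\max_{1\leq l\leq N}\pi_l^{-2}\delta_l|g(X^*_l)| = \max_{1\leq l\leq N}|U_l|$. The proof of Lemma~\ref{l:as_conv_sq} already verifies, using $\bbE_{\theta_0}|g(X^*)|^2<\infty$ together with conditions~\ref{c:survey-1}--\ref{c:survey-3}, that the partial sums $T_N=\sum_{l=1}^N U_l$ satisfy the variance bound \eqref{eq:var_cond_l2}, namely $\Var(T_N-T_M)\leq D(N-M)$ for all $N>M\geq 0$, and that $\frac{1}{N}\sum_{l=1}^N U_l\to\eta$ $\asP$ (equation \eqref{eq:as_num_sq_1}). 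These are precisely the hypotheses of Corollary~\ref{c:max_conv}, which then gives $\frac{1}{N}\max_{1\leq l\leq N}|U_l|\to 0$ $\asP$. Combining this with the reduction of the first paragraph completes the argument.

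I do not anticipate a genuine obstacle: the proof is a direct transcription of the one for Lemma~\ref{l:max} with $\pi_l^{-1}$ replaced by $\pi_l^{-2}$ and the normalization $W_n$ by $n$. The only points that need a little care are the bookkeeping in the reduction (that the proportionality constant in the weights cancels, and that the maximum over the sampled units equals the $\delta_l$-weighted maximum over the population, which holds because all terms are nonnegative and $n\geq 1$), and confirming that a finite second moment of $g(X^*)$ suffices for the finite variances used in \eqref{eq:var_cond_l2} --- but this follows from $\Var(U_l)\leq\bbE_{\theta_0}\big[\pi_l^{-3}g(X^*_l)^2\big]\leq\gamma^3\,\bbE_{\theta_0}|g(X^*)|^2$ via condition~\ref{c:survey-1}.
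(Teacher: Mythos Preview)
Your proposal is correct and matches the paper's own proof almost line for line: the same reduction to the population via $\tilde{w}_{n,i}^2 = (n/W_n)^2 w_i^2$, the same appeal to \eqref{eq:as_denom} (the paper cites its squared version \eqref{eq:as_denom_sq}) and condition~\ref{c:survey-3} for the prefactors, and then the same definition $U_l=\pi_l^{-2}\delta_l\,g(X^*_l)$ together with the variance bound \eqref{eq:var_cond_l2} and the a.s.\ convergence \eqref{eq:as_num_sq_1} to invoke Corollary~\ref{c:max_conv}.
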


\begin{proof} 
Observe that
\begin{align*}
    \frac{1}{n} \max_{1 \leq i \leq n} \tilde{w}_{n,i}^2 \,|g(X_i)|
    & ~=~ \frac{n}{W_n^2} \max_{1 \leq i \leq n}  {w}_{i}^2 \,|g(X_i)|\\
    & ~=~ \frac{1}{\big(\frac{1}{N}\sum_{l=1}^N \pi_l^{-1}\,\delta_l \big)^2}\ \frac{n}{N} \ \frac{1}{N}\max_{1 \leq l \leq N}\pi_l^{-2}\,\delta_l \, |g(X^*_l)|
\end{align*}
Using \eqref{eq:as_denom_sq} and condition \ref{c:survey-3}, it suffices to prove that ${N^{-1}}\max_{1 \leq l \leq N}\pi_l^{-2}\,\delta_l \, |g(X^*_l)| ~\to~ 0\ \asP$.
The proof follows by defining $U_l = \pi_l^{-2}\, \delta_l \, g(X^*_l)$, and applying Corollary \ref{c:max_conv} with \eqref{eq:var_cond_l2} and \eqref{eq:as_num_sq_1}.
\end{proof}

\begin{lemma}
\label{l:data-wts}
Let $X_1, X_2, \ldots$ denote the sampled data. Let $g$ be a real valued measurable function such that $\bbE_{\theta_0}|g(X^*)|^2 < \infty$, where $X ^* \sim f_{\theta_0}$. Let $\{Y_{n,i}: 1 \leq i \leq n, n\geq 1\}$ and $\{w_n; n \geq 1\}$ denote the triangular array of random variables and the sequence of weights as defined before. Under regularity conditions \ref{c:survey-1} -- \ref{c:survey-3},
$$
\frac{1}{n} \sum_{i=1}^n Y_{n,i} \,g(X_i) \cp \bbE_{\theta_0}[g(X^*)] \quad \asP\,.
$$
\end{lemma}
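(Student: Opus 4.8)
The plan is to condition throughout on the sampled data and sampling weights $(\uX_n,\uw_n)$ and to peel off the contribution of the generated weights from the purely design-driven part. Under S-WLB the coordinates $Y_{n,i}$ are, conditionally on $(\uX_n,\uw_n)$, independent with $\bbE(Y_{n,i}\mid\uX_n,\uw_n)=\tilde{w}_{n,i}$ and $\text{Var}(Y_{n,i}\mid\uX_n,\uw_n)=\tilde{w}_{n,i}^2$ (this being exactly why the Gamma$(1,\tilde{w}_{n,i})$ prescription was chosen). I would therefore write
\[
\frac1n\sum_{i=1}^n Y_{n,i}\,g(X_i)-\bbE_{\theta_0}[g(X^*)]
= \underbrace{\frac1n\sum_{i=1}^n\big(Y_{n,i}-\tilde{w}_{n,i}\big)g(X_i)}_{A_n}
+ \underbrace{\Big(\frac1n\sum_{i=1}^n\tilde{w}_{n,i}\,g(X_i)-\bbE_{\theta_0}[g(X^*)]\Big)}_{B_n},
\]
and show $A_n\cp 0$ and $B_n\cp 0$, both $\asP$; adding the two (via a union bound for conditional probabilities) and invoking the definition of $\cp$ then finishes.

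The term $B_n$ is the easy one: since $\tilde{w}_{n,i}=nw_i/W_n$, we have $\frac1n\sum_i\tilde{w}_{n,i}g(X_i)=\frac1{W_n}\sum_i w_i g(X_i)$, so Lemma~\ref{l:as_conv} (applied to $g$, which has $\bbE_{\theta_0}|g(X^*)|^2<\infty$) gives $B_n\to 0$, $\asP$; as $B_n$ is $\sigma(\uX_n,\uw_n)$-measurable, this is a fortiori conditional convergence. For $A_n$, conditionally on $(\uX_n,\uw_n)$ it is a sum of independent mean-zero terms, so Chebyshev's inequality reduces matters to showing that the conditional variance $n^{-2}\sum_i\tilde{w}_{n,i}^2 g(X_i)^2$ tends to $0$, $\asP$. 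The obvious route — applying Lemma~\ref{l:as_conv_sq} to $g^2$ — would require an extra fourth-moment hypothesis on $g$ that the statement does not grant, so this variance estimate is the one delicate point. The fix I intend to use is the elementary bound $\tilde{w}_{n,i}^2 g(X_i)^2=(\tilde{w}_{n,i}|g(X_i)|)^2\le\big(\max_{1\le k\le n}\tilde{w}_{n,k}|g(X_k)|\big)\,\tilde{w}_{n,i}|g(X_i)|$, which yields
\[
\frac1{n^2}\sum_{i=1}^n\tilde{w}_{n,i}^2 g(X_i)^2
\le\Big(\tfrac1{W_n}\max_{1\le k\le n} w_k|g(X_k)|\Big)\Big(\tfrac1{W_n}\sum_{i=1}^n w_i|g(X_i)|\Big),
\]
after rewriting $\tilde{w}_{n,i}$ in terms of $w_i$ and $W_n$.

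The first factor above tends to $0$, $\asP$, by Lemma~\ref{l:max} applied to $|g|$, while the second converges to $\bbE_{\theta_0}|g(X^*)|<\infty$, $\asP$, by Lemma~\ref{l:as_conv} applied to $|g|$; hence their product, and with it $P(|A_n|>\epsilon\mid\uX_n,\uw_n)$ for every $\epsilon>0$, tends to $0$, $\asP$, i.e.\ $A_n\cp 0$, $\asP$. Combining the two pieces gives the claim. The only step that requires genuine thought is the variance bound for $A_n$ — specifically, avoiding a spurious fourth-moment condition by splitting $\tilde{w}_{n,i}^2 g(X_i)^2$ into a maximum times a (normalized) sum, each of which is already controlled by Lemmas~\ref{l:as_conv} and \ref{l:max}; everything else is routine bookkeeping.
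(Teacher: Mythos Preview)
Your proof is correct and is essentially the same argument as the paper's: the paper packages the Chebyshev step and the ``max times sum'' variance bound into an auxiliary result (Theorem~\ref{t:wtd_prob_conv}) and then invokes it with $a_{n,i}=g(X_i)$, $b_i=w_i$, after verifying its hypotheses via Lemmas~\ref{l:as_conv} and~\ref{l:max}; you simply unfold that same computation inline.
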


\begin{proof} Using Lemma \ref{l:as_conv}, we have 
\begin{align*}
    \frac{1}{W_n}\sum_{i=1}^n w_i \, g(X_i) & ~\to~ \bbE_{\theta_0} [g(X^*)] \quad \asP,\\
    \frac{1}{W_n}\sum_{i=1}^n w_i |g(X_i)| & ~\to~ \bbE_{\theta_0} |g(X^*)| \ < \infty \quad \asP,
\end{align*}
and Lemma \ref{l:max} gives $$\frac{1}{W_n} \,\max_{1 \leq i \leq n} w_i\, |g(X_i)| ~\to~ 0\quad \asP.$$
The proof follows by applying Theorem \ref{t:wtd_prob_conv}, whose conditions are satisfied with $a_{n,i} = g(X_i)$ and $b_i = w_i$.
\end{proof}

Next, we state some regularity conditions on the model that are needed to study the conditional consistency of the maximizers of the weighted likelihood function.

\bigskip

\beginmodel
\textbf{Regularity Conditions 1}
\begin{condition}[Identifiability]
\label{c:model-1}
    For any $\theta_1 \neq \theta_2$, both in $\Theta$, there exists a measurable set $A$ such that $P_{\theta_1}(A) \neq P_{\theta_2}(A)$.
\end{condition}
\begin{condition}
\label{c:model-2}
    For the true density $f_{\theta_0}$ and another density $f_{\theta_1}$ in the model,
    \begin{align*}
        (a) \ \phi ~=~ \int \log \left( \frac{f_{\theta_1}(x)}{f_{\theta_0}(x)}\right)\, f_{\theta_0}(x) \,d\mu(x) ~>~ -\infty\,,\quad
        (b) \int \bigg| \log \left( \frac{f_{\theta_1}(x)}{f_{\theta_0}(x)}\right)\bigg|^2\, f_{\theta_0}(x) \,d\mu(x) ~<~ \infty\,.
    \end{align*}
\end{condition}

The following theorem is an analog of Theorem 4 of \cite{newton1991thesis}. It establishes that, conditioned on the data, the weighted likelihood function converges to its maximum at the true parameter value. In this and the subsequent theorems, we will adhere closely to the proof techniques outlined in Chapter 3 of \cite{newton1991thesis}.

\begin{theorem}
\label{t:wtd_lik_max}
    Let $\theta_1 (\neq \theta_0) \in \Theta$. Let $E_n \in \cA_2$ be defined by
    $$
    E_n(\omega_1) = \left\{\omega_2 \in \Omega_2: \Tilde{L}_n (\theta_0) > \Tilde{L}_n(\theta_1) \right\}\,,\quad \text{ for } \omega_1 \in \Omega_1\,.
    $$
    Under conditions \ref{c:model-1} -- \ref{c:model-2} on the model and conditions \ref{c:survey-1} -- \ref{c:survey-3} on the survey design, as $n \to \infty$, $P \left( \,E_n \mid \uX_n, \uw_n \,\right) \to 1 \ \asP\,.$
\end{theorem}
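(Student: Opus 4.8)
# Proof Proposal for Theorem \ref{t:wtd_lik_max}

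$\textbf{Overall strategy.}$ The plan is to show that $\frac{1}{n}\log\{\tilde L_n(\theta_0)/\tilde L_n(\theta_1)\}$ converges in conditional probability (a.s.\ $P_*$) to a strictly positive constant, namely a Kullback--Leibler-type divergence. Once this holds, the event $E_n$ — on which this normalized log-ratio is positive — has conditional probability tending to $1$, which is exactly the claim. Write
$$
\frac{1}{n}\log\frac{\tilde L_n(\theta_0)}{\tilde L_n(\theta_1)} ~=~ \frac{1}{n}\sum_{i=1}^n Y_{n,i}\,\Big(\text{divide by }\textstyle\sum_k Y_{n,k}\Big)\ \log\frac{f_{\theta_0}(X_i)}{f_{\theta_1}(X_i)},
$$
more precisely $\sum_{i=1}^n g_{n,i}\,\ell_i$ with $\ell_i = \log\{f_{\theta_0}(X_i)/f_{\theta_1}(X_i)\}$ and $g_{n,i}=Y_{n,i}/\sum_k Y_{n,k}$. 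Since $\sum_k Y_{n,k} = n\cdot\frac1n\sum_k Y_{n,k}$ and $\frac1n\sum_k Y_{n,k}\cp 1$ (apply Lemma \ref{l:data-wts} with $g\equiv 1$), it suffices to control $\frac1n\sum_{i=1}^n Y_{n,i}\,\ell_i$.

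$\textbf{Key steps.}$ First, set $g(x)=\log\{f_{\theta_0}(x)/f_{\theta_1}(x)\} = -\log\{f_{\theta_1}(x)/f_{\theta_0}(x)\}$; condition \ref{c:model-2}(b) gives $\bbE_{\theta_0}|g(X^*)|^2<\infty$, so Lemma \ref{l:data-wts} applies and yields
$$
\frac{1}{n}\sum_{i=1}^n Y_{n,i}\,g(X_i) ~\cp~ \bbE_{\theta_0}[g(X^*)] ~=~ -\phi \quad \asP,
$$
where $\phi$ is the quantity in condition \ref{c:model-2}(a). Second, invoke the information inequality: by Jensen's inequality applied to the concave function $\log$, $\phi = \int \log\{f_{\theta_1}/f_{\theta_0}\}\,f_{\theta_0}\,d\mu \le \log\int f_{\theta_1}\,d\mu = 0$, with equality iff $f_{\theta_1}=f_{\theta_0}$ $\mu$-a.e.; identifiability (condition \ref{c:model-1}) together with $\theta_1\neq\theta_0$ rules out equality, so $\phi<0$, i.e.\ $-\phi>0$. (One should note $\phi>-\infty$ by \ref{c:model-2}(a), so $-\phi$ is a genuine positive real number.) Third, combine via Slutsky for convergence in conditional probability: $\frac1n\log\{\tilde L_n(\theta_0)/\tilde L_n(\theta_1)\} = \big(\frac1n\sum_i Y_{n,i}g(X_i)\big)\big/\big(\frac1n\sum_k Y_{n,k}\big) \cp (-\phi)/1 = -\phi > 0$, $\asP$. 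Finally, since $E_n = \{\frac1n\log\{\tilde L_n(\theta_0)/\tilde L_n(\theta_1)\} > 0\}$ and the limit $-\phi$ is strictly positive, $P(E_n\mid \uX_n,\uw_n)\to 1$ $\asP$ follows directly from the definition of convergence in conditional probability (take $\epsilon = -\phi/2$, say).

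$\textbf{Main obstacle.}$ The routine analytic content is light here because Lemma \ref{l:data-wts} already packages the hard work (the SLLN for dependent arrays plus the random-weight averaging argument via Theorem \ref{t:wtd_prob_conv}). The one genuine subtlety is handling convergence in \emph{conditional} probability through a ratio: one needs a version of Slutsky's theorem valid for the conditional-probability mode $\cp$ defined in the supplement, and must be careful that the denominator $\frac1n\sum_k Y_{n,k}\cp 1$ is bounded away from $0$ with conditional probability tending to $1$ so that the ratio is well-behaved — this is immediate since the limit is the constant $1$. A secondary point worth stating explicitly is that $\bbE_{\theta_0}[g(X^*)] = -\phi$ requires $\phi$ to be finite and well-defined, which is why condition \ref{c:model-2}(a) (not just (b)) is needed; part (b) is what licenses the second-moment hypothesis of Lemma \ref{l:data-wts}, so both halves of condition \ref{c:model-2} are used. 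No empirical-process machinery is required.
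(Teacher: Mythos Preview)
Your proposal is correct and follows essentially the same approach as the paper: apply Lemma~\ref{l:data-wts} with $g(x)=\log\{f_{\theta_1}(x)/f_{\theta_0}(x)\}$ (the paper uses this sign, you use the opposite, which is immaterial) to get convergence in conditional probability to $\phi$, then use Jensen plus identifiability (condition~\ref{c:model-1}) to conclude $\phi<0$ strictly. One small simplification the paper makes that you could adopt: since $\sum_k Y_{n,k}>0$ always, the event $E_n$ is \emph{exactly} $\{n^{-1}\sum_i Y_{n,i}\,g(X_i)<0\}$, so Lemma~\ref{l:data-wts} alone finishes the proof and the Slutsky-for-$\cp$ step with $\bar Y_n$ (your ``main obstacle'') is never needed. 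Also note a harmless algebraic slip: your displayed identity should read $\log\{\tilde L_n(\theta_0)/\tilde L_n(\theta_1)\}=\big(\tfrac1n\sum_i Y_{n,i}g(X_i)\big)\big/\big(\tfrac1n\sum_k Y_{n,k}\big)$, without the extra $1/n$ on the left; this does not affect the argument since only the sign matters.
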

\begin{proof}
$E_n(\omega_1) \in \mathcal{A}_2$, since the weighted likelihood is a measurable function. 
Also, $\Tilde{L}_n(\theta_0) > 0$ with $P_1$ probability one (for a fixed set of weights). Thus,
$$E_n(\omega_1) = \left\{ \omega_2 \in \Omega_2 : \sum_{i=1}^{n} g_{n,i} \log f_{\theta_0}(X_i) > \sum_{i=1}^{n} g_{n,i} \log f_{\theta_1}(X_i) \right\}\,,$$ 
where the right hand side (RHS) may equal $-\infty$, but not the left hand side (LHS), with $P_1$ probability 1.  Recall that $g_{n,i} = Y_{n,i}/{\sum_{j=1}^{n} Y_{n,j}}$ and note that
$$
E_n(\omega_1) = \left\{ \omega_2 \in \Omega_2 :~ {n}^{-1} \sum_{i=1}^{n} Y_{n,i} \ g(X_i) ~<~ 0 \right\}\,,
$$
where $g(X_i) ~:=~ \log ~ ({f_{\theta_1}(X_i)}/{f_{\theta_0}(X_i)})$.
Suppose $X ^* \sim f_{\theta_0}$.
By Jensen's inequality,
\begin{equation*}
    \phi ~=~ \bbE_{\theta_0} [g(X^*)] ~\leq~ \log \ \bbE_{\theta_0} \bigg[ \frac{f_{\theta_1}(X)}{f_{\theta_0}(X)} \bigg] ~=~  0 \,,
\end{equation*}
where $\phi \ ( > -\infty)$ is as defined in condition \ref{c:model-2}(a). Condition \ref{c:model-1} ensures that the inequality is strict. The proof follows by observing that 
\begin{equation}
\label{eq:wtd_Vi}
    \frac{1}{W_n} \sum_{i=1}^{n} w_i g(X_i) \as \phi\,,
\end{equation}
which follows by condition \ref{c:model-2}(b) and applying Lemma \ref{l:as_conv}. Using \eqref{eq:wtd_Vi} and Lemma \ref{l:data-wts}, we have
$$
\frac{1}{n} \sum_{i=1}^{n} Y_{n,i} V_i \cp \phi \,(< 0) \quad \asP\,,
$$
which completes the proof.
\end{proof}

If $\Theta$ is finite, then Theorem \ref{t:wtd_lik_max} directly implies that a sequence of maximizers of the weighted likelihood exists, is conditionally consistent and is unique (see Corollary 1 of \cite{newton1991thesis}). The next set of conditions are required to establish the conditional consistency result to a more general class of models.

\bigskip

\textbf{Regularity Conditions 2}
\begin{condition}
\label{c:model-3}
    $\Theta \in \bbR^K$ contains an open ball $B$ containing $\theta_0$.
\end{condition}
\begin{condition}
\label{c:model-4}
    For each $\theta \in B$, all the first partial derivatives of $\log f_\theta(x)$ with respect to the components of $\theta$ exist and are continuous for almost all $x$.
\end{condition}

Note that if $\Tilde{\theta}$ (a maximizer of the weighted likelihood, $\Tilde{L}_n$) is not on the boundary of $\Theta$ and if the log weighted likelihood is differentiable (condition \ref{c:model-4}), then $\Tilde{\theta}$ must be a root of the weighted likelihood equations,
$\frac{\partial}{\partial \theta_k} \log \Tilde{L}_n(\theta) ~=~ 0$, for each $k$.

\subsection{Score and Information corresponding to the Weighted Likelihood}
Before proceeding to prove the result on conditional consistency of a sequence of roots of the weighted likelihood equations, we shall define and study properties of the weighted score and information functions. Define the vector of first partial derivatives and matrix of second partial derivatives (when they exist) of $\log f_\theta(X_i)$ as
\begin{align*}
    \psi_i(\theta) & ~=~ \frac{\partial}{\partial \theta} \log f_\theta(X_i) \quad \in \ \bbR^K \,,\\
    \psi_i^\prime(\theta) & ~=~ \frac{\partial^2}{\partial \theta \partial \theta^\top} \log f_\theta(X_i)  \quad \in \ \bbR^{K \times K}\,.
\end{align*}
We define a weighted score function and a weighted information matrix corresponding to our weighted likelihood, $\Tilde{L}_n (\theta) = \prod_{i=1}^n f_\theta(X_i)^{\,g_{n,i}}$, as follows
\begin{align*}
    \Tilde{S}_{n}(\theta) & ~=~ \sum_{i=1}^n g_{n,i} \, \psi_i(\theta)\,, \quad
    \Tilde{J}_{n}(\theta)  ~=~ - \sum_{i=1}^n g_{n,i} \, \psi_i^\prime(\theta)\,.
\end{align*}
Recall that the powered likelihood \ie the likelihood raised to the power of the scaled sampling weights is given by $$L_{w,n}(\theta) ~=~ \prod_{i=1}^n f_\theta(X_i)^{\tilde{w}_{n,i}} ~=~ \prod_{i=1}^n f_\theta(X_i)^{n{w}_i/W_n}\,.$$
The score and information functions corresponding to the powered likelihood are given by
\begin{align*}
    S_{w,n}(\theta) & ~=~ \frac{1}{W_n} \sum_{i=1}^n w_i \, \psi_i(\theta)\,,\quad
    J_{w,n}(\theta) ~=~ - \frac{1}{W_n} \sum_{i=1}^n w_i \, \psi_i^\prime(\theta)\,.
\end{align*}
We further define the following,
\begin{align*}
    I_{w,n} (\theta) & ~=~ \, \frac{1}{n} \sum_{i=1}^n \tilde{w}_{n,i}^2  \ \psi_i(\theta) \, \psi_i(\theta)^\top\, ,\\
    J(\theta) & ~=~ -  \bbE_{\theta_0} \left[\frac{\partial^2}{\partial \theta \partial \theta^\top} \log f_\theta(X) \right], \quad X ^* \sim f_{\theta_0}\,,\\
    \text{and} \qquad I_{w} (\theta) & ~=~ \lim_{n, N \to \infty} \frac{n}{N}\ \bbE_{\theta_0} \left[\frac{1}{N} \sum_{l=1}^N \frac{1}{\pi_l}  \left( \frac{\partial}{\partial \theta} \log f_\theta (X_l^*) \right) \left(\frac{\partial}{\partial \theta} \log f_\theta (X_l^*) \right)^\top \right] \,.
\end{align*}
To study the weighted score and information functions, we require some smoothness conditions on the model.

\bigskip

\textbf{Regularity Conditions 3}
\begin{condition}
\label{c:model-5}
    All second partial derivatives of $\log f_\theta(x)$ with respect to the components of $\theta$ exist and are continuous for $\theta \in B$ at almost all $x$.
\end{condition}
\begin{condition}
\label{c:model-6}
    For $1 \leq j, k \leq K$, there exists functions $G_j(x)$, $G_{jk}(x)$ and $H_{jk}(x)$, possibly depending on $\theta_0$ and $B$, such that for all $\theta \in B$, the following relations hold for almost all x:
    \begin{align*}
        \text{(a) } & \left| \frac{\partial \log f_\theta(x)}{\partial \theta_j}\right| ~ \leq ~ G_j(x) \ \text{ with } \ \bbE_{\theta_0}[G_{j}(X^*)^4] < \infty\,,\\
        \text{(b) } & \left| \frac{\partial^2 \log f_\theta(x)}{\partial \theta_j \partial \theta_k}\right| ~ \leq ~ G_{jk}(x) \ \text{ with } \ \bbE_{\theta_0}[G_{jk}(X^*)^2] < \infty\,,\\
        \text{(c) } & \left| \frac{\partial \log f_\theta(x)}{\partial \theta_j} \cdot \frac{\partial \log f_\theta(x)}{\partial \theta_k} \right| ~ \leq ~ H_{jk}(x) \ \text{ with } \ \bbE_{\theta_0}[H_{jk}(X^*)^2] < \infty\,.
    \end{align*}
\end{condition}
\begin{condition}
\label{c:model-7}
    $I_w(\theta)$ and $J(\theta)$ are positive definite for $\theta \in B$, and all of their elements are finite.
\end{condition}

The set $B$ in conditions \ref{c:model-5} and \ref{c:model-6} is the same one of the conditions \ref{c:model-3} and \ref{c:model-4}. Note that the continuity of $J(\theta)$ in $B$ follows from conditions \ref{c:model-5} and \ref{c:model-6}(b). 
Using condition \ref{c:model-6}(b), we have $\bbE_{\theta_0}\big|\left[ \psi_i^\prime(\theta_0) \right]_{jk} \big|^2 <\infty $ for each $j, k$, and hence, by Lemma \ref{l:as_conv}, it follows directly that 
\begin{equation}
\label{eq:as_conv_J}
    J_{w,n}(\theta_0) ~\as~ J(\theta_0)\,.
\end{equation}
Conditions \ref{c:model-6}(a) and \ref{c:model-6}(c) ensure that $\bbE_{\theta_0}\big|\left[ \psi_i(\theta_0) \psi_i(\theta_0)^\top \right]_{jk} \big|^2 <\infty $ for each $j,k$, and using Lemma \ref{l:as_conv_sq}, we have
\begin{equation}
\label{eq:as_conv_Iw}
    I_{w,n}(\theta_0) ~\as~ I_w(\theta_0)\,.
\end{equation}
We now proceed to prove some properties of the weighted score and information functions.

\begin{lemma}
\label{l:wtd_score}
Under conditions \ref{c:model-1} -- \ref{c:model-4} and \ref{c:model-6}(a) on the model $\cP_\theta$ and conditions \ref{c:survey-1} -- \ref{c:survey-3} on the survey design $\cP_\Pi$, then as $n \to \infty$,
$$
\big\| \Tilde{S}_n(\theta_0) \big\| \cp 0 \ \asP\,.
$$
\end{lemma}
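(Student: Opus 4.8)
The plan is to show that $\Tilde{S}_n(\theta_0) = \sum_{i=1}^n g_{n,i}\,\psi_i(\theta_0)$ converges to zero in conditional probability, $\asP$, by working component-wise and reducing everything to the preliminary lemmas already established. Fix a coordinate $j \in \{1,\dots,K\}$ and write $\psi_{i,j}(\theta_0) = \partial \log f_{\theta_0}(X_i)/\partial\theta_j$. The key structural observation is the same one used in the proof of Theorem \ref{t:wtd_lik_max}: since $g_{n,i} = Y_{n,i}\big/\sum_{k=1}^n Y_{n,k}$, the normalizing denominator $\sum_{k} Y_{n,k}$ satisfies $n^{-1}\sum_{k=1}^n Y_{n,k} \cp 1$ $\asP$ (this is Lemma \ref{l:data-wts} applied with $g \equiv 1$), so by Slutsky it suffices to prove that the un-normalized version $n^{-1}\sum_{i=1}^n Y_{n,i}\,\psi_{i,j}(\theta_0)$ converges to $0$ in conditional probability, $\asP$.

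First I would verify the integrability hypothesis needed to invoke the data-plus-weights lemma: by condition \ref{c:model-6}(a), $|\psi_{i,j}(\theta_0)| \leq G_j(X_i)$ with $\bbE_{\theta_0}[G_j(X^*)^4] < \infty$, so in particular $\bbE_{\theta_0}|\psi_{i,j}(\theta_0)|^2 < \infty$; thus the function $g(\cdot) = \psi_{\cdot,j}(\theta_0)$ is square-integrable under $f_{\theta_0}$, exactly as required by Lemma \ref{l:data-wts}. Applying that lemma gives
$$
\frac{1}{n}\sum_{i=1}^n Y_{n,i}\,\psi_{i,j}(\theta_0) \cp \bbE_{\theta_0}\big[\psi_{\cdot,j}(\theta_0)\big] \quad \asP.
$$
It then remains to identify the limit as zero. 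Here I would use the standard fact that under the smoothness conditions \ref{c:model-3}--\ref{c:model-4} (existence and continuity of first derivatives) together with the domination in \ref{c:model-6}(a), which legitimizes differentiation under the integral sign, the score has mean zero at the true parameter: $\bbE_{\theta_0}[\partial \log f_{\theta_0}(X^*)/\partial\theta_j] = \partial/\partial\theta_j \int f_\theta(x)\,d\mu(x)\big|_{\theta=\theta_0} = 0$. Combining the display above (for each $j$) with Lemma \ref{l:data-wts} applied to the denominator and Slutsky's theorem yields $\Tilde{S}_n(\theta_0) = \big(n^{-1}\sum_k Y_{n,k}\big)^{-1}\cdot n^{-1}\sum_i Y_{n,i}\psi_i(\theta_0) \cp 0$ coordinate-wise, hence $\|\Tilde{S}_n(\theta_0)\| \cp 0$ $\asP$, since convergence in conditional probability of each coordinate implies it for the Euclidean norm of a fixed-dimensional vector.

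The only genuinely delicate point is the interchange of derivative and integral needed to conclude $\bbE_{\theta_0}[\psi_{i,j}(\theta_0)] = 0$; everything else is bookkeeping that is subsumed by the already-proven Lemmas \ref{l:as_conv}, \ref{l:max}, and \ref{l:data-wts}. In fact, since Lemma \ref{l:as_conv} with $g = \psi_{\cdot,j}(\theta_0)$ together with the mean-zero property gives $S_{w,n}(\theta_0) \as 0$ directly, and Lemma \ref{l:max} gives the negligibility of the maximal weighted term, the cleanest route is to invoke Theorem \ref{t:wtd_prob_conv} (as in the proof of Lemma \ref{l:data-wts}) with $a_{n,i} = \psi_{i,j}(\theta_0)$ and $b_i = w_i$, whose three hypotheses — almost-sure convergence of the weighted average of $a_{n,i}$ to a finite limit, almost-sure convergence of the weighted average of $|a_{n,i}|$ to a finite limit, and almost-sure negligibility of the maximal term — are furnished respectively by Lemma \ref{l:as_conv} (twice) and Lemma \ref{l:max}. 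This makes the proof essentially a one-line citation chain, so I would write it in that compressed form.
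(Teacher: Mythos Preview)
Your proposal is correct and follows essentially the same approach as the paper: decompose $\tilde{S}_n(\theta_0)$ coordinate-wise, use $\bar{Y}_n \cp 1$ $\asP$ (Lemma~\ref{l:data-wts} with $g\equiv 1$) to reduce to the un-normalized sum, and then apply Lemma~\ref{l:data-wts} to each coordinate with $g(X_i)=[\psi_i(\theta_0)]_k$, invoking condition~\ref{c:model-6}(a) for square-integrability and the mean-zero property of the score. The paper's proof is marginally terser---it simply asserts $\bbE_{\theta_0}[\psi_i(\theta_0)]_k=0$ without spelling out the differentiation-under-the-integral justification you flag---but the logical skeleton is identical.
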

\begin{proof} Letting $\Bar{Y}_n ={n}^{-1} \sum_{i=1}^n Y_{n,i}$, we begin with the definition of the weighted score.
\begin{align*}
    \big\| \Tilde{S}_n(\theta_0) \big\|^2 
    & ~=~ \frac{1}{\Bar{Y}_n^2} \ \bigg\|\, \frac{1}{n} \sum_{i=1}^n Y_{n,i} \, \psi_i(\theta_0)\, \bigg\|^2, ~=~ \frac{1}{\Bar{Y}_n^2} \ \sum_{k=1}^K \left( \frac{1}{n} \sum_{i=1}^n Y_{n,i} \left[ \psi_i(\theta_0) \right]_k\right)^2
\end{align*}
Observe that by choosing $g(x) = 1$ for all $x \in \bbR$, in Lemma \ref{l:data-wts}, we have 
\begin{equation}
\label{eq:conv_Yn_bar}
    \Bar{Y}_n \cp 1 \quad \asP\,.
\end{equation}
Therefore by Lemma \ref{l:conv_properties} it suffices to prove that for any $k = $, $1 \leq k \leq K$,
\begin{equation}
\label{eq:wtdscore_conv}
    \frac{1}{n} \sum_{i=1}^n Y_{n,i} \left[ \psi_i(\theta_0) \right]_k \cp 0 \quad \asP\,.
\end{equation}
By condition \ref{c:model-6}(a), $\bbE_{\theta_0}\big|\left[ \psi_i(\theta_0) \right]_k \big|^2 <\infty $ and $\bbE_{\theta_0} \left[ \psi_i(\theta_0) \right]_k = 0$. Thus, \eqref{eq:wtdscore_conv} follows by applying Lemma \ref{l:data-wts} with $g(X_i) = \left[ \psi_i(\theta_0) \right]_k$\,.
\end{proof}

\medskip
The next three lemmas state properties of the weighted information function.

\begin{lemma}
\label{l:wtd_info_1}
Under conditions \ref{c:model-1} -- \ref{c:model-5} and \ref{c:model-6}(b) on the model and conditions \ref{c:survey-1} -- \ref{c:survey-3} on the survey design, as $n \to \infty$,
\begin{equation}
\label{eq:conv_wtd_info}
    \big\| \Tilde{J}_n(\theta) - J(\theta) \big\| ~\cp~ 0 \quad \asP\,.
\end{equation}
Moreover, if the sample information function corresponding to the powered likelihood $J_{w,n}(\theta)$ converges $\asP$ to $ J(\theta)$ uniformly in $\theta \in B$, then the convergence in \eqref{eq:conv_wtd_info} is also uniform in $\theta$ for $\theta \in B$.
\end{lemma}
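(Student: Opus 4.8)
The plan is to prove the pointwise assertion coordinate by coordinate from the preliminary lemmas, and then upgrade to the uniform statement via a decomposition that isolates the single term that genuinely fluctuates with the random weights. Write $\Bar{Y}_n = n^{-1}\sum_{i=1}^n Y_{n,i}$, so that $g_{n,i} = Y_{n,i}/(n\Bar{Y}_n)$ and, for each pair of coordinates $(j,k)$,
$$
\big[\Tilde{J}_n(\theta)\big]_{jk} ~=~ -\,\frac{1}{\Bar{Y}_n}\cdot\frac{1}{n}\sum_{i=1}^n Y_{n,i}\,\big[\psi_i'(\theta)\big]_{jk}\,.
$$
By condition \ref{c:model-6}(b), $\big|[\psi_i'(\theta)]_{jk}\big|\leq G_{jk}(X_i)$ with $\bbE_{\theta_0}G_{jk}(X^*)^2<\infty$, so the data function $g(X_i)=[\psi_i'(\theta)]_{jk}$ has finite second moment under $f_{\theta_0}$ and $\bbE_{\theta_0}[\psi'(\theta)]_{jk}=-[J(\theta)]_{jk}$ by the definition of $J$. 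Lemma \ref{l:data-wts} then gives $n^{-1}\sum_i Y_{n,i}[\psi_i'(\theta)]_{jk}\cp -[J(\theta)]_{jk}$ $\asP$, and taking $g\equiv 1$ in the same lemma gives $\Bar{Y}_n\cp 1$ $\asP$. Combining these through Lemma \ref{l:conv_properties} (the Slutsky and continuous-mapping properties of convergence in conditional probability) yields $[\Tilde{J}_n(\theta)]_{jk}\cp[J(\theta)]_{jk}$ $\asP$ for every $(j,k)$; as there are only $K^2$ entries, this is exactly $\|\Tilde{J}_n(\theta)-J(\theta)\|\cp 0$ $\asP$, which is \eqref{eq:conv_wtd_info}.

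For the uniform claim, assume throughout (shrinking $B$ if necessary) that $\Bar{B}$ is compact and contained in $\Theta$, so that the continuous matrix-valued map $J$ is bounded on $\Bar{B}$. The key observation is that, conditionally on $(\uX_n,\uw_n)$, $\bbE[Y_{n,i}\mid\uX_n,\uw_n]=\tilde{w}_{n,i}$, hence $n^{-1}\sum_i\tilde{w}_{n,i}\psi_i'(\theta)=-J_{w,n}(\theta)$ is the conditional mean of $n^{-1}\sum_i Y_{n,i}\psi_i'(\theta)$; this suggests the decomposition
$$
\Tilde{J}_n(\theta)-J(\theta) ~=~ \frac{1}{\Bar{Y}_n}\big(J_{w,n}(\theta)-J(\theta)\big) ~+~ \Big(\frac{1}{\Bar{Y}_n}-1\Big)J(\theta) ~-~ \frac{1}{\Bar{Y}_n}\,R_n(\theta),\qquad R_n(\theta):=\frac{1}{n}\sum_{i=1}^n\big(Y_{n,i}-\tilde{w}_{n,i}\big)\psi_i'(\theta).
$$
Under the added hypothesis $\sup_{\theta\in B}\|J_{w,n}(\theta)-J(\theta)\|\to 0$ $\asP$ together with $\Bar{Y}_n\cp 1$ $\asP$, the first term on the right is uniformly $\cp 0$; the second is uniformly $\cp 0$ because $\Bar{Y}_n^{-1}-1\cp 0$ and $\sup_{\theta\in\Bar B}\|J(\theta)\|<\infty$. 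So the whole claim reduces to showing $\sup_{\theta\in B}\|R_n(\theta)\|\cp 0$ $\asP$.

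That last reduction is the main obstacle, and I would treat it by a covering (uniform law of large numbers) argument that exploits that the weights $Y_{n,i}$ do not depend on $\theta$. Cover $\Bar{B}$ by finitely many balls of radius $\delta$ with centres $\theta_1,\dots,\theta_M$. At the centres, $\|R_n(\theta_m)\|\cp 0$ $\asP$ by the pointwise reasoning of the first paragraph (a finite maximum of terms each $\cp 0$ is still $\cp 0$). For the oscillation, let $\omega_\delta$ be the measurable map $x\mapsto\sup\{\|\tfrac{\partial^2}{\partial\theta\partial\theta^\top}\log f_\theta(x)-\tfrac{\partial^2}{\partial\theta\partial\theta^\top}\log f_{\theta'}(x)\|:\theta,\theta'\in\Bar{B},\ \|\theta-\theta'\|\leq\delta\}$ (measurable by separability of $\Bar{B}$ and continuity, condition \ref{c:model-5}), so that $\|\psi_i'(\theta)-\psi_i'(\theta_m)\|\leq\omega_\delta(X_i)$ whenever $\|\theta-\theta_m\|\leq\delta$, and $\omega_\delta(X_i)\leq 2\sum_{j,k}G_{jk}(X_i)=:2\Gamma(X_i)$ with $\bbE_{\theta_0}\Gamma(X^*)^2<\infty$; then, for $\|\theta-\theta_m\|\leq\delta$,
$$
\big\|R_n(\theta)-R_n(\theta_m)\big\| ~\leq~ \frac{1}{n}\sum_{i=1}^n\big(Y_{n,i}+\tilde{w}_{n,i}\big)\,\omega_\delta(X_i).
$$
By dominated convergence (the envelope $2\Gamma$ is $f_{\theta_0}$-integrable and $\omega_\delta(x)\downarrow 0$ as $\delta\downarrow 0$ at almost every $x$, since $\theta\mapsto\frac{\partial^2}{\partial\theta\partial\theta^\top}\log f_\theta(x)$ is uniformly continuous on the compact set $\Bar{B}$), $\bbE_{\theta_0}\omega_\delta(X^*)\downarrow 0$; and by Lemma \ref{l:data-wts} applied to the $Y$-weighted sum and Lemma \ref{l:as_conv} to the $\tilde{w}$-weighted sum, the right-hand side converges in conditional probability to $2\,\bbE_{\theta_0}\omega_\delta(X^*)$ $\asP$. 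Hence, for each fixed $\delta>0$, $\limsup_n\sup_{\theta\in B}\|R_n(\theta)\|\leq 2\,\bbE_{\theta_0}\omega_\delta(X^*)$ in conditional probability, and letting $\delta\downarrow 0$ along $\delta=1/m$ finishes the argument. The only genuine book-keeping I anticipate is tracking the exceptional $P_*$-null sets across the countably many scales $\delta=1/m$ and across the finitely many net points (a countable union, still null), together with the routine measurability check for $\omega_\delta$.
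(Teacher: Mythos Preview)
Your pointwise argument is essentially identical to the paper's: both apply Lemma \ref{l:data-wts} entrywise to $g(X_i)=[\psi_i'(\theta)]_{jk}$ and combine with $\bar Y_n\cp 1$.

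For the uniform statement, your route is correct but genuinely different from, and heavier than, the paper's. The paper does not use a covering argument at all. Instead, it works directly with $A_n(\theta):=n^{-1}\sum_i Y_{n,i}\,s_i(\theta)$ where $s_i(\theta)=-[\psi_i'(\theta)]_{uv}$, and applies Chebyshev's inequality to the conditional probability:
\[
P\big(|A_n(\theta)-[J(\theta)]_{uv}|>\epsilon\,\big|\,\uX_n,\uw_n\big)\ \le\ \frac{1}{\epsilon^2}\Big(\tfrac{1}{n^2}\sum_i \tilde w_{n,i}^2\,s_i(\theta)^2\ +\ \big([J_{w,n}(\theta)]_{uv}-[J(\theta)]_{uv}\big)^2\Big),
\]
using that $\bbE[Y_{n,i}\mid\uX_n,\uw_n]=\tilde w_{n,i}$ and $\mathrm{Var}(Y_{n,i}\mid\uX_n,\uw_n)=\tilde w_{n,i}^2$. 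The second term vanishes uniformly by hypothesis. The key point is that the first (variance) term is bounded, via condition \ref{c:model-6}(b), by the $\theta$-free quantity $W_n^{-2}\sum_i w_i^2\,G_{uv}(X_i)^2\le \big(W_n^{-1}\max_i w_iG_{uv}(X_i)\big)\big(W_n^{-1}\sum_i w_iG_{uv}(X_i)\big)\to 0$ $\asP$ by Lemmas \ref{l:max} and \ref{l:as_conv}. Uniformity thus comes for free from the envelope, with no net, no oscillation function $\omega_\delta$, and no bookkeeping of null sets across scales. Your decomposition into $(J_{w,n}-J)/\bar Y_n$, $(\bar Y_n^{-1}-1)J$, and $R_n/\bar Y_n$ is sound, and your covering argument for $\sup_\theta\|R_n(\theta)\|$ does work (with the measurability and countable-null-set caveats you anticipate), but it is substantially more effort than needed: a one-line Chebyshev bound on $R_n(\theta)$ entrywise already gives $\mathrm{Var}([R_n(\theta)]_{uv}\mid\uX_n,\uw_n)\le W_n^{-2}\sum_i w_i^2 G_{uv}(X_i)^2$, which is the paper's shortcut.
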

\begin{proof}
By Lemma \ref{l:mat_conv}, it suffices to show that for all $u,v$,
\begin{equation}
\label{eq:conv_wtd_info_uv}
    P\left( \left| [\tilde{J}_n(\theta)]_{uv} - [J(\theta)]_{uv} \right| < \epsilon \; \Big| \; \uX_n \right) \to 1 \quad \asP.
\end{equation}
The $(u,v)^{\text{th}}$ element of $\tilde{J}_n(\theta)$ is:
\begin{equation*}
    [\tilde{J}_n(\theta)]_{uv} = -\sum_{i=1}^{n} g_{n,i} \left[ \psi_i'(\theta) \right]_{uv}
= -\frac{1}{n \bar{Y}_n} \sum_{i=1}^{n} Y_{n,i} \left[ \psi_i'(\theta) \right]_{uv}
\end{equation*}
The corresponding element of $J(\theta)$ is: $\left[J(\theta) \right]_{uv} = - \bbE_{\theta_0} \left[ \frac{\partial^2}{\partial \theta_u \partial \theta_v}\log f_\theta(X) \right]$, where $X ^* \sim f_{\theta_0}$.
Condition \ref{c:model-6}(b) ensures that $\bbE \left| \left[  \psi_i'(\theta)  \right]_{uv}\right|^2 < \infty$. So applying Lemma \ref{l:data-wts} with $g(X_i) = \left[ \psi_i'(\theta) \right]_{uv}$, we have
\begin{equation}
\label{eq:conv_wtd_psi}
    \frac{1}{n} \sum_{i=1}^{n} Y_{n,i} \left[ \psi_i'(\theta) \right]_{uv} ~\cp~ \left[J(\theta) \right]_{uv} \quad \asP
\end{equation}
The proof of \eqref{eq:conv_wtd_info_uv} follows by combining \eqref{eq:conv_wtd_psi} and \eqref{eq:conv_Yn_bar}. The uniform convergence in \eqref{eq:conv_wtd_info} follows by proving uniform convergence in \eqref{eq:conv_wtd_psi}. To that end, let $s_i(\theta) := - [\psi_i^\prime(\theta)]_{uv}$  and $\epsilon>0$ be given. Applying Markov's inequality, we have
\begin{align*}
    P\left( \bigg| \frac{1}{n} \sum_{i=1}^{n} Y_{n,i} s_i(\theta) - [J(\theta)]_{uv} \bigg| > \epsilon \mid \uX_n, \uw_n \right)
     ~\leq~ \frac{1}{\epsilon^2} \bbE \left( \left[ \frac{1}{n} \sum_{i=1}^{n} Y_{n,i} s_i(\theta) - [J(\theta)]_{uv} \right]^2 ~\bigg|~ \uX_n \right)\\
    ~=~ \frac{1}{\epsilon^2} \ \bbE \left( \frac{1}{n} \sum_{i=1}^{n} Y_{n,i} s_i(\theta) - [J_{w,n}(\theta)]_{uv} + [J_{w,n}(\theta)]_{uv} - [J(\theta)]_{uv} ~\bigg|~ \uX_n  \right)^2
\end{align*}
\begin{align*}
    & ~=~ \frac{1}{n^2 \epsilon^2} \sum_{i=1}^{n} s_i(\theta)^2 \text{ Var}(Y_{n,i}) ~+~  \frac{1}{\epsilon^2} \big( [J_{w,n}(\theta)]_{uv} - [J(\theta)]_{uv} \big)^2\\
    & ~=~  \frac{1}{\epsilon^2 \, W_n^2} \sum_{i=1}^{n} {w_i^2}\ s_i(\theta)^2 ~+~ \frac{1}{\epsilon^2} \left( [J_{w,n}(\theta)]_{uv} - [J(\theta)]_{uv} \right)^2 
\end{align*}
The second term converges uniformly in $\theta$ to $0$ \ $\asP$, by assumption. The first term converges to $0$ \ $\asP$ by using condition \ref{c:model-6}(b) and observing that 
\begin{align*}
    \frac{1}{\epsilon^2 \, W_n^2} \sum_{i=1}^{n} {w_i^2}\ s_i(\theta)^2 
    & ~\leq~  \frac{1}{W_n^2} \sum_{i=1}^{n} {w_i^2}\, [G_{uv}(X_i)]^2 \\
    & ~\leq~ \left(\frac{1}{W_n} \max_{1 \leq i \leq n} w_i \, G_{uv}(X_i) \right) \left( \frac{1}{\epsilon^2\, W_n} \sum_{i=1}^{n} w_i \,G_{uv}(X_i) \right) \to 0 \quad \asP\,,
\end{align*}
which follows since the second factor remains finite by Lemma \ref{l:as_conv} and the first factor converges to $0$ \ $\asP$ by Lemma \ref{l:max}.
\end{proof}

A final set of regularity conditions is needed for conditional consistency and conditional asymptotic normality.

\medskip

\textbf{Regularity Conditions 4}
\begin{condition}
\label{c:model-8}
    All third partial derivatives of $\log f_\theta(x)$ with respect to the components of $\theta$ exist and are continuous for $\theta \in B$ at almost all $x$.
\end{condition}
\begin{condition}
\label{c:model-9}
    For $1 \leq j,k,l \leq K$, there exists a function $G_{jkl}(x)$ such that for all $\theta \in B$,
    $$\left| \frac{\partial^3 \log f_\theta(x)}{\partial \theta_j \partial \theta_k \partial \theta_l}\right| ~ \leq ~ G_{jkl}(x) \ \text{ with } \ \bbE_{\theta_0}[G_{jkl}(X^*)^2] < \infty\,.$$
\end{condition}
\begin{condition}
\label{c:model-10}
    For $1 \leq j,k,l \leq K$, there exists a function $F_{jkl}(x)$ such that for all $\theta \in B$,
    $$\left| \frac{\partial \log f_\theta(x)}{\partial \theta_j} \cdot \frac{\partial^2 \log f_\theta(x)}{\partial \theta_k \partial \theta_l}\right| ~ \leq ~ F_{jkl}(x) \ \text{ with } \ \bbE_{\theta_0}[F_{jkl}(X^*)^2] < \infty\,.$$
\end{condition}

The following lemma is similar to Lemma \ref{l:wtd_info_1}, but differs in that the point at which the weighted likelihood is evaluated can vary with $n$.
    
\begin{lemma}
\label{l:wtd_info_2}
    If $\htheta$ is a strongly consistent estimator of $\theta_0$ and if conditions \ref{c:model-1} -- \ref{c:model-9} hold on the model and conditions \ref{c:survey-1} -- \ref{c:survey-3} hold on the survey design, then for all $\epsilon>0$, as $n \to \infty$,
    $$P\left( \big\| [\tilde{J}_n(\htheta)] - [J(\theta_0)] \big\| > \epsilon \mid \uX_n, \uw_n \right) \to 0 \quad \asP.$$
\end{lemma}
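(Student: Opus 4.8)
The plan is to bound $\|\tilde{J}_n(\htheta) - J(\theta_0)\|$ by the triangle inequality, writing it as $\|\tilde{J}_n(\htheta) - \tilde{J}_n(\theta_0)\| + \|\tilde{J}_n(\theta_0) - J(\theta_0)\|$, and to treat the two summands separately. The second summand tends to $0$ in conditional probability $\asP$ immediately from Lemma \ref{l:wtd_info_1} evaluated at the fixed point $\theta=\theta_0$ (with $J$ continuous there), so the real work is to show $\|\tilde{J}_n(\htheta) - \tilde{J}_n(\theta_0)\| \cp 0\ \asP$; the two pieces will then be recombined through the conditional analogue of Slutsky's theorem, Lemma \ref{l:conv_properties}.

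For the remaining term I would argue entrywise. Fix indices $u,v$. Since $\htheta$ is \emph{strongly} consistent, for $P_*$-almost every sample path we have $\htheta \in B$ for all large $n$, so the line segment joining $\theta_0$ to $\htheta$ stays in $B$, where by condition \ref{c:model-8} the real-valued map $\theta \mapsto [\psi_i'(\theta)]_{uv}$ is $C^1$. Applying the one-dimensional mean value theorem to $t \mapsto [\psi_i'(\theta_0 + t(\htheta-\theta_0))]_{uv}$ produces, for each $i$, an intermediate point $\bar\theta_{n,i}$ on that segment with
$$
[\psi_i'(\htheta)]_{uv} - [\psi_i'(\theta_0)]_{uv} ~=~ \sum_{l=1}^K \big(\htheta_l - \theta_{0,l}\big)\,\big[\psi_i''(\bar\theta_{n,i})\big]_{uvl}\,,
$$
where $\psi_i''$ collects the third partial derivatives of $\log f_\theta(X_i)$. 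Using $[\tilde{J}_n(\theta)]_{uv} = -\sum_i g_{n,i}[\psi_i'(\theta)]_{uv}$ and the domination $|[\psi_i''(\theta)]_{uvl}| \leq G_{uvl}(X_i)$ from condition \ref{c:model-9} (valid over all of $B$, hence at each $\bar\theta_{n,i}$), this gives
$$
\big|[\tilde{J}_n(\htheta)]_{uv} - [\tilde{J}_n(\theta_0)]_{uv}\big| ~\leq~ \sqrt{K}\,\|\htheta - \theta_0\|\ \sum_{l=1}^K \sum_{i=1}^n g_{n,i}\, G_{uvl}(X_i)\,.
$$

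Since condition \ref{c:model-9} gives $\bbE_{\theta_0}|G_{uvl}(X^*)|^2 < \infty$, Lemma \ref{l:data-wts} yields $n^{-1}\sum_i Y_{n,i} G_{uvl}(X_i) \cp \bbE_{\theta_0}[G_{uvl}(X^*)] < \infty\ \asP$; combined with $\bar Y_n \cp 1\ \asP$ from \eqref{eq:conv_Yn_bar} and Lemma \ref{l:conv_properties}, the normalized sum $\sum_i g_{n,i} G_{uvl}(X_i)$ converges in conditional probability to a finite limit $\asP$, hence is bounded in conditional probability $\asP$. Because $\htheta$ is $(\Omega_1,\cA_1)$-measurable, $\|\htheta-\theta_0\|$ is a deterministic scalar given $(\uX_n,\uw_n)$ and, by strong consistency, $\|\htheta-\theta_0\|\to 0$ a.s. $P_*$; multiplying a conditionally bounded sequence by this vanishing deterministic factor forces $|[\tilde{J}_n(\htheta)]_{uv} - [\tilde{J}_n(\theta_0)]_{uv}| \cp 0\ \asP$, and a union bound over the finitely many pairs $(u,v)$ finishes the first term. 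The hard part will be the bookkeeping around the random, index-dependent intermediate points $\bar\theta_{n,i}$: one must invoke strong (not merely conditional) consistency of $\htheta$ to guarantee that all of them lie in the fixed neighbourhood $B$ simultaneously once $n$ is large, so that the single square-integrable envelope $G_{uvl}$ of condition \ref{c:model-9} dominates every summand; after that, the only subtlety is marrying the ``converges in conditional probability to a finite constant'' behaviour of the weighted averages with the ``a.s.\ $\to 0$'' behaviour of $\|\htheta-\theta_0\|$, which is precisely the conditional Slutsky-type statement in Lemma \ref{l:conv_properties}.
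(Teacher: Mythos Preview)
Your proof is correct and uses the same core ingredients as the paper --- a first-order Taylor expansion of $[\psi_i'(\theta)]_{uv}$ around $\theta_0$, the third-derivative envelope $G_{uvl}$ from condition~\ref{c:model-9}, and convergence of weighted averages via Lemma~\ref{l:data-wts} --- but organizes them differently. The paper does \emph{not} split $\tilde{J}_n(\htheta)-J(\theta_0)$ through the intermediate point $\tilde{J}_n(\theta_0)$; instead it works directly with $g_i(\theta)=[\psi_i'(\theta)]_{jk}$, uses the Taylor expansion to verify the three hypotheses of Theorem~\ref{t:wtd_prob_conv} (namely \eqref{eq:psi(i)}--\eqref{eq:psi(iii)}) for the \emph{deterministic} $w_i$-weighted averages of $g_i(\htheta)$, and then invokes that theorem once to pass to the $Y_{n,i}$-weighted version. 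Your route is a bit more modular: you reuse Lemma~\ref{l:wtd_info_1} at the fixed point $\theta_0$ for one piece and handle the random-$\theta$ correction separately by multiplying a conditionally bounded weighted envelope sum by the a.s.-vanishing scalar $\|\htheta-\theta_0\|$. Both arguments are valid; yours avoids re-verifying the max condition \eqref{eq:psi(iii)}, while the paper's avoids the explicit triangle-inequality split and the appeal to Lemma~\ref{l:wtd_info_1}. (Minor note: the $\sqrt{K}$ in your displayed bound is harmless but unnecessary --- $|\htheta_l-\theta_{0,l}|\le \|\htheta-\theta_0\|$ already gives the estimate without it.)
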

\begin{proof}
Let $R = \Tilde{J}_n(\htheta) - J(\theta_0)$. By Lemma \ref{l:mat_conv}, it suffices to show that for each $1 \leq j, k \leq K$,
\begin{equation}
\label{eq:rem_wtd_info}
    P \big( \ |[R]_{jk}| > \epsilon \mid \uX_n, \uw_n \ \big) \to 0 \quad \asP\,.
\end{equation}
By definition, the components of $R$ have the following form
\begin{equation*}
    [R]_{jk} ~=~ - \frac{1}{n \bar{Y}_n} \sum_{i=1}^n Y_{n,i}\, [\psi_i^\prime (\htheta)]_{jk} ~+~ [J (\theta_0)]_{jk}
\end{equation*}
Since $\bar{Y}_n \cp 1 \ \asP$, therefore if we can show that as $n \to \infty$,
\begin{align}
    \label{eq:psi(i)}
    \frac{1}{W_n} \sum_{i=1}^n w_i \ \big|[\psi_i^\prime (\htheta)]_{jk} \big| & ~\to~ c, \quad 0<c<\infty\,, \\
    \label{eq:psi(ii)}
    \frac{1}{W_n} \sum_{i=1}^n w_i \ [\psi_i^\prime (\htheta)]_{jk}  & ~\to~ -[J(\theta_0)]_{jk}\,,\\
    \label{eq:psi(iii)}
    \frac{1}{W_n} \max_{1\leq i\leq n} w_i \ \big|[\psi_i^\prime (\htheta)]_{jk} \big| & ~\to~ 0\,,
\end{align}
then \eqref{eq:rem_wtd_info} follows by applying Theorem \ref{t:wtd_prob_conv}. To prove them, consider the following Taylor expansion of $[\psi_i^\prime (\htheta)]_{jk}$ about $\theta_0$ evaluated at $\htheta$. Let $g_i(\theta) = [\psi_i^\prime (\theta)]_{jk}$.
\begin{equation*}
    g_i(\htheta) ~=~ g_i(\theta_0) ~+~ g_i^\prime(\htheta, \theta_0) \,(\htheta - \theta_0)
\end{equation*}
where $g_i^\prime(\htheta, \theta_0)$ denotes the vector of partial derivatives of $g_i(\theta)$ evaluated at some point $\theta_i^*$ on the line between $\htheta$ and $\theta_0$. The expansion is well defined by the differentiability conditions \ref{c:model-4}, \ref{c:model-5} and \ref{c:model-8}.

Considering the points $\omega_1 \in \Omega_1$ where $\htheta$ converges to $\theta_0$, we can find $N_1(\omega_1)$ such that for all $n > N_1(\omega_1)$, for all $1 \leq k \leq K$, $\big|[\htheta(\omega_1)]_k - [\theta_0]_k \big| < \delta$ for some $\delta > 0$, so that each third order partial derivative of the log-likelihood at $\theta^*$ can by bounded by a random variable with finite second moments (condition \ref{c:model-9}). To prove \eqref{eq:psi(i)}, note that for $n > N_1(\omega_1)$,
\begin{align*}
    \frac{1}{W_n} \sum_{i=1}^n w_i \, \big| g_i(\htheta) \big| 
    & ~=~ \frac{1}{W_n} \sum_{i=1}^n w_i \,\big|g_i(\theta_0) \big| ~+~ \frac{1}{W_n} \sum_{i=1}^n w_i \,\big| g_i^\prime(\htheta, \theta_0) \,(\htheta - \theta_0)\big|\\
    & ~\leq~ \frac{1}{W_n} \sum_{i=1}^n w_i \,\big| g_i(\theta_0) \big| ~+~ \frac{1}{W_n} \sum_{i=1}^n w_i \, \left(\sum_{l=1}^K G_{jkl}(X_i) \ \delta \right)\\
    & ~=~ \frac{1}{W_n} \sum_{i=1}^n w_i \,\big| g_i(\theta_0) \big| ~+~ \frac{\delta}{W_n} \sum_{i=1}^n w_i \, \bar{G}_{jk}(X_i) 
\end{align*}
where $\bar{G}_{jk}(x) = \sum_{l=1}^K G_{jkl}(x)$. As $\bbE [\bar{G}_{jk}(X^*)^2] < \infty $ (by condition \ref{c:model-9}), $\bbE |g_i(\theta_0)|^2 < \infty$ (by condition \ref{c:model-6}(b)) and $\delta$ is arbitrarily small, \eqref{eq:psi(i)} follows by Lemma \ref{l:as_conv}.
\eqref{eq:psi(ii)} follows from Lemma \ref{l:as_conv} as well by condition \ref{c:model-6}(b) and by noting that for large enough $n$,
$$\left| \frac{1}{W_n} \sum_{i=1}^n w_i \, g_i (\htheta) ~-~ \frac{1}{W_n} \sum_{i=1}^n w_i \, g_i (\theta_0) \right| ~\leq~ \frac{\delta}{W_n} \sum_{i=1}^n w_i \,\bar{G}_{jk}(X_i)\,.$$
Finally, \eqref{eq:psi(iii)} follows by noting that 
$$
\frac{1}{W_n} \max_{1\leq i \leq n} w_i \, |g_i(\htheta)| ~\leq~ \frac{1}{W_n} \max_{1\leq i \leq n} w_i \, |g_i(\theta_0)| ~+~ \frac{\delta}{W_n} \max_{1\leq i \leq n} w_i \, F_{jk}(X_i)
$$
for large enough n, $\asP$. The maxima on the RHS converge to $0 \ \asP$, by Lemma \ref{l:max}.
\end{proof}

The next lemma gives conditions under which the weighted information matrix is positive definite with high conditional probability.

\begin{lemma}
\label{l:wtd_info_pd}
Let $ u \in \bbR^{K} $ be given with $ u \neq 0 $. If conditions \ref{c:model-1} -- \ref{c:model-7} hold on the model and conditions \ref{c:survey-1} -- \ref{c:survey-3} hold on the survey design, then uniformly in $ u $, as $ n \rightarrow \infty $,
\begin{equation}
\label{eq:wtd_info_pd_0}
    P \left( u^\top \tilde{J}_n(\theta_0) u > 0 \mid \uX_n, \uw_n \right) \rightarrow 1 \quad \asP\,.
\end{equation}
If conditions \ref{c:model-8} and \ref{c:model-9} also hold on the model, then for any strongly consistent estimator $ \htheta$,
\begin{equation}
\label{eq:wtd_info_pd_hat}
P \left( u^\top \tilde{J}_n(\hat{\theta}_n) u > 0 \mid \uX_n, \uw_n \right) \rightarrow 1 \quad \asP\,,
\end{equation}
also uniformly in $ u $.
\end{lemma}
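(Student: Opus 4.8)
The plan is to treat the quadratic form $u^\top\tilde J_n(\cdot)u$ as a small random perturbation of the deterministic form $u^\top J(\theta_0)u$ and to absorb the perturbation using the conditional-probability limits already proved in Lemmas \ref{l:wtd_info_1} and \ref{l:wtd_info_2}. Since the sign of $u^\top\tilde J_n(\theta)u$ is unchanged under positive rescaling of $u$, it suffices to establish both displays uniformly over the unit sphere $\{u\in\bbR^K:\|u\|=1\}$. By condition \ref{c:model-7}, $J(\theta_0)$ is positive definite with finite entries, so $\lambda_0:=\lambda_{\min}(J(\theta_0))>0$ and $u^\top J(\theta_0)u\ge\lambda_0$ for every unit vector $u$.

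For the first assertion, I would write $u^\top\tilde J_n(\theta_0)u = u^\top J(\theta_0)u + u^\top(\tilde J_n(\theta_0)-J(\theta_0))u$ and bound the second term by $|u^\top(\tilde J_n(\theta_0)-J(\theta_0))u|\le\|\tilde J_n(\theta_0)-J(\theta_0)\|$ for $\|u\|=1$ (the matrix norm used in Lemmas \ref{l:wtd_info_1}--\ref{l:wtd_info_2} dominates the operator norm, hence dominates the quadratic form). Consequently, on the single event $G_n:=\{\|\tilde J_n(\theta_0)-J(\theta_0)\|<\lambda_0/2\}$, which does not depend on $u$, one has $u^\top\tilde J_n(\theta_0)u\ge\lambda_0/2>0$ simultaneously for all unit vectors $u$. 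Lemma \ref{l:wtd_info_1} gives $\tilde J_n(\theta_0)\cp J(\theta_0)$ $\asP$, i.e.\ $P(G_n^c\mid\uX_n,\uw_n)\to 0$ $\asP$; hence $P(u^\top\tilde J_n(\theta_0)u>0\mid\uX_n,\uw_n)\ge P(G_n\mid\uX_n,\uw_n)\to 1$ $\asP$, uniformly in $u$ because $G_n$ is.

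For the second assertion I would run the identical argument with $\theta_0$ replaced by the strongly consistent estimator $\htheta$, using $u^\top\tilde J_n(\htheta)u\ge\lambda_0-\|\tilde J_n(\htheta)-J(\theta_0)\|$ and invoking Lemma \ref{l:wtd_info_2}, which under the extra conditions \ref{c:model-8}--\ref{c:model-9} yields $P(\|\tilde J_n(\htheta)-J(\theta_0)\|>\epsilon\mid\uX_n,\uw_n)\to 0$ $\asP$ for each $\epsilon>0$; taking $\epsilon=\lambda_0/2$ completes the proof. This lemma is essentially a corollary of the two preceding ones, so there is no genuine analytic difficulty; the only point requiring care is upgrading the pointwise-in-$u$ bound to a uniform one, which I handle by noting that the controlling event $G_n$ --- and the exceptional $P_*$-null set inherited from Lemma \ref{l:wtd_info_1} (resp.\ Lemma \ref{l:wtd_info_2}) --- depends only on the distance between the random matrix $\tilde J_n$ and the fixed matrix $J(\theta_0)$, and not on $u$.
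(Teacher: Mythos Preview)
Your proposal is correct and follows essentially the same route as the paper: reduce to unit vectors, decompose $u^\top\tilde J_n(\cdot)u$ as $u^\top J(\theta_0)u$ plus a remainder quadratic form, lower-bound the first term by the smallest eigenvalue of $J(\theta_0)$, and control the remainder uniformly in $u$ via Lemma~\ref{l:wtd_info_1} (resp.\ Lemma~\ref{l:wtd_info_2}). The only cosmetic difference is that the paper bounds $|u^\top R u|$ entrywise by $K^2\max_{r,s}|R_{rs}|$, whereas you invoke the matrix norm directly; both yield a $u$-free controlling event, which is exactly what delivers uniformity.
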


\begin{proof}
To prove \eqref{eq:wtd_info_pd_0}, let $\epsilon > 0$ and $u \,(\neq 0) \in \bbR^{K}$ be given.
Suppose that for almost every $\omega_1 \in \Omega_1$, we can find $N_1(\epsilon, \omega_1)$ not depending on $u$ such that for all $n > N_1(\epsilon, \omega_1)$,
$$
P \left( \frac{u^\top \tilde{J}_n(\theta_0) u}{u^\top u} > 0 \,\bigg|\, \uX_n(\omega_1) \right) > 1 - \epsilon\,.
$$
If this holds, then \eqref{eq:wtd_info_pd_0} holds. So it suffices to prove \eqref{eq:wtd_info_pd_0} for vectors $u$ such that $u^\top u = 1$.
For $u$ having unit norm, we have
\begin{equation}
\label{eq:expand_uJu}
    u^\top \tilde{J}_n(\theta_0) u ~=~ u^\top J(\theta_0) u ~+~ u^\top R u 
\end{equation}
where $R = \tilde{J}_n(\theta_0) - J(\theta_0)$.
The proof follows by showing that the first term in \eqref{eq:expand_uJu} is bounded below by the smallest eigenvalue of $J(\theta_0)$ and the second term can be made arbitrarily small uniformly in $u$ with high conditional probability.

Consider the spectral decomposition of $J(\theta_0) = \Gamma \Lambda \Gamma^\top$, where $\Gamma$ is an orthogonal matrix and $\Lambda$ is the diagonal matrix of eigen values of $J(\theta_0)$. Since $J(\theta_0)$ is positive definite (by condition \ref{c:model-7}), its smallest eigen value $\lambda_1 > 0$. Now, observe that
\begin{align*}
    u^\top J(\theta_0) u 
    & ~=~ u^\top \big(\Gamma \Lambda \Gamma^\top \big) u  ~=~ s^\top \Lambda \, s, \qquad \qquad s = \Gamma^\top u, \quad s^\top s = 1\\
    & ~=~ \lambda_1 \sum_{k=1}^{K} s_k^{2} \ \frac{\lambda_k}{\lambda_1}  ~\geq~ \lambda_1 \sum_{k=1}^{K} s_k^{2} ~=~ \lambda_1 \,.
\end{align*}
Finally, we shall show that with conditional probability larger than $1-\epsilon$, $|u^\top R u| < \lambda_1$ for large $n$, uniformly in $u$. Note that
\begin{align*}
    |u^\top R u| & ~\leq~ \sum_{r} \sum_{s} |u_r||u_s||R_{rs}| ~\leq~ \sum_{r} \sum_{s} |R_{rs}| \\
    & ~\leq~ K^{2} \max_{1 \leq r, s \leq K} |R_{rs}| ~=~
    K^{2} M_R\,,
\end{align*}
where $M_R := \max_{1 \leq r, s \leq K} |R_{rs}| < \infty$. Then, 
$|u^\top R u| < \lambda_1$ if $M_R < \frac{\lambda_1}{K^{2}}\,,$
\ie if $|R_{rs}| < \lambda_1/K^2$ for all $1 \leq r, s \leq K$, for large enough $n$.
By Lemma \ref{l:wtd_info_1}, for each $r,s$, $|R_{rs}| \cp 0 \ \asP$, and hence we conclude the proof of \eqref{eq:wtd_info_pd_0}. The proof of \eqref{eq:wtd_info_pd_hat} follows by the exact same argument as \eqref{eq:wtd_info_pd_0} but with
$R ~=~ \tilde{J}_n (\htheta) ~-~ J(\theta_0)$.
In this case as well, for each $r,s$, $|R_{rs}| \cp 0 \ \asP$ by Lemma \ref{l:wtd_info_2}.
\end{proof}

\subsection{Conditional Consistency}
In the following, we shall prove Theorem \ref{t:cond_consistency} which is our main theorem on conditional consistency of a sequence of roots of the weighted likelihood function. 

\bigskip

\noindent \begin{theorem-non}[{\bf Theorem}~\ref{t:cond_consistency} (Conditional Consistency)]
Under conditions \ref{c:survey-1} -- \ref{c:survey-3} on the class of sampling designs $\cP_\Pi$ and regularity conditions \ref{c:model-1} -- \ref{c:model-7} on the model $\cP_{\Theta}$, there exists a conditionally consistent estimator $\ctheta$ of $\theta_0$ such that 
\begin{equation}
\label{eq:cond_prob_score}
    P \left(\tilde{S}_n(\ctheta) = 0 \mid \uX_n, \uw_n \right) \to 1 \quad \asP\,.
\end{equation}
Moreover this sequence is essentially unique in the following sense: If $\bar{\theta}_n$ satisfies \eqref{eq:cond_prob_score}, and is conditionally consistent, then
\begin{equation}
\label{eq:cond_consistency_uniqueness}
    P \left( \, \bar{\theta}_n = \ctheta \mid \uX_n, \uw_n \, \right) \to 1 \quad \asP\,.
\end{equation}
\end{theorem-non}

\begin{proof}
The proof closely follows that of Theorem 6 of \cite{newton1991thesis} and is included here in full for completeness. {\bf Key ideas} of the proof are the following:
\begin{enumerate}
    \item Find a neighborhood $U_\delta$ of $\theta_0$ such that $\tilde{S}_n(\theta)$ is one-to-one from $U_\delta$ onto $\tilde{S}_n(U_\delta)$.
    \item The image set $\tilde{S}_n(U_\delta)$ contains $0$ with high conditional probability. On this image set, $\tilde{S}^{-1}_n(\cdot)$ is well defined, so $\tilde{S}^{-1}_n(0) = \ctheta$ is a root of the weighted likelihood equation and is close to $\theta_0$.
\end{enumerate}  
To make this more precise, define $\alpha ~=~ \left(4 \, \|J(\theta_0)^{-1}\| \right)^{-1}\,$
using condition $\ref{c:model-7}$. By conditions \ref{c:model-5} and \ref{c:model-6}(b), it follows that $J(\theta)$ is continuous on $B$. Thus, there exists $\delta >  0$ sufficiently small such that 
\begin{equation*}
    \big\|J(\theta) - J(\theta_0) \big\| ~<~ \frac{\alpha}{3} \quad \text{whenever} \quad \theta \in U_\delta \subset B\,,
\end{equation*}
where $U_\delta = \{\theta : \|\theta - \theta_0\| < \delta\}$. Let $C_n (\omega_1) \subset \Omega_2$ be the set where 
\begin{equation*}
    \big\|\tilde{J}_n(\theta) - J(\theta) \big\| ~<~ \frac{\alpha}{3} \quad \text{for all } \theta \in U_\delta\,.
\end{equation*}
By Lemma \ref{l:wtd_info_1}, it follows that $P(C_n \mid \uX_n, \uw_n) \to 1 \ \asP$. Next, let $D_n (\omega_1) \subset \Omega_2$ be the set where $\tilde{J}_n(\theta_0)$ is invertible. On $D_n (\omega_1)$, define $\alpha_n ~=~ \left( \|\tilde{J}_n(\theta_0)^{-1}\| \right)^{-1}$.
By Lemma \ref{l:wtd_info_pd}, we have $P(D_n \mid \uX_n, \uw_n) \to 1 \ \asP$. We need to show that $\alpha_n \cp \alpha \ \asP$. To that end, observe that Lemma \ref{l:wtd_info_1} implies $\tilde{J}_n(\theta_0) \cp J(\theta_0) \ \asP\,,$
and Lemma \ref{l:wtd_info_pd} implies that $\tilde{J}_n(\theta_0)$ is invertible with high conditional probability $\asP$. By continuity of matrix inversion, 
\begin{align*}
    \tilde{J}_n(\theta_0)^{-1} &\cp J(\theta_0)^{-1} \quad \asP\,,
\end{align*}
which implies that $\| \tilde{J}_n(\theta_0)^{-1}\| \cp \|J(\theta)^{-1}\| \ \asP$, and hence $\ \alpha_n \cp \alpha, \ \asP$. 

Further, let $E_n(\omega_1) \subset (C_n (\omega_1) \cap D_n(\omega_1))$ where $|\alpha_n - \alpha|< \alpha/2$. Noting that $P(E_n \mid \uX_n, \uw_n) \to 1 \ \asP$, we apply triangle inequality on the set $E_n$ for $\theta \in U_\delta$,
\begin{align*}
    \big\| \tilde{J}_n(\theta) - \tilde{J}_n(\theta_0) \big\| 
    & ~=~ \big\| \tilde{J}_n(\theta) - J(\theta) + J(\theta) - J(\theta_0) + J(\theta_0) - \tilde{J}_n(\theta_0) \big\| \\
    & ~\leq~ \big\| \tilde{J}_n(\theta) - J(\theta) \big\| ~+~ \big\|J(\theta) - J(\theta_0)\big\| ~+~ \big\|J(\theta_0) - \tilde{J}_n(\theta_0) \big\| \\
    & ~<~ \frac{\alpha}{3} ~+~ \frac{\alpha}{3} ~+~ \frac{\alpha}{3} ~=~ \alpha ~\leq~ 2\alpha_n
\end{align*}
Thus, the conditions for applying Theorem \ref{t:inverse_func} to $\tilde{S}_n(\theta)$ are satisfied on $E_n(\omega_1) \ \asP$. The weighted score function is continuously differentiable on $B$ and has a matrix of derivatives which is invertible at $\theta_0 \in B$. The Inverse Function Theorem implies that on $E_n(\omega_1), \ \asP$,
\begin{enumerate}
    \item [(a)] For every $\theta_1, \theta_2 \in U_\delta$,
    $\big\| \Tilde{S}_n (\theta_1) - \Tilde{S}_n (\theta_2) \big\| ~\geq~ 2\alpha_n \,\|\theta_1 - \theta_2\|$.
    \item [(b)] The image set
    $\tilde{S}_n(U_\delta) ~=~ \big\{s \in \bbR^K ~:~ \tilde{S}_n(\theta) = s, \ \theta \in U_\delta \big\}$
    contains the open ball of radius $\alpha_n\delta$ centered at $\tilde{S}_n(\theta_0)$.
\end{enumerate}
(a) ensures that on $E_n, \ \asP$, the weighted score $\tilde{S}_n(\theta)$ is a one-to-one function from $U_\delta$ onto the image set $\tilde{S}_n(U_\delta)$ and so the inverse function $\tilde{S}_n^{-1}(\cdot)$ mapping $\tilde{S}_n(U_\delta)$ onto $U_\delta$ is well-defined. Further, since $|\alpha_n - \alpha|< \alpha/2$ on $E_n(\omega_1)$, $\asP$, (b) implies the following,
\begin{enumerate}
    \item [($\text{b}^*$)] The image set $\tilde{S}_n(U_\delta)$
    contains the open ball of radius $\alpha\delta/2$ centered at $\tilde{S}_n(\theta_0)$.
\end{enumerate}
Let $E_n^\prime(\omega_1) \subset E_n(\omega_1)$ where $\big\| \tilde{S}_n(\theta_0) \big\| < \alpha \delta / 2$. By Lemma \ref{l:wtd_score}, $\tilde{S}_n(\theta_0) \cp 0 \ \asP$, which implies that 
$
P(E_n^\prime \mid \uX_n, \uw_n) \to 1 \ \asP.
$
On $E_n^\prime(\omega_1)$, $0 \in \tilde{S}_n(U_\delta) \ \asP$.
Let $N_1 \subset \Omega_1$ be the set of data sequences where all the previous convergences fail. Define
$$
\ctheta ~=~ 
\begin{cases} 
    \tilde{S}_n^{-1}(0) & \text{if } \omega_1 \in N_1^c \text{ and } \omega_2 \in E_n^\prime(\omega_1), \\
    \text{arbitrary} & \text{otherwise}.
\end{cases}
$$
Clearly, $\{\ctheta\}$ forms a sequence of roots of the weighted likelihood equation such that  $P(\tilde{S}_n(\ctheta) = 0 \mid \uX_n, \uw_n) \to 1 \ \asP$, 
which establishes \eqref{eq:cond_prob_score}. Also, since $\delta > 0$ is arbitrarily small, this sequence of roots is conditionally consistent,
$$
P \big( |\ctheta - \theta_0 | < \delta \mid \uX_n, \uw_n \big) ~\geq~ P(E_n^\prime \mid \uX_n, \uw_n) \to 1 \quad \asP.
$$
One-to-oneness of $\tilde{S}_n(\cdot)$ on $U_\delta$ 
establishes \eqref{eq:cond_consistency_uniqueness} and completes our proof.
\end{proof}

\subsection{Conditional Asymptotic Normality}

Before stating our result on conditional asymptotic normality, we first establish the following lemma, which plays a critical role in achieving the final result.

\begin{lemma}
\label{l:a_ni}
Let $z$ be a unit vector in $\bbR^K$, and let $a_{n,i}  = \sum_{k=1}^K z_k \, \big[\psi_i(\htheta) \big]_k = z^\top \psi_i(\htheta)$ for a strongly consistent estimator $\htheta$. Under conditions \ref{c:survey-1} -- \ref{c:survey-3} on the survey design and conditions \ref{c:model-1} -- \ref{c:model-8} and \ref{c:model-10} on the model,
\begin{align}
    \label{eq:a_sq_mean}
    \frac{1}{n} \sum_{i=1}^n \tilde{w}_{n,i}^2\ a_{n,i}^2 &~\to~ z^\top I_w(\theta_0)\,z \quad \asP\,,\\
    \label{eq:a_sq_max}
    \frac{1}{n} \max_{1\leq i \leq n} \tilde{w}_{n,i}^2\ a_{n,i}^2 & ~\to~ 0 \quad \asP\,.
\end{align}
\end{lemma}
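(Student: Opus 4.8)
The plan is to reduce both displays to the already-proved Lemmas~\ref{l:as_conv_sq} and~\ref{l:max_sq}, the only genuinely new difficulty being that $a_{n,i}=z^\top\psi_i(\htheta)$ is evaluated at the estimator $\htheta$, which drifts with $n$. I would therefore first pass to the $P_*$-full event on which $\htheta\to\theta_0$; on this event there is an $N_0$ (depending on the population path) such that $\htheta$, and hence every point on the segment joining $\htheta$ to $\theta_0$, lies in the open ball $B$ of conditions~\ref{c:model-3}--\ref{c:model-4} for all $n>N_0$, so that all the domination bounds of conditions~\ref{c:model-6} and~\ref{c:model-10} may be invoked. Note also that $\htheta$ and $\tilde w_{n,i}$ are functions of the sampled data path alone, so \eqref{eq:a_sq_mean}--\eqref{eq:a_sq_max} are ordinary a.s.-$P_*$ statements, and it suffices to work deterministically along a path in a finite intersection of full events.

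For \eqref{eq:a_sq_max}: on the above event, for $n>N_0$ condition~\ref{c:model-6}(a) gives $|a_{n,i}|=|z^\top\psi_i(\htheta)|\le \sum_{j=1}^K|z_j|G_j(X_i)=:\bar G(X_i)$ uniformly in $i$, and $\bbE_{\theta_0}[\bar G(X^*)^4]<\infty$, again by~\ref{c:model-6}(a). Hence Lemma~\ref{l:max_sq} applied with $g=\bar G^2$ yields $n^{-1}\max_{1\le i\le n}\tilde w_{n,i}^2\,\bar G(X_i)^2\to 0$ a.s.\ $P_*$, and \eqref{eq:a_sq_max} follows by domination.

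For \eqref{eq:a_sq_mean}: set $h_i(\theta)=(z^\top\psi_i(\theta))^2$. By conditions~\ref{c:model-4}--\ref{c:model-5}, $h_i$ is $C^1$ on $B$ with $\nabla h_i(\theta)=2(z^\top\psi_i(\theta))(z^\top\psi_i'(\theta))$, so a first-order mean-value expansion (as in the proof of Lemma~\ref{l:wtd_info_2}) gives, for $n>N_0$, $h_i(\htheta)=h_i(\theta_0)+\nabla h_i(\theta_i^*)^\top(\htheta-\theta_0)$ for some $\theta_i^*$ on the segment between $\htheta$ and $\theta_0$. The leading term $n^{-1}\sum_i\tilde w_{n,i}^2 h_i(\theta_0)=n^{-1}\sum_i\tilde w_{n,i}^2(z^\top\psi_i(\theta_0))^2$ converges by Lemma~\ref{l:as_conv_sq} with $g(X)=(z^\top\psi(\theta_0))^2$ (square-integrable by~\ref{c:model-6}(a), or equivalently~\ref{c:model-6}(c)) to $\lim_{n,N}nN^{-2}\sum_{l=1}^N\bbE_{\theta_0}[\pi_l^{-1}(z^\top\psi_l(\theta_0))^2]$, which equals $z^\top I_w(\theta_0)z$ by the definition of $I_w$. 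For the remainder, the key point is that $\nabla h_i$ is evaluated at the \emph{single} point $\theta_i^*\in B$, so its $l$-th component is bounded by $2\sum_{j,k}|z_j||z_k|\,\bigl|\partial_j\log f_{\theta_i^*}(X_i)\cdot\partial^2_{kl}\log f_{\theta_i^*}(X_i)\bigr|\le 2\sum_{j,k}|z_j||z_k|F_{jkl}(X_i)$ by condition~\ref{c:model-10}; consequently
\[
\Bigl|\,\tfrac1n\textstyle\sum_{i=1}^n \tilde w_{n,i}^2\bigl(h_i(\htheta)-h_i(\theta_0)\bigr)\Bigr|\ \le\ 2\,\|\htheta-\theta_0\|\sum_{j,k,l}|z_j||z_k|\cdot\tfrac1n\textstyle\sum_{i=1}^n\tilde w_{n,i}^2\,F_{jkl}(X_i).
\]
Each inner average converges to a finite limit by Lemma~\ref{l:as_conv_sq} (since $\bbE_{\theta_0}[F_{jkl}(X^*)^2]<\infty$), while $\|\htheta-\theta_0\|\to0$ on our event, so the remainder vanishes and \eqref{eq:a_sq_mean} follows by combining the two pieces.

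The step I expect to be the main obstacle is exactly this Taylor remainder: one must expand the scalar $h_i(\theta)=(z^\top\psi_i(\theta))^2$ rather than the vector $\psi_i(\theta)$, so that the remainder's gradient is a product of a first- and a second-order derivative of $\log f$ at the \emph{same} intermediate point, which is precisely the quantity dominated in $L^2$ by condition~\ref{c:model-10}. A naive expansion of $\psi_i$ alone would instead leave a product of a first derivative evaluated at $\theta_0$ and a second derivative evaluated at $\theta_i^*$, for which only the separate second-moment bounds on $G_j$ and $G_{jk}$ are available --- too weak to feed into Lemma~\ref{l:as_conv_sq}, which needs a square-integrable $g$. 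Everything else is routine bookkeeping with the domination conditions and the finite intersection of full $P_*$-events.
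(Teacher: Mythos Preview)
Your proof is correct and, for \eqref{eq:a_sq_mean}, essentially identical to the paper's: both expand $h_i(\theta)=(z^\top\psi_i(\theta))^2$ about $\theta_0$, handle the leading term via Lemma~\ref{l:as_conv_sq} (equivalently \eqref{eq:as_conv_Iw}), and bound the remainder using condition~\ref{c:model-10} together with Lemma~\ref{l:as_conv_sq}. Your closing paragraph correctly identifies the crux --- expanding the scalar $h_i$ rather than the vector $\psi_i$ so that the mixed product lands exactly on the $F_{jkl}$ dominator of~\ref{c:model-10}.

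For \eqref{eq:a_sq_max} you take a slightly shorter route than the paper. The paper reuses the Taylor expansion of $h_i$ and bounds $n^{-1}\max_i \tilde w_{n,i}^2 a_{n,i}^2$ by $n^{-1}\max_i h_i(\theta_0)$ plus a remainder term, applying Lemma~\ref{l:max_sq} to each. You instead use the uniform bound $|a_{n,i}|\le\bar G(X_i)$ from~\ref{c:model-6}(a) directly and feed $g=\bar G^2$ into Lemma~\ref{l:max_sq}, exploiting the fourth-moment condition on $G_j$. This is cleaner and avoids invoking~\ref{c:model-10} for the max part, though~\ref{c:model-10} is of course still needed for the mean.
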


\begin{proof}
Define $\tilde{a}_{n,i} = \tilde{w}_{n,i}\,a_{n,i}$ and $h_i(\theta) = \{\tilde{w}_{n,i} \, z^\top \psi_i(\theta)\}^2$. Note that $\tilde{a}_{n,i}^2 = h_i(\htheta)$. By differentiability conditions \ref{c:model-4}, \ref{c:model-5}, and \ref{c:model-8}, we consider a second order Taylor series expansion of each $h_i(\theta)$ about $\theta_0$ evaluated at $\htheta$,
\begin{equation}
\label{eq:a_ni_taylor}
    \tilde{a}_{n,i}^2 = h_i(\theta_0) + [h_i^\prime(\theta_0, \htheta)]^\top(\htheta - \theta_0)
\end{equation}
where $h_i^\prime(\htheta, \theta_0)$ denotes the vector of partial derivatives of $h_i(\theta)$ evaluated at some point $\theta_i^*$ on the line between $\htheta$ and $\theta_0$. By taking an average over $i = 1, 2, \ldots, n$,
\begin{equation}
\label{eq:a_ni_taylor_avg}
    \frac{1}{n}\sum_{i=1}^n \tilde{a}_{n,i}^2 ~=~ \frac{1}{n}\sum_{i=1}^n h_i(\theta_0) + R_n\,, 
\end{equation}
where $R_n$ is the average of the remainder terms from \eqref{eq:a_ni_taylor}. Writing the first term of \eqref{eq:a_ni_taylor_avg} explicitly, 
\begin{align*}
    \frac{1}{n}\sum_{i=1}^n h_i(\theta_0) & ~=~ \frac{1}{n} \sum_{i=1}^n \tilde{w}_{n,i}^2 \ z^\top \psi_i(\theta_0)  \psi_i(\theta_0)^\top z
    ~=~ z^\top I_{w,n}(\theta_0)\,z.
\end{align*}
From \eqref{eq:as_conv_Iw}, we have
\begin{align}
    \frac{1}{n}\sum_{i=1}^n h_i(\theta_0) ~\to~ z^\top I_w(\theta_0) \,z \quad \asP\,.
\end{align}
Thus, it suffices to prove that $R_n ~\to~ 0 \ \asP$ to get \eqref{eq:a_sq_mean}. To that end, note that 
\begin{align}
\label{eq:h_i_prime}
    [h_i^\prime(\theta_0, \htheta)]_u & ~=~ 2 \sum_{j=1}^K \sum_{k=1}^K \tilde{w}_{n,i}^2 \, z_j z_k \, \frac{\partial\log f_\theta (X_i)}{\partial \theta_j} \, \frac{\partial^2 \log f_\theta (X_i)}{\partial \theta_u\, \partial \theta_k}\,,
\end{align}
where all derivatives are evaluated at $\theta_i^*$. Suppose that $n$ is large enough such that for all $1 \leq k \leq K$, $\big|[\htheta]_k - [\theta_0]_k \big| < \delta$ for some $\delta > 0$. Further, $\delta$ is small enough so that the bounds on the products of the partials given in  condition \ref{c:model-10} hold for all $1 \leq u,j, k \leq K$. Then,
\begin{align*}
    |R_n| & ~\leq~ \frac{\delta}{n}\ \sum_{i=1}^n \sum_{u=1}^K \tilde{w}_{n,i}^2 \ \big| [h_i^\prime(\theta_0, \htheta)]_u \big| ~\leq~ 2 \,\frac{\delta}{n}\sum_{i=1}^n \tilde{w}_{n,i}^2 \ \bar{F}(X_i)\,,
\end{align*}
where $\bar{F}(X) = \sum_{u=1}^K \sum_{j=1}^K \sum_{k=1}^K F_{ujk}(X)$ and $F_{ujk}$ are as defined in condition \ref{c:model-10}. Condition \ref{c:model-10} implies $\bbE_{\theta_0} [\bar{F}(X^*)]^2 < \infty$ and hence by Lemma \ref{l:as_conv_sq}, the weighted average converges to a finite limit. Since $\delta > 0$ is arbitrarily small, $R_n ~\to~ 0 \ \asP$. 

Next, to prove \eqref{eq:a_sq_max}, we shall use \eqref{eq:a_ni_taylor} and \eqref{eq:h_i_prime} to arrive at the following,
\begin{align*}
    \frac{1}{n} \max_{1 \leq i \leq n} \tilde{a}_{n,i}^2 ~\leq~ \frac{1}{n} \max_{1 \leq i \leq n}h_i(\theta_0) ~+~ \frac{2\delta}{n} \max_{1 \leq i \leq n} \tilde{w}_{n,i}^2\bar{G}\,(X_i)\,.
\end{align*}
\eqref{eq:a_sq_max} follows directly by applying Lemma \ref{l:max_sq} on the first and second terms of the RHS.
\end{proof}

Finally, we shall prove Theorem \ref{t:cond_AN} (of the main document) which establishes the asymptotic distribution for a conditionally consistent sequence of roots of the weighted likelihood equation.

\bigskip
\noindent \begin{theorem-non}[{\bf Theorem}~\ref{t:cond_AN} (Conditional Asymptotic Normality)]
Suppose conditions \ref{c:survey-1} -- \ref{c:survey-3} hold on the class of sampling designs $\cP_\Pi$ and regularity conditions \ref{c:model-1} -- \ref{c:model-10} hold on the model $\cP_{\Theta}$ and that $\{\htheta\}$ is a strongly consistent sequence of estimators of $\theta_0$ satisfying
\begin{equation}
\label{eq:cond_consistency}
    \left\| \sqrt{n}\, S_{w,n}(\hat{\theta}_n) \right\| ~\to~ 0 \quad \asP\,.
\end{equation}
If $\{\ctheta\}$ is a conditionally consistent sequence of roots of the weighted likelihood equation, then for any Borel set $A \subset \mathbb{R}^K$, as $n \to \infty$,
\begin{equation}
\label{eq:cond_AN}
    P \left( \sqrt{n} (\ctheta - \htheta) \in A \mid \uX_n, \uw_n \right) ~\to~ P(Z \in A) \quad \asP
\end{equation}
where $Z \sim \cN_K \left(0, J(\theta_0)^{-1} I_w(\theta_0) J(\theta_0)^{-1} \right)$.
\end{theorem-non}

\bigskip

Note that condition in \eqref{eq:cond_consistency} is satisfied if $\hat{\theta}_n$ is the pseudo maximum likelihood estimator (PMLE) obtained by solving the powered likelihood equation, $S_{w,n}(\theta) = 0$. 

\bigskip
\begin{proof}
Under differentiability conditions \ref{c:model-4}, \ref{c:model-5}, and \ref{c:model-8}, the weighted score function can be written as a second-order Taylor expansion, evaluated at $\ctheta$ and centered at $\htheta$,
\begin{equation}
\label{eq:taylor_score}
    \tilde{S}_n(\ctheta) = \tilde{S}_n(\htheta) - \tilde{J}_n(\htheta) (\ctheta - \htheta) + R_n(\ctheta - \htheta)
\end{equation}
where $R_n$ is a $K \times K$ matrix, whose $j$-th row is
\begin{align*}
    [R_n]_{j\cdot} & 
    ~=~ \frac{1}{2 \sum_{i=1}^n Y_{n,i}} \,\sum_{i=1}^{n} \, Y_{n,i}\, (\ctheta - \htheta)^\top \,\Psi_i^j(\theta^*)
\end{align*}
where $\theta^*$ is a point on the line between $\htheta$ and $\ctheta$ and $\Psi_i^j(\theta)$ is the matrix of third order partial derivatives of $\log f_\theta(X_i)$, \ie
$$
[\Psi_i^j(\theta)]_{lk} = \frac{\partial^3}{\partial \theta_i \partial \theta_j \partial \theta_k} \log f_{\theta}(X_i)\,.
$$
Since $\tilde{S}_n(\ctheta) = 0$, \eqref{eq:taylor_score} can be re-written as
\begin{equation}
\label{eq:taylor_score_2}
    \tilde{S}_n(\htheta) =  \big(\tilde{J}_n(\htheta) - R_n \big) \big(\ctheta - \htheta \big)
\end{equation}

The proof can be broken down in {\bf two} parts. The {\bf first part} amounts to showing that the matrix $\tilde{J}_n(\htheta) - R_n$ converges in some sense to $J(\theta_0)$ and is invertible with high conditional probability. In the {\bf second part}, we will show that 
\begin{equation}
\label{eq:score_dist}
    \sqrt{n}\, \tilde{S}_n (\htheta) ~\cd~ \cN_K \big(0, I_w(\theta_0) \big)\,,
\end{equation}
where the notation $\cd$ means convergence in conditional distribution given ($\uX_n, \uw_n$). Then, the proof of \eqref{eq:cond_AN} follows by applying Slutsky's Theorem.

Let $B_n(w_1) \subset \Omega_2$ be the set where $\tilde{J}_n(\hat{\theta}_n) - R_n$ is invertible.
We shall show that
\begin{equation}
\label{eq:B_n_hcp}
    P \left( B_n \mid \uX_n, \uw_n \right) \to 1 \quad \asP
\end{equation}
To prove \eqref{eq:B_n_hcp}, it suffices to show that
$$
P \left( u^T \big( \tilde{J}_n(\hat{\theta}_n) - R_n \big) u > 0 \mid \uX_n, \uw_n \right) \to 1 \quad \asP
$$
uniformly for $u \in \mathbb{R}^K$ with $\|u\| = 1$.
By Lemma \ref{l:wtd_info_pd}, $\tilde{J}_n(\hat{\theta}_n)$ is positive definite with high conditional probability. Therefore, it suffices to show that $| u^\top R_n u |$ can be made arbitrarily small uniformly in $u$.
Note that 
\begin{align*}
    | u^\top R_n u | & ~=~ \sum_{j=1}^{K} \sum_{k=1}^{K} [u]_j [u]_k [R_n]_{jk} ~\leq~ K^2 \max_{j,k} \left| [R_n]_{jk} \right|.
\end{align*}
Now, by construction,  
\begin{align*}
    [R_n]_{jk} 
    & ~=~ \frac{1}{2 \bar{Y}_{n}}\ \frac{1}{n} \sum_{i=1}^n Y_{n,i} \, \bigg( \sum_{l=1}^K \big[\ctheta - \htheta \big]_{l} \big[\Psi_i^j(\theta^*) \big]_{lk} \bigg)
\end{align*}
Let $\delta > 0 $ be given. Let $\epsilon > 0 $ be such that the bounds on the third order partial derivatives hold for all $\|\theta - \theta_0\| < \epsilon$ (condition \ref{c:model-9}). Since $\htheta$ is strongly consistent for $\theta_0$, therefore $\asP$, 
$$
\big\|\htheta - \theta_0 \big\| ~<~ \frac{\epsilon}{2} \quad \text{for all } n > N_1(\omega_1, \epsilon)\,.$$
Let $C_n(\omega_1)\subset \Omega_2$ be the set where $\big\| \ctheta - \theta_0\big\| < \epsilon/2$. Since $\ctheta$ is conditionally consistent for $\theta_0$, therefore for $n > N_1(\omega_1,\epsilon, \delta)$, $P(C_n\mid \uX_n, \uw_n) > 1- \delta/2, \ \asP$. Thus, on the set $C_n(\omega_1)$,
\begin{align*}
    \big\| \ctheta - \htheta \big\|
    & ~=~ \big\| \ctheta - \theta_0 + \theta_0 -\htheta \big\| \\
    & ~\leq~ \big\| \ctheta - \theta_0 \big\| ~+~ \big\|\theta_0 -\htheta \big\|  ~<~ \frac{\epsilon}{2} ~+~ \frac{\epsilon}{2} ~=~ \epsilon\,.
\end{align*}
Moreover, $\theta^*$ is within $\epsilon/2$ of $\theta_0$ since all points between $\htheta$ and $\ctheta$ are within $\epsilon/2$ of $\theta_0$. Therefore by condition \ref{c:model-9}, on $C_n(\omega_1)$, $\asP$, we have
\begin{align}
    \label{eq:R_n}
    [R_n]_{jk} & ~\leq~ \frac{\epsilon}{2 \bar{Y}_{n}}\ \frac{1}{n} \sum_{i=1}^n Y_{n,i} \,  \sum_{l=1}^K G_{jkl}(X_i)
    ~=~\frac{\epsilon}{2 \bar{Y}_{n}}\ \frac{1}{n} \sum_{i=1}^n Y_{n,i} \,  \bar{G}_{jk}(X_i)\,.
\end{align}
which follows since $\big| [\Psi_i^j(\theta^*)]_{lk}\big| \leq G_{jkl}(X_i)$ where $\bbE_{\theta_0}|G_{jkl}(X^*)|^2 < \infty$, and $\bar{G}_{jk}(x) = \sum_{l} G_{jkl}(x)$. By Lemma \ref{l:data-wts}, 
$$\frac{1}{n} \sum_{i=1}^n Y_{n,i}\, \bar{G}_{jk}(X_i) ~\cp~ \xi = \bbE_{\theta_0} [\bar{G}_{jk}(X)] \quad \asP\,.$$
Also, $\bar{Y}_n \cp 1 \ \asP$ as noted earlier. Thus, 
$({2\bar{Y}_n})^{-1}\ \frac{1}{n} \sum_{i=1}^n Y_{n,i}\, \bar{G}_{jk}(X_i) ~\cp~ {\xi}/{2} \ \asP\,.$
Therefore, on a set $D_n(\omega_1)$ having conditional probability larger than $1 - \delta/2$, we have 
\begin{equation}
\label{eq:R_n_2}
    \left|\frac{1}{2\bar{Y}_n}\ \frac{1}{n} \sum_{i=1}^n Y_{n,i} \bar{G}_{jk}(X_i) ~-~ \frac{\xi}{2} \right| ~<~ \epsilon
\end{equation}
Combining \eqref{eq:R_n} and \eqref{eq:R_n_2}, on the set $C_n(\omega_1) \cap D_n(\omega_1)$ having conditional probability larger than $1-\delta$,
$$
\big| [R_n]_{jk} \big| ~\leq~ \epsilon \bigg(\epsilon + \frac{\xi}{2} \bigg) \quad \text{ for } n > N_1(\omega_1, \epsilon, \delta)\,. 
$$
Since $\epsilon>0$ is arbitrary, therefore the proof of \eqref{eq:B_n_hcp} follows by noting that 
$[R_n]_{jk} \cp 0 \ \asP$. Having proved that, it now follows from Lemma \ref{l:wtd_info_2} that 
\begin{equation}
\label{eq:part1_CAN}
    \big(\tilde{J}_n(\htheta) - R_n \big) ~\cp~ J(\theta_0) \quad \asP\,.
\end{equation}
Define 
$$
\tilde{\Sigma}_n = 
\begin{cases} 
    \big(\tilde{J}_n(\htheta) - R_n \big)^{-1} & \text{ on the set } B_n(\omega_1), \\
    \text{arbitrary} & \text{ on } B_n(\omega_1)^c.
\end{cases}
$$
By continuity of matrix inversion and using the Continuous Mapping Theorem, \eqref{eq:part1_CAN} implies
$\tilde{\Sigma}_n ~\cp~ J(\theta_0)^{-1} \ \asP$.
This completes the first part of the proof. 

For proving the second part, we rewrite \eqref{eq:taylor_score_2} as 
\begin{equation*}
    \sqrt{n}\,(\ctheta - \htheta) ~=~ \tilde{\Sigma}_n^{-1} \ \sqrt{n}\, \tilde{S}_n(\htheta)\,
\end{equation*}
on the set $B_n(\omega_1)$ having conditional probability converging to $1 \ \asP$. To prove \eqref{eq:cond_AN}, it suffices to prove \eqref{eq:score_dist}, followed by applying Slutsky's theorem conditionally along $P_1-$ almost every sequence of sampled data and sampling weights. We shall prove \eqref{eq:score_dist} by using Cramer-Wold device and showing that for any $z\in \bbR^K$ with $\|z\|=1$,
\begin{equation*}
    t_n(z) ~\equiv~ \sqrt{n}\,z^\top\tilde{S}_n(\htheta) ~\cd~ \cN_K \big(0, z^\top I_w(\theta_0)\,z \big) \quad \asP\,,
\end{equation*}
where the notation $\cd$ denotes convergence in conditional distribution given ($\uX_n, \uw_n$).
Let $g_i^k(\htheta) := \big[\psi_i(\htheta) \big]_k$ and write $t_n(z)$ explicitly,
\begin{align*}
    t_n(z) & ~=~ \sqrt{n} \, \sum_{k=1}^K z_k \left( \frac{1}{\sum_{i=1}^n Y_{n,i}} \sum_{i=1}^n Y_{n,i} \,\big[\psi_i^\prime(\htheta) \big]_k \right) 
    ~=~ \frac{1}{\sqrt{n} \,\bar{Y}_n} \,  \sum_{i=1}^n Y_{n,i} \ a_{n,i}\,,
\end{align*}
where $a_{n,i}  = \sum_{k=1}^K z_k \, g_i^k(\htheta) = z^\top \psi_i(\htheta)$. Noting that $\bar{Y}_n \cp 1 \ \asP$, it suffices to show that 
\begin{equation}
\label{eq:cond_AN_CMW}
   \frac{1}{\sqrt{n}} \,  \sum_{i=1}^n Y_{n,i} \, a_{n,i} ~\cd~ \cN_K \big(0, z^\top I_w(\theta_0)\,z \big) \quad \asP\,. 
\end{equation}
Using $\|z\| = 1$ and \eqref{eq:cond_consistency}, we have
\begin{align}
    \nonumber
    \bigg\| \frac{1}{\sqrt{n}} \sum_{i=1}^n a_{n,i} \ \tilde{w}_{n,i}\bigg\|
    \nonumber
    & ~=~ \bigg\| \sqrt{n} \sum_{i=1}^n \frac{w_i}{W_n} z^\top \psi_i(\htheta) \bigg\| ~=~ \bigg\| \sqrt{n} z^\top \bigg(\frac{1}{W_n} \sum_{i=1}^n w_i  \psi_i(\htheta) \bigg) \bigg\|\\
    \label{eq:mean_AN}
    & ~\leq~ \|z\| \ \big\| \sqrt{n} \,S_{w,n}(\htheta) \big\| ~\to~ 0 \quad \asP\,.
\end{align}
Using Lemma \ref{l:a_ni}, it follows that
\begin{align}
\label{eq:var_AN}
    \frac{1}{n} \sum_{i=1}^n \tilde{w}_{n,i}^2\ a_{n,i}^2 & ~\to~ z^\top I_w(\theta_0)z \quad \asP\,,\\
    \label{eq:var_AN_2}
    \frac{1}{n} \max_{1\leq i \leq n} \tilde{w}_{n,i}^2\ a_{n,i}^2 & ~\to~ 0 \quad \asP\,.
\end{align}
Combining \eqref{eq:mean_AN}, \eqref{eq:var_AN} and \eqref{eq:var_AN_2}, the proof of \eqref{eq:cond_AN_CMW} follows by applying Theorem \ref{t:clt}, since $Y_{n,i}$ is conditionally independent for each $i$, given $(\uX_n, \uw_n)$.
\end{proof}

\section{Auxiliary Results}
This section contains auxiliary results that have been used in the proofs of Section \ref{s:proofs}.

\begin{lemma}
\label{l:conv_properties}
Consider two sequences $\{Z_n\}$ and $\{U_n\}$ and two other random variables $Z$, $U$ all defined on the space $(\Omega, \mathcal{A})$. If
$
Z_n \cp Z \ \asP$ and $U_n \cp U \ \asP
$,
then
$$
Z_n U_n \cp ZU \quad \asP \quad \text{and} \quad Z_n + U_n \cp Z + U \quad \asP.
$$
\end{lemma}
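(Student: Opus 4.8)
The plan is to fix a sample path and thereby reduce the statement to the classical fact that convergence in probability is stable under sums and products, which I would then re-derive for the (random) conditional probability measure. Let $N \subset \Omega_1$ be the $P_*$-null set off of which both $Z_n \cp Z \ \asP$ and $U_n \cp U \ \asP$ hold, and fix $\omega_1 \in N^c$; write $P^{\omega_1}(\cdot) := P(\cdot \mid \uX_n(\omega_1), \uw_n(\omega_1))$ for the associated sequence of conditional probability measures on $(\Omega_2, \cA_2)$. By hypothesis, $Z_n \to Z$ and $U_n \to U$ in $P^{\omega_1}$-probability in the ordinary sense. It then suffices to show $Z_n + U_n \to Z + U$ and $Z_n U_n \to ZU$ in $P^{\omega_1}$-probability for every such $\omega_1$, since that is precisely the assertion $\asP$.

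For the sum, I would fix $\epsilon > 0$ and use the elementary inclusion
\[
\big\{\, \|(Z_n + U_n) - (Z + U)\| > \epsilon \,\big\} ~\subseteq~ \big\{\, \|Z_n - Z\| > \tfrac{\epsilon}{2} \,\big\} ~\cup~ \big\{\, \|U_n - U\| > \tfrac{\epsilon}{2} \,\big\},
\]
followed by a union bound, so that both conditional probabilities on the right vanish as $n \to \infty$.

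For the product, I would decompose $Z_n U_n - ZU = (Z_n - Z)(U_n - U) + (Z_n - Z)U + Z(U_n - U)$ and treat the three terms separately. The first is handled by the inclusion $\{\|(Z_n - Z)(U_n - U)\| > \epsilon\} \subseteq \{\|Z_n - Z\| > \sqrt{\epsilon}\} \cup \{\|U_n - U\| > \sqrt{\epsilon}\}$ and a union bound. For the cross term $(Z_n - Z)U$, the point is that $U$ is finite $P^{\omega_1}$-a.s., so given $\eta > 0$ one picks $M = M(\eta, \omega_1)$ with $P^{\omega_1}(\|U\| > M) < \eta$; on $\{\|U\| \le M\}$ one has $\|(Z_n - Z)U\| \le M\|Z_n - Z\| \to 0$ in $P^{\omega_1}$-probability, whence $\limsup_n P^{\omega_1}(\|(Z_n - Z)U\| > \epsilon) \le \eta$, and letting $\eta \downarrow 0$ gives the claim; the term $Z(U_n - U)$ is symmetric, using finiteness of $Z$. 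A final union bound over the three pieces yields $Z_n U_n \to ZU$ in $P^{\omega_1}$-probability, and since $\omega_1 \in N^c$ was arbitrary, $Z_n U_n \cp ZU \ \asP$.

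The only nonroutine point is the truncation step for the cross terms $(Z_n - Z)U$ and $Z(U_n - U)$, which is needed precisely because $Z$ and $U$ are not assumed bounded; in every application of this lemma (e.g.\ Lemmas \ref{l:wtd_score} and \ref{l:wtd_info_1}) the limits are in fact deterministic constants, for which this step is immediate, but stating the lemma in full generality costs nothing. Everything else is a union bound along a fixed sample path.
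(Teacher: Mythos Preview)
The paper states this lemma without proof, treating it as a routine extension of the classical fact that convergence in probability is preserved under sums and products. Your argument supplies exactly those details and is correct.

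One minor notational point worth tightening: you write $P^{\omega_1}(\cdot) := P(\cdot \mid \uX_n(\omega_1), \uw_n(\omega_1))$, but this is really a \emph{sequence} of conditional measures indexed by $n$, so the phrase ``$Z_n \to Z$ in $P^{\omega_1}$-probability in the ordinary sense'' is slightly loose. For the sum and for the term $(Z_n-Z)(U_n-U)$ this is immaterial, since your union bounds are applied at each fixed $n$. It could in principle matter in the truncation step for the cross terms, where one needs a single $M$ with $P^{\omega_1}_n(\|U\| > M) < \eta$ for all large $n$. In the present setting this is harmless for two independent reasons: (i) as you yourself note, every invocation of the lemma in the paper has deterministic limits $Z$ and $U$, so the truncation is vacuous; and (ii) in the paper's product-space construction the data sit in $\Omega_1$ and the bootstrap weights in $\Omega_2$, so once $\omega_1$ is fixed the relevant conditional law on $\Omega_2$ is governed by $P_2$ and does not drift with $n$. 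Either observation suffices.
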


\begin{theorem}[Theorem 2, \cite{Petrov2014}]
\label{t:gen_as_conv}

Let $\{Z_n: n \geq 1\}$ be a sequence of random variables with finite variances. Define $S_n = \sum_{i=1}^nZ_i$, $n \geq 1$ with $S_0 = 0$. If 
    $\text{Var}(S_n - S_m) ~\leq~ C(n-m)^{2r-1}$
holds for all $n>m\geq0$, where $r \geq 1$ and $C$ is a constant, then as $n \to \infty$,
$$\frac{S_n - \bbE(S_n)}{n^r} ~\as~ 0\,.$$

\end{theorem}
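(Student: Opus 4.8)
I would prove this classical strong law for dependent summands by the standard \emph{dyadic blocking} device, reducing the almost-sure statement to (i) convergence along the geometric subsequence $n_k = 2^k$, and (ii) control of the oscillation of $S_n$ inside each block $\{2^k, 2^k+1,\dots,2^{k+1}-1\}$. First I would center: replacing $Z_i$ by $Z_i - \bbE Z_i$ changes neither the variance hypothesis nor the conclusion, so I may assume $\bbE Z_i = 0$ for all $i$, whence $\bbE S_n = 0$, $\Var(S_n - S_m) = \bbE[(S_n-S_m)^2]$, and the goal becomes $S_n / n^r \to 0$ a.s. For the subsequence, fix $\epsilon > 0$; Chebyshev's inequality together with the hypothesis applied with $m = 0$ gives
\[
\bbP\bigl(|S_{2^k}| > \epsilon\, 2^{kr}\bigr) ~\leq~ \frac{\Var(S_{2^k})}{\epsilon^2\, 2^{2kr}} ~\leq~ \frac{C\, 2^{k(2r-1)}}{\epsilon^2\, 2^{2kr}} ~=~ \frac{C}{\epsilon^2}\, 2^{-k},
\]
which is summable in $k$. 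By the first Borel--Cantelli lemma $\limsup_k |S_{2^k}|/2^{kr} \leq \epsilon$ a.s., and letting $\epsilon$ run through a countable sequence yields $S_{2^k}/2^{kr} \to 0$ a.s.

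The one step requiring genuine work is controlling $M_k := \max_{2^k \le j < 2^{k+1}} |S_j - S_{2^k}|$, and the plan there is to invoke a Rademacher--Menshov-type maximal inequality. The key observation is that, because $2r - 1 \geq 1$, the map $x \mapsto x^{2r-1}$ is superadditive on $[0,\infty)$, so the control function $F(i,j) := C\,(j-i)^{2r-1}$ for $\bbE[(S_j - S_i)^2]$ obeys $F(i,\ell) + F(\ell,j) \leq F(i,j)$ for $i < \ell < j$; the usual $L^2$--Minkowski bisection argument (halving the index range a logarithmic number of times) then delivers
\[
\bbE\Bigl[\max_{m < j \leq n}(S_j - S_m)^2\Bigr] ~\leq~ \bigl(1 + \log(n-m)\bigr)^2\, C\,(n-m)^{2r-1}
\qquad \text{for all } n > m \geq 0 .
\]
Taking $m = 2^k$, $n = 2^{k+1}$ gives $\bbE[M_k^2] \leq C'\,(k+1)^2\, 2^{k(2r-1)}$, hence $\bbP(M_k > \epsilon\, 2^{kr}) \leq C'(k+1)^2/(\epsilon^2 2^{k})$, which is again summable; Borel--Cantelli gives $M_k/2^{kr} \to 0$ a.s. To finish, for an arbitrary $n$ I would pick $k$ with $2^k \leq n < 2^{k+1}$ and write $|S_n|/n^r \leq |S_{2^k}|/2^{kr} + M_k/2^{kr}$, whose right-hand side tends to $0$ a.s. as $n\to\infty$ (hence $k \to \infty$), completing the argument.

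The main obstacle is precisely the maximal inequality in the second paragraph; everything else is routine bookkeeping with Chebyshev and Borel--Cantelli. Since the present theorem is quoted verbatim as Theorem~2 of \cite{Petrov2014}, in the paper one simply cites it, but the sketch above shows how it reduces to the classical Rademacher--Menshov dyadic estimate once one notices that the prescribed variance growth $(n-m)^{2r-1}$ with $r \geq 1$ is superadditive. I note in passing that when $r > 1$ the exponent $2r-1$ is strictly larger than $1$, so one could instead use Serfling's sharper maximal inequality and drop the logarithmic factor entirely; this buys nothing here, however, because the geometric factor $2^{-k}$ in the Borel--Cantelli sums already absorbs any polylogarithmic overhead.
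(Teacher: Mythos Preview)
Your proposal is correct, and you already note the essential point: the paper does not prove this result at all but merely quotes it as Theorem~2 of \cite{Petrov2014} in the auxiliary results section. There is therefore no ``paper's proof'' to compare against; you have supplied a genuine argument where the paper simply invokes a citation.

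Your argument is sound. The reduction to the dyadic subsequence $n_k = 2^k$ via Chebyshev and Borel--Cantelli is routine, and the only substantive step---the maximal inequality for the block oscillations $M_k$---is correctly handled. Your observation that $x \mapsto x^{2r-1}$ is superadditive on $[0,\infty)$ for $r \geq 1$ is exactly what makes the M\'oricz--Rademacher--Menshov machinery apply: the bound $\bbE[(S_j - S_i)^2] \leq C(j-i)^{2r-1}$ with a superadditive right-hand side yields $\bbE[M_k^2] \lesssim (k+1)^2\, 2^{k(2r-1)}$ (and, as you remark, the logarithmic factor is unnecessary when $r > 1$ by Serfling's sharper inequality, though this refinement is irrelevant since $2^{-k}$ dominates). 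The final sandwich $|S_n|/n^r \leq |S_{2^k}|/2^{kr} + M_k/2^{kr}$ for $2^k \leq n < 2^{k+1}$ is clean.

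In short: nothing to correct, and nothing to compare---you have gone beyond the paper by actually sketching a proof of a result the authors are content to cite.
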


\begin{corollary}
\label{c:max_conv}
    Let $\{Z_n: n \geq 1\}$ be a sequence of random variables with finite variances. Define $S_n = \sum_{i=1}^nZ_i$, $n \geq 1$ with $S_0 = 0$. If
    \begin{enumerate}
        \item[(i)] $\text{Var}(S_n - S_m) ~\leq~ C(n-m)$ holds for all $n>m\geq0$\,, 
        \item[(ii)] $\lim_{n \to \infty} \ n^{-1}\ \bbE(S_n) = \mu$\,, 
    \end{enumerate}  
    where $C$ and $\mu$ are constants, then as $n \to \infty$,
    $$\frac{1}{n}\,\max_{1\leq i \leq n} |Z_i| ~\as~ 0\,.$$
\end{corollary}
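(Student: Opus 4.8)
The plan is to first strengthen the hypotheses into an ordinary strong law for the partial sums, then read off the behaviour of the individual increments $Z_i=S_i-S_{i-1}$, and finally upgrade pointwise control of $Z_n/n$ to uniform control of the running maximum by an elementary deterministic argument. For the first step, note that hypothesis~(i) is exactly the bound $\Var(S_n-S_m)\le C(n-m)^{2r-1}$ with $r=1$, so Theorem~\ref{t:gen_as_conv} applies and gives $(S_n-\bbE S_n)/n\to 0$ a.s.; combining this with hypothesis~(ii), $\bbE S_n/n\to\mu$, we obtain $S_n/n\to\mu$ a.s.

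Next I would deduce that the increments are $o(n)$. Writing $Z_n=S_n-S_{n-1}$ and using $(n-1)/n\to 1$,
$$\frac{Z_n}{n}=\frac{S_n}{n}-\frac{n-1}{n}\cdot\frac{S_{n-1}}{n-1}\;\as\;\mu-\mu=0 ,$$
so $Z_n/n\to 0$ a.s. To pass from this to the maximum, I would isolate the purely deterministic fact that if a real sequence $(c_n)$ satisfies $c_n\to 0$, then $n^{-1}\max_{1\le i\le n} i\,|c_i|\to 0$: applied with $c_i=Z_i/i$ on the probability-one event where $Z_n/n\to 0$, this gives $n^{-1}\max_{1\le i\le n}|Z_i|\to 0$ a.s., which is the claim. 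The deterministic fact itself follows by a ``small tail, bounded head'' split: given $\delta>0$, pick $N$ with $|c_i|<\delta$ for all $i>N$; then $n^{-1}\max_{i\le N} i|c_i|\le n^{-1}N\max_{i\le N}|c_i|\to 0$ and $n^{-1}\max_{N<i\le n} i|c_i|\le\delta$, so $\limsup_n n^{-1}\max_{i\le n} i|c_i|\le\delta$, and letting $\delta\downarrow 0$ finishes it.

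The main obstacle is really just this last passage: the convergence $Z_n/n\to 0$ does not by itself bound the running maximum $\max_{i\le n}|Z_i|/n$, so one must interleave the splitting above rather than arguing termwise. Everything else is an immediate consequence of Theorem~\ref{t:gen_as_conv}, and the only bookkeeping point is that the three almost-sure statements used along the way should be intersected onto a single event of probability one, on which the deterministic argument runs verbatim.
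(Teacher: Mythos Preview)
Your proposal is correct and follows essentially the same route as the paper: invoke Theorem~\ref{t:gen_as_conv} with $r=1$ together with hypothesis~(ii) to get $S_n/n\to\mu$ a.s., deduce $Z_n/n\to 0$ a.s.\ from the telescoping identity, and then use exactly the same head/tail split (fixed $N$ for the head, $|Z_i|/i<\delta$ for the tail) to pass to the running maximum. The only difference is cosmetic---you package the last step as a standalone deterministic lemma about sequences $c_n\to 0$, whereas the paper carries out the same $\epsilon/2$ argument directly on $|Z_i(\omega)|$.
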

\begin{proof}
By Theorem \ref{t:gen_as_conv}, it follows that ${S_n}/{n} ~\as~ \mu\,.$ Using this 
and noting that
$$
\frac{S_n}{n} - \frac{S_{n-1}}{n-1} ~=~ \left( \frac{n-1}{n} - 1 \right)\frac{S_{n-1}}{n-1} + \frac{Z_n}{n}\,,
$$
we have,
\begin{equation}
\label{eq:Z_n}
    \frac{Z_n}{n} \as 0\,.
\end{equation}
Thus for a point $\omega$ where the convergence in \eqref{eq:Z_n} holds, there exists $N_1 = N_1(\omega, \epsilon) \in \bbN$ for a given $\epsilon >0$ such that
\begin{equation*}
    \frac{|Z_n(\omega)|}{n}  < \frac{\epsilon}{2}, \quad \text{for all } N >N_1.
\end{equation*}
For $n>N_1$,
\begin{align*}
    \frac{1}{n} \max_{1\leq i \leq n} \{ |Z_i(\omega)|\} 
    & ~ = ~ \frac{1}{n} \max_{1\leq i \leq N_1} \{|Z_i(\omega)|\} \ \vee \ \frac{1}{n} \max_{1\leq i \leq n} \{|Z_i(\omega)||\} \\
    & ~\leq~ \frac{1}{n} \max_{1\leq i \leq N_1} \{|Z_i(\omega)|\} \ + \ \max_{1\leq i \leq n} \left\{ \frac{|Z_i(\omega)|}{i} \right\} \\
    & ~\leq~ \frac{1}{n} \max_{1\leq i \leq N_1} \{|Z_i(\omega)|\} \ + \ \frac{\epsilon}{2}  
\end{align*}
Note that $\max_{1\leq i \leq N_1} \{|Z_i(\omega)|\}$ does not depend on $n$. Thus, there exists $N_2 = N_2(\omega, \epsilon)$ such that
$$
\frac{1}{n} \max_{1\leq i \leq N_1} \{|Z_i(\omega)|\} < \frac{\epsilon}{2}, \quad \text{for all } N > N_2\,.
$$
For $n > \max\{N_1, N_2\}$,
$$
\frac{1}{n} \max_{1\leq i \leq n} \{ |Z_i(\omega)|\}   ~<~ \frac{\epsilon}{2} + \frac{\epsilon}{2} ~=~\epsilon\,,
$$
which holds for almost every $\omega$ and hence, completes the proof. 
\end{proof}

\begin{theorem}[Generalization of Theorem 13, \cite{newton1991thesis}]
\label{t:wtd_prob_conv}
Let $\{b_n : n\geq 1\}$ be a sequence of positive real numbers and let $B_n = \sum_{i=1}^n b_i$. Consider a triangular array of random variables $\{Y_{n,i}: 1 \leq i \leq n, n \geq 1\}$, where for each $n$, $Y_{n,1}, Y_{n,2}, \ldots, Y_{n,n}$ are independent, and for $i = 1, 2, \ldots,n$, $\bbE(Y_{n,i}) = {nb_i}/{B_n}$ and $\text{Var}\,(Y_{n,i}) = ({nb_i}/{B_n})^2$.
If $\{a_{n,i}: 1 \leq i \leq n, n \geq 1\}$ forms a triangular array of real numbers satisfying 
\begin{align*}
    \frac{1}{B_n} \sum_{i=1}^n b_i |a_{n,i}| &\to c ~ (< \infty)\,, &
    \frac{1}{B_n} \max_{1 \leq i \leq n} b_i |a_{n,i}| &\to 0\,,
\end{align*}
as $n \to \infty$, then,
$$
\frac{1}{n} \sum_{i=1}^n {a_{n,i}\,Y_{n,i}} \prob \alpha\,,
$$
where $\alpha = \lim_{n \to \infty } \sum_{i=1}^n b_i \, a_{n,i}\big/B_n$.
\end{theorem}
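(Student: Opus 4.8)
The plan is a short second-moment (Chebyshev) argument, since only convergence in probability is asserted. Abbreviate $\tilde{w}_{n,i} = n b_i / B_n$, so that $\bbE(Y_{n,i}) = \tilde{w}_{n,i}$, $\Var(Y_{n,i}) = \tilde{w}_{n,i}^2$, and $\sum_{i=1}^n \tilde{w}_{n,i} = n$; set $T_n = n^{-1}\sum_{i=1}^n a_{n,i} Y_{n,i}$. I would prove $\bbE(T_n)\to\alpha$ and $\Var(T_n)\to 0$, and then conclude via Chebyshev's inequality. (Note that the partial means $B_n^{-1}\sum_i b_i a_{n,i}$ are bounded by $B_n^{-1}\sum_i b_i|a_{n,i}|\to c$ but need not converge on their own, so the existence of $\alpha$ is part of the hypothesis.)

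For the mean, linearity of expectation gives
\[
\bbE(T_n) ~=~ \frac{1}{n}\sum_{i=1}^n a_{n,i}\,\tilde{w}_{n,i} ~=~ \frac{1}{B_n}\sum_{i=1}^n b_i\, a_{n,i} ~\to~ \alpha,
\]
which is exactly the defining limit of $\alpha$. For the variance, the independence of $Y_{n,1},\dots,Y_{n,n}$ within each row gives
\[
\Var(T_n) ~=~ \frac{1}{n^2}\sum_{i=1}^n a_{n,i}^2\,\Var(Y_{n,i}) ~=~ \frac{1}{n^2}\sum_{i=1}^n a_{n,i}^2\,\tilde{w}_{n,i}^2 ~=~ \frac{1}{B_n^2}\sum_{i=1}^n b_i^2\, a_{n,i}^2 .
\]
The one computational point of the argument is the bound
\[
\frac{1}{B_n^2}\sum_{i=1}^n b_i^2\,a_{n,i}^2 ~=~ \frac{1}{B_n^2}\sum_{i=1}^n \bigl(b_i\,|a_{n,i}|\bigr)^2 ~\leq~ \Bigl(\frac{1}{B_n}\max_{1\leq i\leq n} b_i\,|a_{n,i}|\Bigr)\Bigl(\frac{1}{B_n}\sum_{i=1}^n b_i\,|a_{n,i}|\Bigr),
\]
whose right-hand side tends to $0\cdot c = 0$ by the two stated conditions on $\{a_{n,i}\}$; hence $\Var(T_n)\to 0$.

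Finally, Chebyshev's inequality gives $P\bigl(|T_n - \bbE(T_n)| > \epsilon\bigr) \leq \Var(T_n)/\epsilon^2 \to 0$ for every $\epsilon > 0$, so $T_n - \bbE(T_n)\prob 0$, and combining with $\bbE(T_n)\to\alpha$ yields $T_n \prob \alpha$. I do not expect a genuine obstacle: this is essentially Theorem~13 of \cite{newton1991thesis}, the only generalization being that $\bbE(Y_{n,i})$ and $\Var(Y_{n,i})$ carry the design weights $b_i$ rather than being uniform. The structural feature being exploited is $\Var(Y_{n,i}) = \bbE(Y_{n,i})^2 = \tilde{w}_{n,i}^2$, which is precisely what makes the variance $B_n^{-2}\sum_i b_i^2 a_{n,i}^2$ factor through $\max_i b_i|a_{n,i}|$ and vanish — the very mean/variance pairing that the Gamma$(1,\tilde{w}_{n,i})$ resampling law in S-WLB was designed to supply.
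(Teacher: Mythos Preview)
Your proposal is correct and essentially the same as the paper's proof: both use a Chebyshev/second-moment bound, compute the variance as $B_n^{-2}\sum_i b_i^2 a_{n,i}^2$, and control it via the same $\bigl(\max_i b_i|a_{n,i}|/B_n\bigr)\bigl(\sum_i b_i|a_{n,i}|/B_n\bigr)$ factorization. The only cosmetic difference is that the paper bounds $P(|T_n-\alpha|>\epsilon)$ in one step by expanding the second moment about $\alpha$ (yielding variance plus squared bias), whereas you separate $T_n-\bbE(T_n)\prob 0$ from $\bbE(T_n)\to\alpha$.
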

\begin{proof}
Let $\epsilon > 0$ be given, and let $\Bar{a}_n = \sum_{i=1}^n b_i \, a_{n,i}\big/B_n$. Using Markov's inequality, we have
\begin{align*}
    P \bigg( \,\bigg| \frac{1}{n} \sum_{i=1}^n a_{n,i} Y_{n,i} - \alpha \bigg| > \epsilon  \, \bigg)
    &  ~\leq~ \frac{1}{\epsilon^2}\, \bbE \bigg[\, \frac{1}{n} \sum_{i=1}^n a_{n,i} \bigg(Y_{n,i} - \frac{nb_i}{B_n}\bigg) + \Bar{a}_n - \alpha \, \bigg]^2 \\
    & ~\leq~ \frac{1}{n^2 \epsilon^2} \sum_{i=1}^n a_{n,i}^2 \text{Var}\,(Y_{n,i}) ~+~ \frac{(\Bar{a}_n - \alpha)^2}{\epsilon^2} \\
    & ~\leq~ \frac{1}{\epsilon^2\,B_n^2} \sum_{i=1}^n {b_i^2}\,a_{n,i}^2  ~+~ \frac{(\Bar{a}_n - \alpha)^2}{\epsilon^2}
\end{align*}
The proof follows by noting that $\Bar{a}_n \to \alpha$ as $n \to \infty$, and 
$$
\frac{1}{B_n^2} \sum_{i=1}^n {b_i^2}\,a_{n,i}^2 ~\leq~ \left( \frac{1}{B_n} \max_{1 \leq i \leq n} b_i |a_{n,i}|\right) \left(\frac{1}{B_n} \sum_{i=1}^n b_i |a_{n,i}|\right) \to 0\,,
$$
by the assumptions of the theorem. 
\end{proof}

\begin{lemma}[Lemma 15, \cite{newton1991thesis}]
\label{l:mat_conv}
Let $\{A_n\}$ be a sequence of matrices such that for all $i, j$, the sequence of $(A_n)_{ij}$ converges to a number $A_{ij}$; the matrix of such numbers denoted by A, then as $n \to \infty$,
$\|A_n - A\| ~\to~ 0$\,.    
\end{lemma}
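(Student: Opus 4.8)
The plan is to reduce the claim to the elementary fact that a finite maximum of null sequences is itself null, exploiting the finite dimensionality of the matrices involved. All the $A_n$ and $A$ lie in the fixed space $\bbR^{p\times q}$ (with $p=q=K$ in our applications), and on a finite-dimensional space every norm is equivalent to the entrywise maximum norm $\|M\|_{\max}:=\max_{i,j}|M_{ij}|$; that is, there is a constant $C=C(p,q)>0$, independent of $n$, with $\|M\|\le C\,\|M\|_{\max}$ for every matrix $M$.

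First I would apply this bound with $M=A_n-A$, giving $\|A_n-A\|\le C\max_{i,j}\big|(A_n)_{ij}-A_{ij}\big|$. The right-hand side is a maximum over the fixed finite index set $\{(i,j):1\le i\le p,\ 1\le j\le q\}$ of quantities each converging to $0$ by hypothesis, hence it converges to $0$, and therefore $\|A_n-A\|\to0$, which is the assertion. If one prefers to avoid invoking norm equivalence abstractly, I would argue directly for the Frobenius norm, writing $\|A_n-A\|_F^2=\sum_{i,j}\big((A_n)_{ij}-A_{ij}\big)^2$ as a finite sum of terms each tending to $0$, and then pass to the spectral/operator norm via $\|M\|_{\mathrm{op}}\le\|M\|_F$; any other norm is handled the same way.

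There is no genuine obstacle here: the lemma is an immediate consequence of the equivalence of norms in finite dimensions. The only point worth flagging is that the index set $\{(i,j)\}$ must be finite and independent of $n$, so that the interchange of ``$\max$'' with ``$\lim$'' is legitimate; this is precisely where finite dimensionality is used, and the statement would fail for operators on an infinite-dimensional space. The lemma is recorded only because the preceding results — for example, applying Lemma~\ref{l:data-wts} coordinate by coordinate — naturally deliver entrywise convergence, whereas the theorems on conditional consistency and conditional asymptotic normality are phrased in terms of the norm $\|\cdot\|$.
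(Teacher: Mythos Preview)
Your argument is correct and is the standard one: on a fixed finite-dimensional matrix space all norms are equivalent, so entrywise convergence (which is convergence in $\|\cdot\|_{\max}$) implies convergence in any norm $\|\cdot\|$. The paper does not actually supply a proof of this lemma; it simply records it as an auxiliary result cited from \cite{newton1991thesis}, so there is nothing to compare against beyond noting that your reasoning is exactly what underlies the cited fact.
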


\begin{theorem}[Inverse function theorem, \cite{rudin1964principles}]
\label{t:inverse_func}
Suppose $f$ is a mapping from an open set $\Theta \subset \bbR^r$ into $\bbR^r$, $r \geq 1$, that the partial derivatives of $f$ exist and are continuous on $\Theta$, and that the matrix of derivatives $f^\prime(\theta^*)$ has an inverse $f^\prime(\theta^*)^{-1}$ at some point $\theta^* \in \Theta$.
Define 
$
\lambda = (4 \| f^\prime(\theta^*)^{-1} \| )^{-1}\,,
$
and the ball $U_\delta \subset \Theta$ of $\theta^*$ of radius $\delta > 0$ sufficiently small so that 
$$
\| f^\prime(\theta) - f^\prime(\theta^*) \| ~<~ 2\lambda \quad \text{whenever} \quad \theta \in U_\delta\,.
$$
Then for every $\theta_1$ and $\theta_2$ in $U_\delta$,
\[
|f(\theta_1) - f(\theta_2)| ~\geq~ 2 \,\lambda \, |\theta_1 - \theta_2|
\]
and the image set $f(U_\delta)$ contains the open ball with radius $\lambda \delta$ about $f(\theta^*)$.
\end{theorem}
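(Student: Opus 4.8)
The plan is to derive both conclusions from a single auxiliary contraction map, following the classical Banach fixed-point route to the inverse function theorem. Write $A = f^\prime(\theta^*)$ and, for a target value $y \in \bbR^r$, define $\varphi_y(\theta) = \theta + A^{-1}\big(y - f(\theta)\big)$ on $U_\delta$. The point of this construction is that $\theta$ is a fixed point of $\varphi_y$ if and only if $f(\theta) = y$, so solving $f(\theta) = y$ reduces to locating a fixed point, while injectivity will follow from a contraction estimate that holds for every choice of $y$.

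First I would show that $\varphi_y$ is a contraction with constant $1/2$ on the convex set $U_\delta$. Its derivative is $\varphi_y^\prime(\theta) = I - A^{-1} f^\prime(\theta) = A^{-1}\big(A - f^\prime(\theta)\big)$, so submultiplicativity of the operator norm together with the hypothesis $\| f^\prime(\theta) - A \| < 2\lambda$ and the identity $\| A^{-1}\| = 1/(4\lambda)$ (immediate from $\lambda = (4\|A^{-1}\|)^{-1}$) gives $\| \varphi_y^\prime(\theta)\| \leq \| A^{-1}\|\,\| A - f^\prime(\theta)\| < \tfrac12$ throughout $U_\delta$. Since the partial derivatives of $f$ are continuous and $U_\delta$ is convex, the mean value inequality upgrades this pointwise bound to the Lipschitz estimate $|\varphi_y(\theta_1) - \varphi_y(\theta_2)| \leq \tfrac12 |\theta_1 - \theta_2|$ for all $\theta_1, \theta_2 \in U_\delta$.

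The injectivity bound then drops out directly, with the value of $y$ cancelling. Applying the contraction estimate yields $\big| (\theta_1 - \theta_2) - A^{-1}(f(\theta_1) - f(\theta_2)) \big| \leq \tfrac12 |\theta_1 - \theta_2|$, and a reverse-triangle-inequality rearrangement gives $|A^{-1}(f(\theta_1) - f(\theta_2))| \geq \tfrac12 |\theta_1 - \theta_2|$. Bounding the left-hand side by $\| A^{-1}\|\,|f(\theta_1) - f(\theta_2)|$ and substituting $\| A^{-1}\| = 1/(4\lambda)$ produces exactly $|f(\theta_1) - f(\theta_2)| \geq 2\lambda\,|\theta_1 - \theta_2|$. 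For the openness statement I would fix $y$ with $|y - f(\theta^*)| < \lambda\delta$ and verify that $\varphi_y$ maps the closed ball $K = \overline{B}(\theta^*, 3\delta/4) \subset U_\delta$ into itself, so that Banach's theorem applies on this complete space. Splitting $|\varphi_y(\theta) - \theta^*| \leq |\varphi_y(\theta) - \varphi_y(\theta^*)| + |\varphi_y(\theta^*) - \theta^*|$, the first term is at most $\tfrac12 |\theta - \theta^*| \leq 3\delta/8$ by the contraction bound, while the second equals $|A^{-1}(y - f(\theta^*))| \leq \| A^{-1}\|\,|y - f(\theta^*)| < \tfrac{1}{4\lambda}\cdot \lambda\delta = \delta/4$; the sum is below $3\delta/4$, so $K$ is invariant. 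The unique fixed point supplied by Banach's theorem lies in $K \subset U_\delta$ and satisfies $f(\theta) = y$, giving $B(f(\theta^*), \lambda\delta) \subset f(U_\delta)$.

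I expect the only delicate point to be the bookkeeping that makes the three prescribed constants — the factor $4$ in $\lambda$, the tolerance $2\lambda$ on $\| f^\prime(\theta) - A\|$, and the radius $\lambda\delta$ of the target ball — conspire to yield both the clean contraction constant $1/2$ and a self-map on a ball strictly interior to $U_\delta$. The conceptual content (contraction plus fixed point) is routine; the care lies entirely in tracking these constants so that the final inequalities emerge in precisely the stated form.
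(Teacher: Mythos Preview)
Your proof is correct and follows the classical contraction-mapping route to the inverse function theorem. The paper itself does not supply a proof of this statement at all: it is listed among the auxiliary results purely as a citation from \cite{rudin1964principles}, stated for reference and used as a black box in the proof of Theorem~\ref{t:cond_consistency}. Your argument is essentially Rudin's own proof, so there is nothing to compare against; the constants you track (the factor $4$ in $\lambda$, the tolerance $2\lambda$, the radius $\lambda\delta$) match the statement exactly, and the choice of the interior closed ball of radius $3\delta/4$ gives the needed self-map with room to spare.
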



\begin{theorem}[Generalization of Theorem 15, \cite{newton1991thesis}]
\label{t:clt}
    Consider a triangular array of random variables $\{Z_{n,i}:1 \leq i \leq n, n \geq 1\}$ such that for each $n$, $Z_{n,i}$ is independently distributed with mean $\mu_{n,i}$ and  variance $\sigma^2_{n,i}$. Let $(a_{n,i})$ be a non-vanishing sequence of constants satisfying
    $$
    \frac{\sum_{i=1}^n  \sigma_{n,i}^2\, a_{n,i}^2} {\max_{1\leq i \leq n} \sigma_{n,i}^2 \, a_{n,i}^2} ~ \to ~ \infty\,
    $$
    as $n \to \infty$. Let $T_n = \sum_{i=1}^n a_{n,i}\, Z_{n,i}$, $\mu_n = \sum_{i=1}^n \mu_{n,i}\, a_{n,i}$, and $\sigma_n^2 = \sum_{i=1}^n \sigma_{n,i}^2\, a_{n,i}^2$. Then, as $n \to \infty$,
    $$
    \frac{T_n - \mu_n}{\sigma_n} ~ \xrightarrow[]{d} ~ \cN(0,1)\,.
    $$
\end{theorem}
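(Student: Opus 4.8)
The plan is to deduce the statement from the classical Lyapunov central limit theorem for triangular arrays with independent rows. First I would recenter: set $W_{n,i} = a_{n,i}\,(Z_{n,i} - \mu_{n,i})$, so that for each fixed $n$ the variables $W_{n,1},\dots,W_{n,n}$ are independent with $\bbE(W_{n,i}) = 0$, $\Var(W_{n,i}) = a_{n,i}^2\,\sigma_{n,i}^2 =: \tau_{n,i}^2$, and $\sum_{i=1}^n \tau_{n,i}^2 = \sigma_n^2$ (which is positive, since the displayed ratio is meaningful only when $\max_i \sigma_{n,i}^2 a_{n,i}^2 > 0$). Because $T_n - \mu_n = \sum_{i=1}^n W_{n,i}$, the claim reduces to showing $\sigma_n^{-1}\sum_{i=1}^n W_{n,i} \xrightarrow{d} \cN(0,1)$.

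Next I would note that the displayed hypothesis is exactly the asymptotic negligibility (Feller) condition $\max_{1\leq i\leq n}\tau_{n,i}^2/\sigma_n^2 \to 0$, and that to pass from negligibility to asymptotic normality it suffices to verify Lyapunov's condition of order $2+\delta$ with $\delta = 1$, that is, $\sigma_n^{-3}\sum_{i=1}^n \bbE|W_{n,i}|^3 \to 0$. Granting a uniform bound $\rho := \sup_{n,i}\bbE|Z_{n,i}-\mu_{n,i}|^3\big/\sigma_{n,i}^3 < \infty$ on the normalized third absolute moments, and using the elementary estimate $\sum_{i=1}^n|a_{n,i}|^3\sigma_{n,i}^3 \leq \big(\max_i |a_{n,i}|\sigma_{n,i}\big)\sum_{i=1}^n a_{n,i}^2\sigma_{n,i}^2 = \big(\max_i |a_{n,i}|\sigma_{n,i}\big)\sigma_n^2$, one obtains
\[
\frac{1}{\sigma_n^{3}}\sum_{i=1}^n \bbE|W_{n,i}|^3
~\leq~ \frac{\rho}{\sigma_n^{3}}\sum_{i=1}^n |a_{n,i}|^3 \sigma_{n,i}^3
~\leq~ \rho\,\frac{\max_{1\leq i\leq n}|a_{n,i}|\,\sigma_{n,i}}{\sigma_n}
~=~ \rho\Big(\frac{\max_{1\leq i\leq n}\tau_{n,i}^2}{\sigma_n^2}\Big)^{1/2}
~\to~ 0,
\]
the last convergence being the displayed hypothesis once more. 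The Lyapunov central limit theorem for triangular arrays then delivers $(T_n - \mu_n)/\sigma_n \xrightarrow{d} \cN(0,1)$, which is the assertion.

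The step I expect to carry the real content is the uniform moment bound invoked above: controlling only the variances $\sigma_{n,i}^2$ does not by itself force a Gaussian limit, so some higher-order integrability must be brought in. In the S-WLB application this is automatic, since the relevant array is $Z_{n,i} = Y_{n,i} \sim \mathrm{Gamma}(\mathrm{shape}=1,\,\mathrm{scale}=\tilde{w}_{n,i})$, whence $Z_{n,i} - \mu_{n,i} = \tilde{w}_{n,i}(\mathcal{E}_i - 1)$ with $\mathcal{E}_i$ a standard exponential, so that $\bbE|Z_{n,i}-\mu_{n,i}|^3 = \tilde{w}_{n,i}^3\,\bbE|\mathcal{E}_1 - 1|^3 = C\,\sigma_{n,i}^3$ for the finite universal constant $C = \bbE|\mathcal{E}_1-1|^3$, so $\rho \leq C < \infty$. (Alternatively, for this exponential array one could give a direct characteristic-function argument exploiting the closed form of the exponential characteristic function, in the spirit of \cite{newton1991thesis}.) No empirical-process or dependence machinery enters here, since independence across $i$ is part of the hypothesis; survey-design dependence is handled only upstream, in the lemmas that feed \eqref{eq:var_AN}--\eqref{eq:var_AN_2}, which are precisely what verify this lemma's hypothesis when it is invoked in the proof of Theorem~\ref{t:cond_AN}.
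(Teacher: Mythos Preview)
Your approach coincides with the paper's: its entire proof is the single line ``Follows from Lindeberg--Feller Central Limit Theorem by setting $Y_{n,i}=a_{n,i}Z_{n,i}$.'' Your write-up is a faithful expansion of that sentence, and you are right to flag that the displayed hypothesis is only the Feller negligibility condition, which by itself does not force the Lindeberg condition; some higher-moment control (your $\rho<\infty$) is implicitly being assumed. The paper does not state or verify this, but, as you observe, in the only place the result is invoked the array consists of scaled unit exponentials, so the normalized third absolute moments are a common finite constant and Lyapunov goes through exactly as you wrote. In short: same route as the paper, but your version is more honest about the missing hypothesis.
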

\begin{proof}
Follows from Lindeberg-Feller Central Limit Theorem by setting $Y_{n,i} = a_{n,i}Z_{n,i}$.
\end{proof}

\section{Additional Figures}
In the following, we provide additional plots that illustrate the performance of the algorithms in the simulation settings described in Section \ref{s:sims} of the main document.

Figure \ref{fig:Sim_MSE} shows boxplots of the mean squared error (MSE) of the point estimates obtained from the competing methods under Simulation 1 (mean estimation under a normal model) and Simulation 2 (probit regression model). This figure expands the range of the y-axis of Figures \ref{fig:Sim1_MSE} and \ref{fig:Sim2_MSE} of the main document to demonstrate the estimation bias observed in the unweighted Bayes estimator (UBE) which results from the omission of survey weights. This bias increases with the level of non-representativeness in the sample, controlled by $\rho$.

\begin{figure}[htp]
\centering
\begin{subfigure}[b]{0.85\textwidth}
         \centering
         \includegraphics[width=\textwidth]{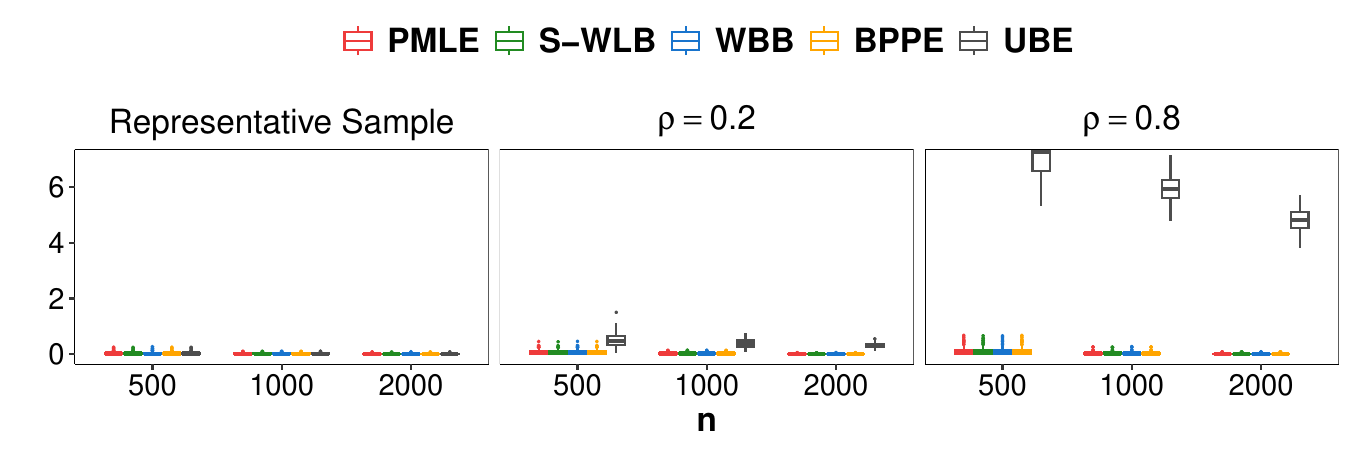}
         \caption{}
         \label{fig:mse_norm}
\end{subfigure}
\begin{subfigure}[b]{0.85\textwidth}
         \centering
         \includegraphics[width=\textwidth]{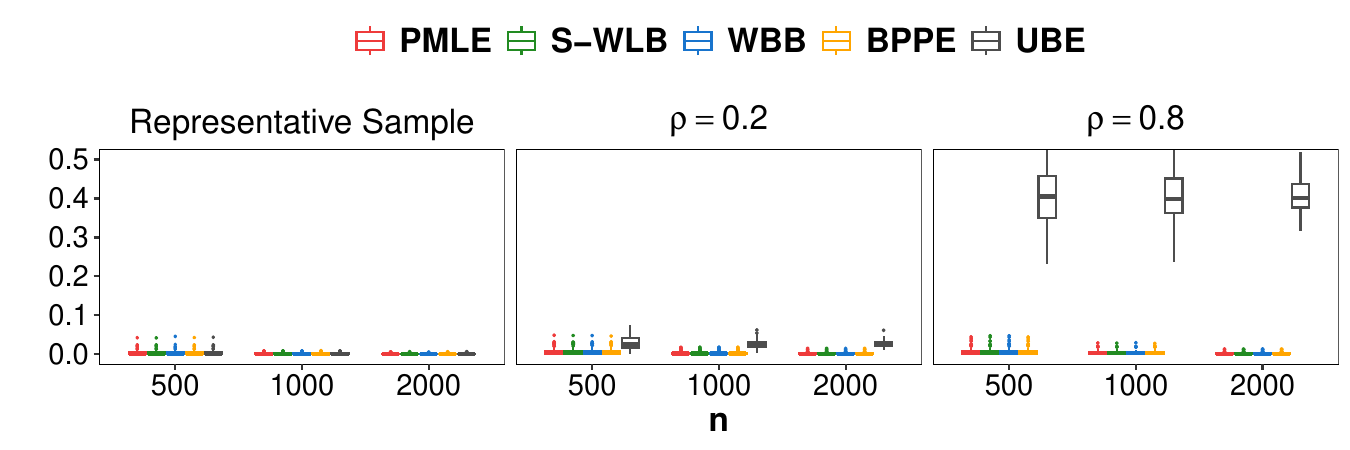}
         \caption{}
         \label{fig:mse_bin}
\end{subfigure}
\caption{Plots comparing the proposed S-WLB with competing methods via the mean squared error of point estimates using 100 simulation replicates for estimating the mean of a Gaussian model corresponding to Simulation 1 (Panel a), and for estimating the regression parameters in a probit model corresponding to Simulation 2 (Panel b).} 
\label{fig:Sim_MSE}
\end{figure}

\begin{figure}[htp]
    \centering
    \includegraphics[width=0.5\linewidth]{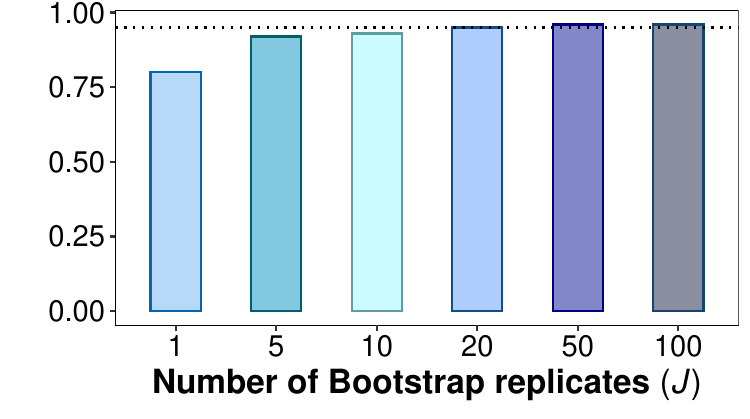}
    \caption{Barplots showing coverage probability of the $95\%$ credible intervals obtained from the WBB algorithm under a probit regression model (Simulation 2), using 100 simulation replicates.}
    \label{fig:WBB}
\end{figure}

Figure \ref{fig:WBB} shows barplots of the coverage probability of the $95\%$ credible intervals obtained from the WBB method with varying number of Bootstrap replicates ($J$) in Algorithm 3 from \cite{gunawan2020bayesian}, under a probit regression model (Simulation 2 in Section \ref{s:sim_2} of the main document). Here, we have considered a sample size of $n = 5000$ from a population of size $N = 500,000$ with $\rho = 0.8$ (low representativeness) and $J \in \{1, 5, 10, 20, 50, 100\}$. The plot suggests that the methods suffers from under-coverage when a small number of bootstrap replicates is used. It is worth noting that the coverage in Figure \ref{fig:WBB} is observed under a simple probit regression model, and for more complex models, a larger choice $J$ may be required to achieve adequate coverage. More importantly, there is no clear guidance on how to choose an appropriate value for $J$, adding to the challenges of applying the WBB method in practice.

\end{document}